\newcommand{\wt} {$\text{wt}$}
\newcommand{\spa}{\textrm{span}}
\begin{document}
%
% paper title
% can use linebreaks \\ within to get better formatting as desired
\title{On $2k$-Variable Symmetric Boolean Functions
with Maximum Algebraic Immunity $k$}
%
%
% author names and IEEE memberships
% note positions of commas and nonbreaking spaces ( ~ ) LaTeX will not break
% a structure at a ~ so this keeps an author's name from being broken across
% two lines.
% use \thanks{} to gain access to the first footnote area
% a separate \thanks must be used for each paragraph as LaTeX2e's \thanks
% was not built to handle multiple paragraphs
%

\author{Hui~Wang,
        Jie~Peng,
        Yuan~Li,
        and~Haibin Kan% <-this % stops a space
\thanks{H. Wang is with Shanghai Key Lab of Intelligent Information Processing, School of Computer Science, Fudan University, Shanghai 200433, P. R. China
 (e-mail: 1011024022@fudan.edu.cn).}% <-this % stops a space
\thanks{J. Peng is with School of Mathematics and Statistics, Huazhong Normal University, Wuhan 430079, P. R. China
 (e-mail: jiepeng@mail.ccnu.edu.cn).}% <-this % stops a space
\thanks{Y. Li and H. Kan are with Shanghai Key Lab of Intelligent Information Processing, School of Computer Science, Fudan University, Shanghai 200433, P. R. China
 (e-mails: 07300720173@fudan.edu.cn; hbkan@fudan.edu.cn).}% <-this % stops a space
%\thanks{Manuscript received April 19, 2005; revised January 11, 2007.}
}

\maketitle

\begin{abstract}
%\boldmath
Algebraic immunity of Boolean function $f$ is defined as the minimal
degree of a nonzero $g$ such that $fg=0$ or $(f+1)g=0$. Given a
positive even integer $n$, it is found that the weight distribution
of any $n$-variable symmetric Boolean function with maximum
algebraic immunity $\frac{n}{2}$ is determined by the binary
expansion of $n$. Based on the foregoing, all $n$-variable symmetric
Boolean functions with maximum algebraic immunity are constructed.
The amount is $(2\wt(n)+1)2^{\lfloor \log_2 n \rfloor}$.
\end{abstract}
% IEEEtran.cls defaults to using nonbold math in the Abstract.
% This preserves the distinction between vectors and scalars. However,
% if the journal you are submitting to favors bold math in the abstract,
% then you can use LaTeX's standard command \boldmath at the very start
% of the abstract to achieve this. Many IEEE journals frown on math
% in the abstract anyway.

% Note that keywords are not normally used for peerreview papers.
\begin{IEEEkeywords}
Algebraic attack, algebraic immunity, symmetric Boolean function.
\end{IEEEkeywords}

% For peer review papers, you can put extra information on the cover
% page as needed:
% \ifCLASSOPTIONpeerreview
% \begin{center} \bfseries EDICS Category: 3-BBND \end{center}
% \fi
%
% For peerreview papers, this IEEEtran command inserts a page break and
% creates the second title. It will be ignored for other modes.
\IEEEpeerreviewmaketitle

\section{Introduction}
% no \IEEEPARstart
\IEEEPARstart{A}{lgebraic} attacks have received much attention in
cryptographic analyzing stream and block cipher systems
\cite{:Armknecht}, \cite{1:Courtois}, \cite{:S.Ronjom}, which try to
recover the secret key by solving overdefined systems of
multivariate equations. Therefore, algebraic immunity ($\text{AI}$),
a new cryptographic property for designing Boolean functions, was
proposed by W. Meier {\sl et al}. $\cite{:Meier }$. Algebraic
immunity of the Boolean function used in a cryptosystem should be
high enough to resist algebraic attacks. The upper bound of the
algebraic immunity of an $n$-variable Boolean function is $\lceil
\frac{n}{2} \rceil$ $\cite{1:Courtois, :Meier }$. Several
theoretical constructions of Boolean functions with optimal
$\text{AI}$ have been presented in the literature $\cite{:Carlet}$,
$\cite{1:Dalai}$, \cite{5:keqing Feng}, \cite{1:Qu}.

Symmetric Boolean functions are of great interest from a
cryptographic point of view. An $n$-variable symmetric Boolean
function can be identified by an $(n+1)$-bit vector, so symmetric
Boolean functions have smaller hardware size than average Boolean
functions. They allow the computation of values for functions with
more variables than general ones. For this reason, symmetric Boolean
functions have been paid particular attention.

For an odd integer $n$, Dalai {\sl et al}. showed that a Boolean
function with maximum $\text{AI}$ should be balanced \cite{1:Dalai}.
In \cite{:Lina}, it was proved that the majority function
$\text{Maj}_n$ and its complement $\text{Maj}_n+1$ are the only two
trivially balanced symmetric Boolean functions with maximum
$\text{AI}$. It also has been proven that the number of symmetric
Boolean functions with maximum $\text{AI}$ is exactly two
\cite{A:Qu}.

For the case where $n$ is even, the situation becomes very
complicated. A few classes of even-variable symmetric Boolean
functions with maximum $\text{AI}$ have been constructed in
\cite{1:Qu}, \cite{2:Braeken}. However, only the number and form of
$2^m$-variable symmetric Boolean functions with maximum algebraic
immunity have been solved by introducing the weight support
technique \cite{5:keqing Feng}. This method has also been used to
determine the number of $(2^m+1)$-variable symmetric Boolean
functions with submaximum algebraic immunity $2^{m-1}$ \cite{:Liao}.

In this paper, we first study the weight distribution of those
$n$-variable symmetric Boolean functions achieving maximum algebraic
immunity with $n$ even. We find that the set $N=\{0,$ $1,$ $\ldots,$
$n\}$ can be divided into some particular subsets according to the
binary expansion of $n$, on which the Boolean functions should be
constant. Meanwhile, the values of the functions on these subsets
should satisfy some strict conditions. Furthermore, we continue to
prove that all the symmetric Boolean functions constructed following
the above laws indeed achieve maximum algebraic immunity. Thus, we
construct all the even-variable symmetric Boolean functions with
maximum algebraic immunity. The number of these functions and their
corresponding hamming weights are also obtained.

The organization of the paper is as follows. In the following
section, we present some basic notations and knowledge about Boolean
functions. In section 3, we obtain some necessary conditions for an
even-variable symmetric Boolean function to reach maximum algebraic
immunity. In the next two sections, we prove that these conditions
are sufficient. The main theorem of this paper is given in section
6. Section 7 concludes the paper.
%\hfill mds

%\hfill January 11, 2007

%\subsection{Subsection Heading Here}
%Subsection text here.

% needed in second column of first page if using \IEEEpubid
%\IEEEpubidadjcol

%\subsubsection{Subsubsection Heading Here}
%Subsubsection text here.

\section{Preliminaries}
\vspace{0.3cm} Let $\text{F}_{2}^{n}$ be the $n$-dimensional vector
space over the finite field $\text{F}_2$, and $e_0^n$, $e_1^n$,
$\ldots,$ $e_{n-1}^n$ be its normal basis,
$$e_0^n=(1, 0, \ldots, 0), e_1^n=(0, 1, \ldots, 0),..., e_{n-1}^n=(0, 0, \ldots, 1).$$
The superscript $n$ may be omitted if there is no confusion.

An $n$-variable Boolean function is a function from $\text{F}_{2}^n$
into $\text{F}_2$. Let $B_{n}$ be the ring of Boolean functions on
$n$ variables $x_1, x_2, \ldots, x_n$, then
$$\text{B}_n=\text{F}_2[x_{1},\ldots,x_{n}]/(x_{1}^{2}+x_{1},\ldots,x_{n}^{2}+x_{n}),$$
and every $f\in \text{B}_n$ can be uniquely written in the
polynomial form $f=\sum_{I\in F_{2}^n}a_{I}x^I$, where
$x^I=x_{1}^{i_1}x_{2}^{i_2} \cdots x_{n}^{i_n}$, which is called the
algebraic normal form (ANF) of $f$. The algebraic degree of $f$,
denoted by $\deg(f)$, is the degree of the polynomial.

For $f \in \text{B}_{n}$, the algebraic immunity of $f$, denoted by
$\text{AI}(f)$ is defined to be the lowest degree of nonzero
annihilators of $f$ or $f+1$, i.e.,
\[\text{AI}(f)=\min\{\deg(g)~|~g\neq 0, fg=0\ {\rm or}~ (f+1)g=0\}.\]

Two Boolean functions $f$ and $g$ are called to be affine equivalent
if there exist $A\in \text{GL}_n(\text{F}_2)$ and $\varphi\in
\text{F}_2^n$ such that $g(x)=f(xA+\varphi)$. Clearly, algebraic
degree and algebraic immunity are both affine invariant.

Let $\alpha=(\alpha_1, \alpha_2, \ldots, \alpha_n)\in
\text{F}_{2}^n,$ the Hamming weight of $\alpha$, denoted by
$\wt(\alpha)$, is the number of $1$'s in $\{\alpha_1, \alpha_2,
\ldots, \alpha_n\}$. For an integer $i>0$ with 2-adic expansion
$i=\sum_{j=0}^mi_j2^j$, $\wt(i)$ represents the Hamming weight of
its binary expansion $(i_m, \ldots, i_1, i_0)_2$.

Let $\text{supp}(f)=\{x\in \text{F}_{2}^{n}~|~f(x)=1\},$ the
cardinality of $\text{supp}(f)$, denoted by $\wt(f)$, is called the
Hamming weight of $f$. We say that an $n$-variable Boolean function
$f$ is balanced if $\wt(f)=2^{n-1}$. The weight support
\cite{5:keqing Feng} of $f$, denoted by $\text{WS}(f)$, is defined
to be
\begin{equation*}\text{WS}(f)=\{i\mid \exists~x\in \text{supp}(f) ~{\rm
such~ that~} \text{wt}(x)=i\}.
\end{equation*}
We will use $P_{b}$ to represent the polynomial
$$(x_{1}+x_{2})(x_{3}+x_{4})\cdots(x_{2b-1}+x_{2b}).$$
Note that $P_{b}$ is a $(2b)$-variable polynomial with
$\deg(P_{b})=b$ and $\text{WS}(P_{b})=\{b\}$.

A Boolean function $f$ is symmetric if its output is invariant under
any permutation of its input bits, i.e.,
$$f(x_1, x_2, \ldots,
x_n)=f(x_{\sigma(1)}, x_{\sigma(2)}, \ldots, x_{\sigma(n)})$$ for
any permutation $\sigma$ of $\{1, 2, \ldots, n\}$.

Let $\text{SB}_{n}$ be the ring of symmetric Boolean functions on
$n$ variables $x_1, x_2, \ldots, x_n$, then every $f\in \text{SB}_n$
can be represented by a vector
$$v_f = (v_f(0), v_f(1), \ldots, v_f(n))\in \text{F}_2^{n+1},$$ where the
component $v_f(i)$ represents the function value for vectors of
weight $i$. The vector $v_f$ is called the simplified value vector
(SVV) of $f$. If $f\in \text{SB}_n$ and $f'(x_1, \ldots,
x_n)=f(x_1+1, \ldots, x_n+1),$ then $f'\in \text{SB}_n$ is affine
equivalent to $f$, and $v_{f'}(i)=v_f(n-i)$, for any $0\leq i\leq
n.$

On the other hand, the ANF of $f$ can be written as
$$f(x_1,
x_2, \ldots, x_n) = \sum_{i=0}^{n}\lambda_f(i)\sigma_i^n,$$ where
$\sigma_i^n$ is the homogeneous symmetric Boolean function on n
variables which consists of all the terms of degree $i$. The vector
$$\lambda_f = (\lambda(0), \lambda(1), \ldots, \lambda(n))\in
\text{F}_2^{n+1}$$ is called the simplified algebraic normal form
(SANF) vector of $f$. Both $v_f$ and $\lambda_f$ can be regarded as
mappings from $\{0, 1, \ldots, n\}$ to $\text{F}_2$.

\newtheorem{lem}{Lemma}[section]
\newtheorem{theo}{Theorem}[section]
\newtheorem{cor}{Corollary}[section]
\newtheorem{con}{Conjecture}[section]
\newtheorem{defi}{Definition}[section]
\newtheorem{construction}{Construction}[section]
\newtheorem{property}{Property}[section]
%\newtheorem{theorem}{Theorem}
%\newcommand{\qed} {$\hfill \square$}
%\newcommand{\wt} {$\text{wt}$}
%\begin{mur}
%If there are two or more
%ways to do something, and
%one of those ways can result
%in a catastrophe, then
%someone will do it.\end{mur}

Let $a$ and $b$ be two nonnegative integers with 2-adic expansions
\[a = \sum_{j=0}^{m}{a_j2^j},~b = \sum_{j=0}^{m}{b_j2^j}.\] We say $a
\preceq b$ if $a_j \leq b_j$ for any $0\leq j\leq m$ and $a\prec b$
if $a\preceq b$ and $a\neq b$. Using the Lucas formula which states
that ${a \choose b} = 1\in \text{F}_2$ if and only if $b \preceq a$,
we can derive the following two lemmas:\vspace{0.3cm}
\begin{lem}\cite{:Canteaut } Let $f$ be an $n$-variable
symmetric Boolean function. Then $$\lambda_f(i) = \sum_{k\preceq i}
v_f(k), v_f(i) = \sum_{k\preceq i} \lambda_f(k).$$
\end{lem}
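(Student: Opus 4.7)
The plan is to derive both identities directly from the SANF/SVV duality, using the Lucas formula already cited in the excerpt. The starting point is the algebraic normal form
\[
f(x_1,\ldots,x_n)=\sum_{i=0}^{n}\lambda_f(i)\,\sigma_i^n,
\]
together with the fact that for any $x\in\mathrm{F}_2^n$ of Hamming weight $j$, the elementary symmetric polynomial $\sigma_i^n$ evaluates to $\binom{j}{i}\in\mathrm{F}_2$ (one term of $\sigma_i^n$ contributes $1$ for every choice of $i$ of the $j$ coordinates where $x$ is $1$). Thus, fixing any $x$ of weight $j$,
\[
v_f(j)=f(x)=\sum_{i=0}^{n}\lambda_f(i)\binom{j}{i}\pmod 2.
\]

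The next step is to apply the Lucas formula recalled just before the lemma: $\binom{j}{i}\equiv 1\pmod 2$ precisely when $i\preceq j$. Substituting this into the displayed sum kills every term with $i\not\preceq j$, leaving exactly
\[
v_f(j)=\sum_{i\preceq j}\lambda_f(i),
\]
which is the second identity.

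For the first identity I would invert the transform. Because we are in characteristic two and the relation $i\preceq j$ corresponds to containment of the supports of the binary expansions (so the indices $\{0,1,\ldots,n\}$ restricted to $\preceq$-predecessors of a fixed $j$ form the Boolean lattice on $\mathrm{wt}(j)$ elements), the standard M\"obius inversion over the Boolean lattice applies with M\"obius function $\mu(k,i)=(-1)^{\mathrm{wt}(i)-\mathrm{wt}(k)}\equiv 1\pmod 2$. Hence inverting $v_f(i)=\sum_{k\preceq i}\lambda_f(k)$ gives $\lambda_f(i)=\sum_{k\preceq i}v_f(k)$. Alternatively, one can avoid invoking M\"obius inversion by a direct verification: substitute the claimed expression for $\lambda_f(i)$ back into the sum for $v_f(j)$, swap the order of summation, and use that for fixed $k\preceq j$ the number of $i$ with $k\preceq i\preceq j$ equals $2^{\mathrm{wt}(j)-\mathrm{wt}(k)}$, which is even unless $k=j$; modulo $2$ only the term $k=j$ survives, confirming the identity.

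No real obstacle is expected here; the only mildly delicate point is making the Lucas step rigorous for all pairs $(i,j)$ with $0\le i,j\le n$, and observing that the relation $i\preceq j$ automatically forces $i\le n$ whenever $j\le n$, so the sums are well-defined within the index range $\{0,1,\ldots,n\}$.
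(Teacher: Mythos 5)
Your proof is correct and follows exactly the route the paper indicates: the paper gives no proof of its own (the lemma is cited from Canteaut--Videau, with only the remark that it can be derived from the Lucas formula), and your argument is precisely that derivation --- evaluate $\sigma_i^n$ at a weight-$j$ point to get $\binom{j}{i}\bmod 2$, apply Lucas to reduce the sum to $i\preceq j$, and invert the transform over the Boolean lattice, where the count $2^{\mathrm{wt}(j)-\mathrm{wt}(k)}$ being even for $k\neq j$ makes the transform self-inverse in characteristic two. No gaps.
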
\vspace{0.3cm}
\begin{lem}\cite{:Canteaut } For $\ell\geq 1$, suppose $f\in
\text{SB}_n$ and $\deg(f)< 2^\ell$, then $v_f$ has period $2^\ell$,
which means $v_f(i) = v_f(i+2^\ell)$ for $0\leq i \leq n-2^\ell$.
\end{lem}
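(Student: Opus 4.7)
The plan is to leverage Lemma~2.1 together with the observation that adding $2^\ell$ to a nonnegative integer does not change its lowest $\ell$ bits.

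First, I would apply the second identity of Lemma~2.1 to write
$$v_f(i) = \sum_{k\preceq i}\lambda_f(k), \qquad v_f(i+2^\ell) = \sum_{k\preceq i+2^\ell}\lambda_f(k).$$
Since $\deg(f)<2^\ell$ forces $\lambda_f(k)=0$ for every $k\geq 2^\ell$, both sums effectively range only over indices $k<2^\ell$. It therefore suffices to show that, for each such $k$, the relation $k\preceq i$ is equivalent to $k\preceq i+2^\ell$, after which the two sums become term-by-term identical.

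To see this, write $i=2^\ell q+r$ with $0\leq r<2^\ell$; then $i+2^\ell=2^\ell(q+1)+r$, so the binary expansions of $i$ and $i+2^\ell$ agree in bits $0,1,\ldots,\ell-1$ (both equal the binary expansion of $r$). Any $k<2^\ell$ has its support entirely in those low bit positions, so the relation $k\preceq i$ depends only on the low $\ell$ bits of $i$; hence $k\preceq i$ iff $k\preceq r$ iff $k\preceq i+2^\ell$. Substituting this equivalence back into the two sums yields $v_f(i)=v_f(i+2^\ell)$.

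There is essentially no serious obstacle once Lemma~2.1 is in hand: the whole argument is a one-line Lucas-style bit manipulation. The only place that requires any care is the situation in which $i$ already has the $2^\ell$-bit set, where ordinary binary addition produces a carry into higher positions; writing $i+2^\ell=2^\ell(q+1)+r$ sidesteps this case analysis in one stroke, since the lower $\ell$ bits are isolated in the remainder $r$. The hypothesis $i\leq n-2^\ell$ is used solely to guarantee that $i+2^\ell$ lies in the index range $\{0,1,\ldots,n\}$ of the SVV, so that $v_f(i+2^\ell)$ is well defined.
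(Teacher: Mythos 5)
Your proof is correct and follows exactly the route the paper indicates: the paper gives no details, remarking only that the lemma ``can be derived easily from Lemma 2.1,'' and your argument (kill all $\lambda_f(k)$ with $k\geq 2^\ell$ by the degree bound, then note that $k\preceq i$ for $k<2^\ell$ depends only on the low $\ell$ bits of $i$, which are unchanged by adding $2^\ell$) is precisely that easy derivation, with the carry issue handled cleanly via $i=2^\ell q+r$.
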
\vspace{0.3cm}

Lemma 2.2 can be derived easily from Lemma 2.1. \vspace{0.3cm}
\begin{lem}\cite{:D.K.Dalai} Let $n=2k$ and
$G_{n}$ be an $n$-variable symmetric Boolean function. If its
simplified value vector $v_{G_{n}}$ satisfies
\begin{equation*}v_{G_{n}}(i) =\begin{cases}0,\quad &{\rm for~}i\leq k,\\
  1,\quad &{\rm for~}i>k,
 \end{cases}\end{equation*}
then $\text{AI}(G_{n})=k$. Function $G_{n}$ is called the majority
function.
\end{lem}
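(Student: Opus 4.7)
The plan is to establish $\operatorname{AI}(G_n)=k$ by combining the general upper bound $\operatorname{AI}(f)\le\lceil n/2\rceil=k$ with a matching lower bound, that is, by ruling out any nonzero annihilator of degree $<k$ for either $G_n$ or $G_n+1$. From the given $v_{G_n}$ we read off $\operatorname{supp}(G_n)=\{x\in\text{F}_2^n:\text{wt}(x)>k\}$ and $\operatorname{supp}(G_n+1)=\{x\in\text{F}_2^n:\text{wt}(x)\le k\}$, so an annihilator of $G_n$ (resp.\ of $G_n+1$) is precisely a polynomial vanishing on all vectors of weight strictly greater than $k$ (resp.\ of weight at most $k$).

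The key auxiliary fact I would prove first is: if $g\in B_n$ satisfies $\deg(g)\le d$ and $g(x)=0$ for every $x$ with $\text{wt}(x)\le d$, then $g\equiv 0$. This is a M\"obius inversion on $(\text{F}_2^n,\preceq)$: writing $g=\sum_I a_I x^I$ and evaluating at indicator vectors $\mathbf{1}_S$ gives $g(\mathbf{1}_S)=\sum_{I\subseteq S}a_I$, which inverts over $\text{F}_2$ to $a_I=\sum_{J\subseteq I}g(\mathbf{1}_J)$. For $|I|\le d$ every term on the right involves a vector of weight $\le d$ where $g$ vanishes, so $a_I=0$; for $|I|>d$, $a_I=0$ by the degree hypothesis. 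Applied with $d=k-1$, this fact immediately kills any annihilator of $G_n+1$ of degree $\le k-1$, since such a $g$ vanishes on all $x$ with $\text{wt}(x)\le k$, hence a fortiori on $\text{wt}(x)\le k-1$.

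To handle annihilators of $G_n$ itself, I would use the bit-flip substitution $x\mapsto x+(1,\ldots,1)$. This substitution is degree-preserving and sends vectors of weight $i$ to vectors of weight $2k-i$, swapping the regions $\{\text{wt}(x)>k\}$ and $\{\text{wt}(x)<k\}$. Hence if $g$ annihilates $G_n$ with $\deg(g)\le k-1$, then $\tilde g(x):=g(x+(1,\ldots,1))$ has degree $\le k-1$ and vanishes on $\{\text{wt}(x)<k\}$, which contains $\{\text{wt}(x)\le k-1\}$; the auxiliary fact forces $\tilde g\equiv 0$, whence $g\equiv 0$. Combined with the upper bound this yields $\operatorname{AI}(G_n)=k$.

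I do not foresee a substantive obstacle: the core of the argument is a standard M\"obius-inversion / Reed--Muller duality calculation, and the only care required is to keep the weight thresholds ($\le k$ versus $\le k-1$) properly aligned when invoking the auxiliary fact on each side.
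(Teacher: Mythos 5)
Your proof is correct. Note, however, that the paper does not prove this lemma at all: it is quoted verbatim from the cited reference \cite{:D.K.Dalai} and used as a black box (e.g.\ inside the proof of Lemma 4.1), so there is no in-paper argument to compare against. What you supply is a clean, self-contained derivation along the standard lines: the auxiliary fact that a function of degree at most $d$ vanishing on all points of weight at most $d$ is identically zero (via binary M\"obius inversion $a_I=\sum_{J\subseteq I}g(\mathbf{1}_J)$, equivalently the nonsingularity of the truncated Reed--Muller evaluation matrix), applied directly to annihilators of $G_n+1$ and, after the degree-preserving complementation $x\mapsto x+(1,\ldots,1)$, to annihilators of $G_n$. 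The weight thresholds are handled correctly on both sides ($\mathrm{wt}\le k\supseteq\mathrm{wt}\le k-1$ for $G_n+1$, and $\mathrm{wt}<k=\mathrm{wt}\le k-1$ for $G_n$), and combining with the generic bound $\mathrm{AI}\le\lceil n/2\rceil$ closes the argument. This is essentially the argument in the cited source, so your proof validates the citation rather than diverging from anything in the present paper.
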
\vspace{0.3cm}
\begin{lem}\cite{5:keqing Feng} Suppose that $n\geq 2$ and $f\in \text{SB}_n$. If there exists
$0\neq g\in \text{B}_n$, such that $fg=0$, then there exists $b$,
$0\leq b\leq \lfloor \frac{n}{2}\rfloor$ and $0\neq h(x_{2b+1},
\ldots, x_n )\in \text{SB}_{n-2b}$, $\deg(h)\leq \deg(g)-b$, such
that $fhP_b=0$.
\end{lem}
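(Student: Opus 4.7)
The plan is to exploit the symmetry of $f$: since $f^\sigma = f$ for every $\sigma \in S_n$, whenever $g$ annihilates $f$ so does every permuted version $g^\sigma$, and hence the $\text{F}_2$-span of the orbit $\{g^\sigma : \sigma \in S_n\}$ lies inside the annihilator ideal $I(f) := \{u \in \text{B}_n : fu = 0\}$. I would proceed by strong induction on $d = \deg(g)$; the base case $d = 0$ is immediate, as $g$ is then a nonzero constant, forcing $f = 0$, and $b = 0$, $h = 1$ work.

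For the inductive step I first try to obtain a nonzero fully symmetric annihilator of degree at most $d$ from $g$ by an $S_n$-symmetrization (summing over a suitable set of coset representatives of $\text{Stab}(g)$ in $S_n$ to avoid the parity collapse $|S_n| \cdot g = 0$ that occurs in characteristic two). If any such combination is nonzero, take $b = 0$ and let $h$ be that element. If no symmetric annihilator arises, I iteratively peel off pair-difference factors: for each $i$, the operator $D_i : u \mapsto u + u^{(2i-1,\, 2i)}$ preserves the annihilator property (since the transposition $(2i-1, 2i)$ fixes $f$) and produces a polynomial vanishing on $\{x_{2i-1} = x_{2i}\}$, hence divisible by $(x_{2i-1} + x_{2i})$. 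The $D_i$ pairwise commute, so applying $D_1 D_2 \cdots D_b$ to $g$ with $b$ chosen \emph{maximal} subject to the result being nonzero yields an annihilator of $f$ of the form $P_b \cdot q$ with $\deg(q) \leq d - b$. To extract the desired $h \in \text{SB}_{n-2b}$, I pass to the partial trace $h(z) := \sum_{y \in \text{supp}(P_b)} q(y, z)$, which is a polynomial only in $x_{2b+1}, \ldots, x_n$; after further symmetrizing in these last variables if required, and using the identity $f(y, z) = v_f(b + \text{wt}(z))$ for every $y \in \text{supp}(P_b)$ (which holds by symmetry of $f$ together with $\text{wt}(y) = b$), one reduces the condition $fhP_b = 0$ to the coordinate-wise constraint $v_f(b + \text{wt}(z))\,h(z) = 0$, which is inherited from $fP_bq = 0$.

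The main obstacle will be precisely the last step: guaranteeing that the resulting $h$ is genuinely nonzero, lies in $\text{SB}_{n-2b}$ rather than some larger $B_b \times S_{n-2b}$-invariant space, and meets the bound $\deg(h) \leq d - b$. This should be forced by the maximality of $b$ in the iterative extraction, possibly after first replacing $g$ by a well-chosen element of its $S_n$-orbit so that the construction lands in a generic position with respect to the pair structure on $\{1,\ldots,2b\}$.
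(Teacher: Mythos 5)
The paper does not prove this lemma; it is quoted from the cited reference of Feng, Liao and Liu, and the standard proof there is exactly the ``peeling'' argument you sketch: since $f$ is symmetric, the $\mathrm{F}_2$-span $W$ of the $S_n$-orbit of $g$ lies in the annihilator space of $f$, and one takes $b$ maximal such that $W$ contains a nonzero $P_b u$ with $u\in \mathrm{B}_{n-2b}$ in the last $n-2b$ variables and $\deg(u)\le \deg(g)-b$ (the case $b=0$, $u=g$ shows such $b$ exists). Your identity $u+u^{(i,j)}=(x_i+x_j)(B+C)$ with $\deg(B+C)\le\deg(u)-1$ is the correct engine. So the skeleton of your proposal matches the intended proof.

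However, the extraction of a \emph{symmetric} $h$ is where your write-up has genuine gaps. First, the partial trace $h(z)=\sum_{y\in\mathrm{supp}(P_b)}q(y,z)$ is both unnecessary and fatal: after applying $D_1\cdots D_b$ the cofactor $q$ already involves only $x_{2b+1},\ldots,x_n$, and since $|\mathrm{supp}(P_b)|=2^b$ your sum equals $2^b q(z)=0$ in characteristic $2$ for every $b\ge 1$. Second, ``further symmetrizing in these last variables'' cannot be done by averaging (same parity collapse), and replacing $q$ by the symmetric function with the same weight support would lose the bound $\deg(h)\le\deg(g)-b$, since a symmetric function is not determined in degree by its weight support. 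Third, your maximality is taken only over the fixed pair structure $(1,2),(3,4),\ldots$, so $D_{b+1}(P_bq)=0$ only tells you $q$ is invariant under the single transposition $(2b+1,2b+2)$, not under all of $S_{n-2b}$. The fix is the point you only gesture at: define $b$ maximal over the whole orbit span $W$. If a witness $u$ were not symmetric in $x_{2b+1},\ldots,x_n$, some transposition $\tau$ of two of those variables gives $u+u^{\tau}\neq 0$; conjugating by a permutation fixing $\{1,\ldots,2b\}$ pointwise to move $\tau$ to $(2b+1,2b+2)$, the element $P_b(u+u^{\tau})=P_{b+1}q'$ lies in $W$ with $0\neq q'\in\mathrm{B}_{n-2b-2}$ and $\deg(q')\le\deg(g)-(b+1)$, contradicting maximality. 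Hence $u$ itself is symmetric and $h=u$ works directly; no trace or post-symmetrization is needed. The preliminary dichotomy (``first try an $S_n$-symmetrization, else peel'') is likewise not well defined and should be dropped in favour of this single maximality argument.
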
\vspace{0.3cm}

\begin{lem} Suppose $n = 2k$ and $f \in \text{SB}_n$.
If $\text{AI}(f) = k$, then $\wt(v_f) \in \{k, k + 1\}$.
\label{lem_suff_AIk}
\end{lem}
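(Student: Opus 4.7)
The plan is to translate $\text{AI}(f)=k$ into a two-sided bound on $\text{wt}(v_f)$ by showing that if $\text{wt}(v_f)$ is too small then $f$ already admits a \emph{symmetric} annihilator of low degree, contradicting the algebraic immunity. The lower bound comes from applying this observation to $f$, and the upper bound from applying it to $f+1$.

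For the lower bound $\text{wt}(v_f)\ge k$, I would fix $d\ge 0$ and consider a generic $g\in\text{SB}_n$ with $\deg g\le d$, so that $\lambda_g(j)=0$ for $j>d$. By Lemma 2.1,
\[
v_g(i)\;=\;\sum_{\substack{j\preceq i\\ j\le d}}\lambda_g(j),\qquad i\in\{0,1,\ldots,n\}.
\]
Since $f$ is symmetric, $fg=0$ is equivalent to $v_g(i)=0$ for every $i\in\text{WS}(f)$; this is an $\text{F}_2$-linear system in the $d+1$ unknowns $\lambda_g(0),\ldots,\lambda_g(d)$ with exactly $|\text{WS}(f)|=\text{wt}(v_f)$ equations. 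Whenever $d+1>\text{wt}(v_f)$ the system has more unknowns than equations, so a nonzero solution exists, producing a nonzero symmetric annihilator of $f$ of degree $\le d$. Specializing $d=k-1$: if $\text{wt}(v_f)\le k-1$ we obtain an annihilator of degree $<k$, contradicting $\text{AI}(f)=k$. Hence $\text{wt}(v_f)\ge k$.

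For the upper bound, I would run the same argument on $f+1$. Since $v_{f+1}(i)=v_f(i)+1$ for every $i$, we have $\text{wt}(v_{f+1})=(2k+1)-\text{wt}(v_f)$; and $\text{AI}(f+1)=\text{AI}(f)=k$ from the symmetric role of $f$ and $f+1$ in the definition of AI. The previous paragraph applied to $f+1$ then gives $\text{wt}(v_{f+1})\ge k$, i.e., $\text{wt}(v_f)\le k+1$. Combining, $\text{wt}(v_f)\in\{k,k+1\}$.

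I do not anticipate a genuine obstacle here: the only conceptual step is recognizing that producing a \emph{symmetric} annihilator already suffices to violate $\text{AI}(f)=k$ (since any symmetric annihilator is in particular an annihilator), after which everything reduces to linear algebra via Lemma 2.1. The heavier weight-support machinery of Lemma 2.4 should not be needed for this statement; I expect it to enter only in the finer analysis of which $v_f$ patterns of weight $k$ or $k+1$ actually arise, which is the subject of subsequent sections.
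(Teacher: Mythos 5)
Your proposal is correct and follows essentially the same route as the paper: for $\wt(v_f)<k$ one sets up the linear system $v_g(i)=0$ for $i\in\text{WS}(f)$ in the $k$ unknowns $\lambda_g(0),\ldots,\lambda_g(k-1)$ via Lemma 2.1, obtains a nonzero symmetric annihilator of degree less than $k$ since there are fewer equations than unknowns, and handles the upper bound by replacing $f$ with $f+1$. The only difference is that you spell out the counting $\wt(v_{f+1})=(2k+1)-\wt(v_f)$ explicitly, which the paper leaves implicit.
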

\begin{proof} It is sufficient to prove
that when $\wt(v_f) < k$ or $\wt(f) > k+1$, $f$ or $f+1$ has a
nonzero symmetric annihilator with degree less than $k$. Without
loss of generality, we consider that $\wt(v_f) < k$. Otherwise, we
can replace $f$ by $f+1$.

Let $g=\sum_{i=0}^{k-1}\lambda_g(i)\sigma_i$ be a symmetric
annihilator of $f$. Hence, $fg=0$ if and only if for all $v_f(i) =
1$, $v_g(i) = 0$ holds. Thus, by Lemma 2.1, $\wt(v_f)$ equations on
$k$ variables $\lambda_g(0), \ldots, \lambda_g(k-1)$ are obtained,
where the number of equations is less than the number of unknowns.
Therefore, at least one nonzero solution exists, which implies the
existence of such an annihilator.
\end{proof}

\section{ Necessary Conditions for Even-variable Symmetric Boolean Functions with Maximum \text{AI}}
We always assume $n=2k$. In this section, we will present the
constraints on the simplified value vector for an $n$-variable
symmetric Boolean function $f$ with maximum algebraic immunity $k$
step by step. First, we present Lemma 3.1 and Theorem 3.1, where
Lemma 3.1 is a special case of Theorem 3.1. According to Lemma 3.1
and Theorem 3.1, several notations and definitions are given. Based
on them, we present Corollary 3.1, Theorem 3.2, Theorem 3.3, and
Theorem 3.4, which are the main results of this section. Theorem 3.5
concludes this section by showing two classes of symmetric Boolean
functions satisfying all the necessary conditions. The following
lemma is very important.

\vspace{0.3cm}
\begin{lem}
\label{lem_antij} Let $n=2^{p+1}\mu$ with $p, \mu\geq 1$, and $f \in
\text{SB}_n$. If $\text{AI}(f) =k$, then
\begin{equation}
\label{equ_cons} v_{f}(2^{p}\mu- 2^{p} i+2^{p-1})=v_{f}(2^{p}\mu+
2^{p}j-2^{p-1})+1
\end{equation}
for any $1 \leq i, j \leq \mu$.
\end{lem}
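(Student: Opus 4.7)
I would argue by contradiction via Lemma~2.4. Write $L_i := 2^p\mu - 2^p i + 2^{p-1}$ and $R_j := 2^p\mu + 2^p j - 2^{p-1}$, so that (\ref{equ_cons}) is $v_f(L_i) + v_f(R_j) = 1$. Since $v_{f+1}$ is the bitwise complement of $v_f$ and $\text{AI}(f+1) = \text{AI}(f)$, by replacing $f$ with $f+1$ if needed it suffices to handle Case~1: $v_f(L_{i_0}) = v_f(R_{j_0}) = 0$ for some pair $(i_0, j_0)$ at which (\ref{equ_cons}) fails. The goal is to exhibit an annihilator $g$ of $f$ with $\deg g < k$, contradicting $\text{AI}(f) = k$.

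Motivated by Lemma~2.4, I seek $g$ of the form $g = P_b \cdot h$ with $h \in \text{SB}_{n-2b}$ and $b + \deg h < k$. Since $f$ is symmetric and $P_b$ vanishes unless the first $2b$ coordinates have weight exactly $b$, the annihilation condition reduces to $v_f(b+w)\,v_h(w) = 0$ for every $w \in [0, n-2b]$. I would take $b = k - 2^p\max(i_0, j_0)$, so that $L_{i_0}, R_{j_0} \in [b, n-b]$, and choose $h$ on $n - 2b = 2^{p+1}\max(i_0, j_0)$ variables whose value vector $v_h$ is supported exactly on $\{L_{i_0}-b,\ R_{j_0}-b\}$. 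The annihilation condition then collapses to $v_f(L_{i_0}) = v_f(R_{j_0}) = 0$, which holds by hypothesis. By Lemma~2.1, $\lambda_h(t) = [L_{i_0} - b \preceq t] + [R_{j_0} - b \preceq t]$, and for the base pair $(1, 1)$ the condition $\lambda_h(t) = 1$ becomes ``bit $p-1$ of $t$ is $1$ while bit $p$ of $t$ is $0$'', giving $\deg h = 2^p - 1 < 2^p = k - b$, hence $\deg g = k - 1 < k$ and the desired contradiction.

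The general pair $(i_0, j_0)$ is then reduced to the three base pairs $(1, 1)$, $(1, j_0)$, $(i_0, 1)$ via the $\text{F}_2$ identity
\[
v_f(L_i) + v_f(R_j) = \bigl(v_f(L_i) + v_f(R_1)\bigr) + \bigl(v_f(L_1) + v_f(R_j)\bigr) + \bigl(v_f(L_1) + v_f(R_1)\bigr),
\]
each base pair handled by the same construction with the appropriate $b$.

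The main obstacle is verifying $\deg h < k - b$ for the intermediate base pairs $(1, j)$ and $(i, 1)$ when $\max(i, j) > 1$. The bit patterns of $L_{i_0} - b$ and $R_{j_0} - b$ are controlled by the 2-adic expansion of $\max(i_0, j_0)$; when this maximum is a power of $2$ the naive two-point support for $v_h$ achieves $\deg h = 2^{p + \log_2 \max(i_0, j_0)} - 1 < k - b$, but otherwise $\lambda_h$ retains enough nonzero coefficients to push $\deg h$ up to $k - b$ or beyond. My plan for this step is to enlarge the support of $v_h$ by additional indices where $v_f$ must vanish (using relations at smaller pairs already established in the induction) so that sufficient cancellations occur in $\lambda_h$ to drop its degree below $k - b$.
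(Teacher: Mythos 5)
Your construction is sound for the pair $(1,1)$: with $b=k-2^{p}$ the two-point support $\{2^{p-1},\,2^{p}+2^{p-1}\}$ does give $\lambda_h(t)=1$ exactly when bit $p-1$ of $t$ is $1$ and bit $p$ is $0$, hence $\deg h=2^{p}-1<k-b$, and the $\mathrm{F}_2$ identity correctly reduces the general pair to the three base pairs. The genuine gap is that the remaining base pairs $(1,j)$ and $(i,1)$ --- which by your own reduction carry essentially all of the content of the lemma --- are left to an unexecuted ``plan.'' The failure you anticipate is real and not marginal: for the pair $(1,3)$ one gets $b=k-3\cdot 2^{p}$, $A=5\cdot 2^{p-1}$, $B=11\cdot 2^{p-1}$, and $\lambda_h(6\cdot 2^{p}-1)=1$ (bit $p+1$ clear, bits $p+2,p,p-1$ set), so $\deg h = 6\cdot 2^{p}-1 \geq 3\cdot 2^{p}=k-b$ and the annihilator has degree $\geq k$; the same happens for $(3,1)$. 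Your proposed repair --- enlarging $\mathrm{supp}(v_h)$ by ``additional indices where $v_f$ must vanish, using relations at smaller pairs already established in the induction'' --- has two unresolved problems. First, no induction has actually been set up, and the relations at other pairs only assert $v_f(L_i)+v_f(R_j)=1$, which does not tell you \emph{which} of the two values is $0$; under the contradiction hypothesis $v_f(L_1)=v_f(R_j)=0$ the points $R_{j'}$ ($j'<j$) are forced to value $1$ and hence cannot be added to the support of an annihilator of $f$, while the points $L_i$ are only known to vanish if the pairs $(i,1)$ have already been established --- exactly the pairs you are simultaneously trying to prove. Second, even granting a consistent ordering, the degree bound for the enlarged-support $h$ requires a nontrivial parity count of $\{\psi : c \preceq \psi \preceq \omega,\ \psi < k\}$; this computation is the technical heart of the lemma and is entirely absent.

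For comparison, the paper closes exactly this gap by inducting on $\mu$: the pairs with $1\leq i,j\leq \mu-1$ are imported from the induction hypothesis by restricting $v_f$ to the window $(v_f(2^{p}),\ldots,v_f(2^{p+1}(\mu-1)+2^{p}))$ and multiplying the resulting annihilator by $P_{2^{p}}$, and only the boundary pairs $(\mu,j)$ and $(i,\mu)$ are handled by a direct construction --- but on the SANF side rather than the value side: $\lambda_g(\psi)=1$ iff $\psi<k$ and $2^{p-1}\preceq\psi$, so that $\deg g<k$ is automatic and the work goes into computing $\mathrm{WS}(g)$ (showing it lies in $\{2^{p-1}\}\cup\{R_1,\ldots,R_{\mu}\}$ and that $n-2^{p-1}\notin\mathrm{WS}(g)$ via the parity of $|T_\ell|$). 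Note also that after the interior pairs are settled, all the $v_f(R_j)$ with $j\leq\mu-1$ are known to be \emph{equal}, so $f$ is constant on $\mathrm{WS}(g)$ and one of $fg=0$, $(f+1)g=0$ holds --- this is how the paper sidesteps the ``which value is $0$'' ambiguity that blocks your enlargement step. Until you either adopt such an induction or carry out the degree computation for an explicitly specified enlarged support, the proof is incomplete.
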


\begin{proof} We will prove this theorem by induction on parameter $\mu$.

\textbf{Basis:} When $\mu = 1$, it is true due to Theorem 2.2 of
$\cite{2:Qu}$.

\textbf{Induction:} Assuming the theorem is true for $\mu=\ell\geq
1$, we claim that it is also true for $\mu=\ell+1$. Now, let
$n=2k=2^{p+1}(\ell+1)$ and $f \in \text{SB}_{n}$ with
$\text{AI}(f)=k=2^p(\ell+1)$. We will prove that the Boolean
function $f$ satisfies
$v_{f}(2^{p}(\ell+1)-2^{p}i+2^{p-1})=v_{f}(2^{p}(\ell+1)+2^{p}j-2^{p-1})+1$
for any $1\leq i,j \leq \ell+1$ in the following four steps.

In step 1, we prove \eqref{equ_cons} for $1\leq i,j\leq \ell$; in
step 2, for $1\leq j\leq \ell$ and $i=\ell+1$; in step 3, for $1\leq
i\leq \ell$ and $j=\ell+1$; and in step 4, for $i=\ell+1$ and
$j=\ell+1$.

\vspace{0.3cm} \textbf{Step 1}  Assume to the contrary that
$v_{f}(2^{p}(\ell+1)- 2^{p} i+2^{p-1})=v_{f}(2^{p}(\ell+1)+
2^{p}j-2^{p-1})$ for some $1\leq i, j\leq \ell$, letting $f'\in
\text{SB}_{2^{p+1}\ell}$ on variables $x_{2^{p+1}+1}$,
$x_{2^{p+1}+2}$, $\ldots$, $x_{2^{p+1}(\ell+1)}$, be defined as
$$
v_{f'}=(v_f(2^p), v_f(2^p + 1), \ldots, v_f(2^{p+1}\ell + 2^p)).
$$
Then we have $v_{f'}(2^{p}\ell-
2^{p}i+2^{p-1})=v_{f'}(2^{p}\ell+2^{p}j-2^{p-1})$. By induction
hypothesis, there exists $0\neq h\in B_{2^{p+1}\ell}$ on variables
$x_{2^{p+1}+1}$, $x_{2^{p+1}+2}$, $\ldots$, $x_{2^{p+1}(\ell+1)}$
with $\deg(h) < 2^p \ell$, such that $hf' = 0$ or $h(f'+1)=0$.

For the case $hf'=0$, let $g = hP_{2^p} $. Then, we claim that $f g
= 0$. To prove the foregoing, we study the weight supports of $g$
and $f$. First, by $\text{WS}(P_{2^{p}})=\{2^{p}\}$ and the fact
that $P_{2^{p}}$ and $h$ deal with different variables, we know that
$WS(g) = \{i+2^p|i\in \text{WS}(h)\}$ and $\text{WS}(g)\cap\{i|0\leq
i< 2^{p}, n-2^p<i\leq n\}= \emptyset $. Second, we know that
$\text{WS}(f)=\{i+2^p|i\in \text{WS}(f')\}\cup\{i|0\leq i< 2^{p},
n-2^p<i\leq n, v_f(i)=1\} $ by the definition of $f'$. Third, we
have $\text{WS}(f')\cap \text{WS}(h)=\emptyset$ because $f' h=0$.
Thus, we have $\text{WS}(f)\cap \text{WS}(g)=\{i+2^{p}|i\in
\text{WS}(f')\}\cap \{i+2^{p}|i\in \text{WS}(h)\}=\emptyset$, which
means $f g = 0$. For the case $h(f'+1)=0$, we can prove similarly
that $(f+1)g=0$. This contradicts $AI( f ) = k$ because $\deg(g) <
k$. Therefore,
\begin{equation}
\label{equ_step1} v_{f}(2^{p}(\ell+1)- 2^{p}
i+2^{p-1})=v_{f}(2^{p}(\ell+1)+ 2^{p}j-2^{p-1})+1
\end{equation}
for any $1\leq i, j\leq \ell$.

\vspace{0.3cm} \textbf{Step 2} Assume to the contrary that
$v_{f}(2^{p-1})=v_{f}(2^{p}(\ell+1)+2^{p}j-2^{p-1})$ for $1\leq
j\leq \ell$. To deduce a contradiction, we construct an annihilator
$g$ of $f$ or $f+1$ as follows.

Define $g\in \text{SB}_{2^{p+1}(\ell+1)}$ by
\begin{equation*}\lambda_g(\psi)=
\begin{cases}
0,\hspace{0.1cm}\text{if }\psi \ge k\text{ or }2^{p-1} \not\preceq \psi,\\
1,\hspace{0.1cm}\text{if }\psi < k\text{ and }2^{p-1} \preceq \psi.\\
\end{cases}
\end{equation*}
%Using Lemma 2.1, we have $v_{g}(0)=v_{g}(1)=...=v_{g}(2^{p-1}-1)=0$, $v_{g}(2^{p-1})=1$, $v_{g}(2^{p-1}+1)=...=v_{g}(k-1)=0$ by
%simple calculation which is omitted here.
We claim that $g$ is an annihilator of $f$ or $f+1$. To prove the
claim, we study the weight support of $g$. Let $\omega\in
\text{WS}(g)$, then $v_g(\omega)=1$. By Lemma 2.1, we have
$$v_g(\omega)=\sum_{\psi\preceq \omega}\lambda_g(\psi)=\sum_{\begin{array}{c}
                                                                2^{p-1}\preceq\psi\preceq\omega, \\
                                                                  \psi<k
                                                               \end{array}
}1.$$ Let $$S_{\omega}=\{\psi | 2^{p-1}\preceq\psi\preceq
\omega,\psi<k=2^p(l+1)\},$$ then $v_g(\omega)=|S_{\omega}| \mod 2$,
which means $v_g(\omega)=1$ if and only if $|S_{\omega}|$ is odd.
Let $\omega=(\omega_{m}\omega_{m-1}\cdots\omega_0)_2$ be the binary
expansion of $\omega$.
\begin{itemize}
\item [i)] If $\omega<k$, we claim that $|S_{\omega}|$ is odd if and only if
$\omega=2^{p-1}$. For $2^{p-1}\not\preceq \omega$, there is no
$\psi$ satisfying $2^{p-1}\preceq\psi\preceq\omega$, which means
$S_{\omega}=\emptyset$. For $2^{p-1}\preceq \omega$, the number of
$\psi$ such that $2^{p-1}\preceq\psi\preceq \omega$ is
$2^{\text{wt}(\omega)-\text{wt}(2^{p-1})}=2^{\text{wt}(\omega)-1}$,
which is odd only when $\omega=2^{p-1}$.
\item [ii)] If $\omega\geq k$, we claim that $|S_{\omega}|$ is
odd only if $\omega_{p-1}=1$ and $\omega_t=0$ for all $0\leq t<p-1$.
Otherwise, if $\omega_{p-1}=0$, then $2^{p-1}\not\preceq\omega$ and
there is no $\psi$, such that $2^{p-1}\preceq\psi\preceq\omega$,
which implies that $S_\omega=\emptyset$. If $\omega_{t_0}=1$ for
some $0\leq t_0<p-1$, and $\psi=(\psi_{m}\psi_{m-1}\cdots\psi_0)_2$
is an element of $ S_{\omega}$, then it is clear that
$\psi'=(\psi_{m}\cdots\psi_{t_0+1}\overline{\psi_{t_0}}\psi_{t_0-1}\cdots\psi_0)_2$
also satisfies $2^{p-1}\preceq \psi'\preceq\omega$ and
$\psi'<k=2^{p}(\ell+1)$, where $\overline{\psi_{t_0}}=\psi_{t_0}+1$.
Thus, $\psi'$ is also an element of $S_{\omega}$, which means the
elements in $S_{\omega}$ come into pairs. Thus, $|S_{\omega}|$ is
even, which is a contradiction. Therefore, if $v_{g}(\omega)=1$,
then $\omega_{p-1}=1$ and $\omega_t=0$ for all $0\leq t<p-1$.
\end{itemize}
Combining the results of i) and ii), we have
%Then we claim that $\omega_{p-1}=1$ and $\omega_t=0$ for all $0\leq
%t<p-1$. Otherwise, if $\omega_{p-1}=0$, then $S_\omega=\emptyset$,
%contradicts! If $\omega_{t_0}=1$ for some $0\leq t_0<p-1$, and let
%$\psi=(\psi_{m}\psi_{m-1}\cdots\psi_0)_2$ be an element of $
%S_{\omega}$, then it is clear that
%$(\psi_{m}\cdots\psi_{t_0+1}\overline{\psi_{t_0}}\psi_{t_0-1}\cdots\psi_0)_2$
%is also an element of  $S_\omega$, where
%$\overline{\psi_{t_0}}=\psi_{t_0}+1$. This means that the elements
%of $S_\omega$ come into pairs. Thus $|S_{\omega}|$ is even, also
%contradicts! So we have $\omega_{p-1}=1$ and $\omega_t=0$ for all
%$0\leq t<p-1$, which means that
$$\text{WS}(g)\subseteq \{2^{p-1},
2^p(\ell+1)+2^pj-2^{p-1},1\leq j\leq \ell+1\}.$$ Note that
$v_f(2^{p-1})=v_f(2^p(\ell+1)+2^pj-2^{p-1}), 1\leq j\leq \ell$, if
we can prove that $2^p(\ell+1)+2^p(\ell+1)-2^{p-1}=n-2^{p-1}$ is not
in $\text{WS}(g)$, then we have $fg=0$ or $(1+f)g=0$ since $f$ is
constant on the support of $g$. Since $\deg(g)<k$, we have
$AI(f)<k$. It is a contradiction, and will end the proof of this
part. Therefore, we will prove $v_g(n-2^{p-1})=0$. Note that once it
is proved, we finish the proof of this part.

\par Let $\omega=n-2^{p-1}=2^{p+1}\ell+2^p+2^{p-1}$, $\psi$ be an element of
$S_\omega$. According to the definition of $S_{\omega}$, we can see
that there exists some integer $0\leq s\leq \ell$, such that
$\psi=2^ps+2^{p-1}$. Let $T_\ell=\{s|s\preceq 2\ell+1, 0\leq s\leq
\ell\}$. Hence we have $|S_\omega|=|T_\ell|$ by the definition of
$S_{\omega}$. What we need is to prove that $|T_\ell|$ is even for
all $\ell\geq 1$. It is not a difficult task, and the reader can
give a proof by himself/herself, or follow the proof below.

\par If $\ell=2^r-1$ for some positive integer $r$, then
$2\ell+1=2^{r+1}-1$. Thus, $s\preceq 2\ell+1$ for every $0\leq s\leq
\ell$, which means $|T_\ell|=\ell+1=2^r$. It is in contradiction
with that $|S_{\omega}|=|T_{\ell}|$ is odd. Otherwise, let
$\ell=(\ell_m\ell_{m-1}\cdots \ell_0)_2$ be the binary expansion of
$\ell$, then there exists some integer $1\leq t\leq m$ such that
$\ell_m=\ell_{m-1}=\cdots=\ell_{m-(t-1)}=1$ and $\ell_{m-t}=0$,
namely $\ell=( \underbrace{11\cdots 1}_{t}0\ell_{m-t-1}\cdots
\ell_0)_2$. Then $2\ell+1=( \underbrace{11\cdots
1}_{t}0\ell_{m-t-1}\cdots \ell_01)_2$. Let $s=(s_ms_{m-1}\cdots
s_0)_2$, then $s\preceq 2\ell+1$ implies $s_{m-t+1}=0$, which means
$s<\ell$ by the structure of $\ell$. Thus, by the definition of
$T_{\ell}$, $s\in T_{\ell}$ if and only if $s\preceq2\ell+1$. Since
$s<2^{m+1}$ and $(2\ell+1)_{m+1}=1$, where $(2\ell+1)_{m+1}$ denotes
the $(m+1)^{\text{th}}$ bit in its binary expansion, we have
$|T_{\ell}|=2^{\text{wt}(2\ell+1)-1}$. Since $\ell\geq 1$, we have
$\text{wt}(2\ell+1)-1> 0$ which means $|S_{\omega}|=|T_{\ell}|$ also
even. Thus, we finish the proof of this step, i.e.,
\begin{equation}\label{equ_step2}
v_{f}(2^{p-1})=v_{f}(2^{p}(\ell+1)+2^{p}j-2^{p-1})+1\end{equation}
for $1\leq j\leq \ell$.

\vspace{0.3cm} \textbf{Step 3}  Assume to the contrary that
$v_{f}(2^{p}(\ell+1)-2^{p}i+2^{p-1})=v_{f}(2^{p+1}(\ell+1)-2^{p-1})$,
for $1\leq i\leq \ell$. Similar with step 2, by using $g'$ instead
of $g$, where $g'(x_1, \ldots, x_n)=g(x_1+1, \ldots, x_n+1),$ we can
get $fg'=0$ or $(f+1)g'=0$, which contradicts $\text{AI}(f)=k$.
Thus,
\begin{equation}
v_{f}(2^{p}(\ell+1)-2^{p}i+2^{p-1}) =
v_{f}(2^{p+1}(\ell+1)-2^{p-1})+1 \label{equ_step3}
\end{equation}
for any $1\leq i\leq \ell$.

%Without loss of generality, we assume $hf'=0$.
%Since $n=2k=2^{p+1}(\ell+1)$, we have
%$v_{f}(n-2^{p-1})+1=v_{f}(2^{p+1}(\ell+1)-2^{p-1})+1=v_{f}(2^{p}(\ell+1)-2^{p}i+2^{p-1}).$
%Since $n=2^{p+1}(\ell+1)$, we have
%$v_{f}(n-2^{p-1})+1=v_{f}(2^{p}(\ell+1)-2^{p}i+2^{p-1})$.

\vspace{0.3cm} \textbf{Step 4} Combining the above three steps, we
have
\begin{equation*}
\begin{array}{lll}
&v_{f}(2^{p-1})&\\
=&v_{f}(2^{p}(\ell+1)+2^{p}j-2^{p-1})+1 & \text{ by \eqref{equ_step2}}\\
=&v_{f}(2^{p}(\ell+1)-2^{p}i+2^{p-1}) & \text{ by \eqref{equ_step1}} \\
=&v_{f}(2^{p+1}(\ell+1)-2^{p-1})+1 & \text{ by \eqref{equ_step3}}\\
\end{array}
\end{equation*}
for any $1\leq i, j \leq \ell$. Thus,
$v_{f}(2^{p}(\ell+1)-2^{p}i+2^{p-1})=v_{f}(2^{p}(\ell+1)+2^{p}j-2^{p-1})+1$
for $i=j= \ell+1$. \vspace{0.2cm}

Combining the above four steps, $v_{f}(2^{p}(\ell+1)-
2^{p}i+2^{p-1})=v_{f}(2^{p}(\ell+1)+ 2^{p}j-2^{p-1})+1$ holds for
any $1\leq i, j \leq \ell+1$. Therefore, the theorem is also true
for $\mu=\ell+1$. This completes the proof.
\end{proof}

%Theorem 3.1 is very powerful. We first present a simple corollary.
\vspace{0.3cm} In Lemma 3.1, $n$ should be a multiple of 4. The
following theorem generalizes Lemma \ref{lem_antij} to a wider
situation, where $n$ can be any even number.

\vspace{0.3cm}
\begin{theo}
\label{thm_antij} Let $n=2^{p+1}\mu+2m$ with $p, \mu \geq 1$ and
$0\leq m< 2^{p}$, $f\in \text{SB}_{n}$. If $\text{AI}(f)=k$, then
$$v_{f}(2^{p}\mu+m-2^{p}i+2^{p-1})=v_{f}(2^{p}\mu+m+2^{p}j-2^{p-1})+1$$
for any $1\leq i,j\leq \mu$.
\end{theo}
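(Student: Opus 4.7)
The plan is to reduce Theorem~\ref{thm_antij} to Lemma~\ref{lem_antij} by constructing an auxiliary symmetric function $f' \in \text{SB}_{2^{p+1}\mu}$ whose simplified value vector is the shift $v_{f'}(i) = v_f(i+m)$ for $0 \leq i \leq 2^{p+1}\mu$. The target identity for $f$, namely $v_f(2^p\mu + m - 2^p i + 2^{p-1}) = v_f(2^p\mu + m + 2^p j - 2^{p-1}) + 1$, translates directly into $v_{f'}(2^p\mu - 2^p i + 2^{p-1}) = v_{f'}(2^p\mu + 2^p j - 2^{p-1}) + 1$, which is precisely the conclusion of Lemma~\ref{lem_antij} applied to an element of $\text{SB}_{2^{p+1}\mu}$ with algebraic immunity $2^p\mu$. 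So everything reduces to showing $\text{AI}(f') = 2^p\mu$.

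Since $\text{AI}(f') \leq 2^p\mu$ is automatic, I would establish the reverse inequality by contradiction. Suppose some nonzero $h \in \text{B}_{2^{p+1}\mu}$ with $\deg(h) < 2^p\mu$ satisfies $hf' = 0$; the case $h(f'+1) = 0$ is handled symmetrically (by replacing $f$ with $f+1$). Place the variables of $h$ and $f'$ as $x_{2m+1}, x_{2m+2}, \ldots, x_n$ and form
$$
g(x_1, \ldots, x_n) \;=\; P_m(x_1, \ldots, x_{2m}) \cdot h(x_{2m+1}, \ldots, x_n).
$$
Because $P_m$ and $h$ involve disjoint variable sets, $\deg(g) = m + \deg(h) < m + 2^p\mu = k$. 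Hence once $fg = 0$ is verified, we obtain $\text{AI}(f) < k$, which contradicts the hypothesis.

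To confirm $fg = 0$, pick any $y = (y^{(1)}, y^{(2)}) \in \text{supp}(f)$ with $y^{(1)} \in \text{F}_2^{2m}$, $y^{(2)} \in \text{F}_2^{2^{p+1}\mu}$, and set $j = \text{wt}(y)$. If $P_m(y^{(1)}) = 0$ then $g(y) = 0$. Otherwise $\text{wt}(y^{(1)}) = m$, forcing $\text{wt}(y^{(2)}) = j - m$; symmetry of $f'$ then gives $f'(y^{(2)}) = v_{f'}(j-m) = v_f(j) = 1$, so $y^{(2)} \in \text{supp}(f')$, whence $h(y^{(2)}) = 0$ from $hf' = 0$. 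In either branch $g(y) = 0$, so $fg = 0$. Having fixed $\text{AI}(f') = 2^p\mu$, a direct application of Lemma~\ref{lem_antij} to $f'$ and the identification $v_{f'}(\,\cdot\,) = v_f(\,\cdot\, + m)$ yields the claimed identity for all $1 \leq i, j \leq \mu$.

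The main technical point is the annihilator-lifting step: one must see that multiplying a low-degree annihilator of the shifted function $f'$ by $P_m$ on fresh variables produces a low-degree annihilator of the original $f$. Once that reduction is in place, Lemma~\ref{lem_antij} is invoked as a black box and an index shift finishes the proof; no new case analysis on the binary expansion of $n$ is required, since the structural work has already been done inside Lemma~\ref{lem_antij}.
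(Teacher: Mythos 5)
Your proposal is correct and follows essentially the same route as the paper: define the shifted function $f'$ with $v_{f'}(i)=v_f(i+m)$, invoke Lemma~\ref{lem_antij}, and lift a low-degree annihilator of $f'$ (or $f'+1$) to one of $f$ (or $f+1$) by multiplying with $P_m$ on the remaining $2m$ variables. The only cosmetic difference is that you verify $fg=0$ pointwise on $\mathrm{supp}(f)$ whereas the paper argues via disjointness of weight supports, and you organize the argument as first establishing $\text{AI}(f')=2^p\mu$ rather than running the contrapositive directly; both are logically equivalent.
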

\begin{proof}
Assume to the contrary that
$v_{f}(2^{p}\mu+m-2^{p}i+2^{p-1})=v_{f}(2^{p}\mu+m+2^{p}j-2^{p-1})$
for some $1\leq i, j\leq \mu$. Let $f'\in \text{SB}_{2^{p+1}\mu}$ on
variables $x_{2m+1}$, $x_{2m+2}$, $\ldots$, $x_{2^{p+1}\mu+2m}$, be
defined as
$$
v_{f'} = (v_f(m), v_f(m + 1), \ldots, v_f(2^{p+1}\mu+ m)).
$$
Then we have
$v_{f'}(2^{p}\mu-2^{p}i+2^{p-1})=v_{f'}(2^{p}\mu+2^{p}j-2^{p-1})$.
By Lemma \ref{lem_antij}, $\text{AI}(f')<2^p \mu$, thus there exists
$0\neq h\in B_{2^{p+1}\mu}$ with $\deg(h) < 2^p \mu$ such that $f'h
= 0$ or $(f'+1)h=0$. Let $g=hP_{m}$. Following the argument of step
1 in Lemma \ref{lem_antij}, we claim that $fg = 0$ or $(f+1)g=0$
with $\deg(g)< 2^p \mu+m=k$, which contradicts $\text{AI}(f)=k$.
Therefore,
$v_{f}(2^{p}\mu+m-2^{p}i+2^{p-1})=v_{f}(2^{p}\mu+m+2^{p}j-2^{p-1})+1$
for any $1\leq i, j\leq \mu$.
\end{proof}\vspace{0.3cm}

For example, when $n=2k=14$, we have
\begin{itemize}
\item if $p=1$, $\mu=3$, $m=1$, then $\{2^{p}\mu+m-2^{p}i+2^{p-1}, 2^{p}\mu+m+2^{p}j-2^{p-1}|1\leq i,j\leq \mu\}=\{2,4,6,8,10,12\}$,
\item if $p=2$, $\mu=1$, $m=3$, then $\{2^{p}\mu+m-2^{p}i+2^{p-1}, 2^{p}\mu+m+2^{p}j-2^{p-1}|1\leq i,j\leq \mu\}=\{5,9\}$.
\end{itemize}
Theorem 3.1 sets constraints on $v_{f}(\omega)$ for
$\omega\in\{2,4,5,6,8,9,$ $10,$ $12\}$.

For convenience of description, we introduce a partial order on
nonnegative integers denoted as $\preceq'$.

%From Corollary 3.1 we see that if $f\in \text{SB}_{n}$ with
%$\text{AI}(f)=k$ then $v_f$ must be constant on some special subsets
%of $\{0, 1, \ldots, n\}$. We will describe it in detail. In order to
%do this, we first give a special pre-order "$\preceq'$" for the
%nonnegative integers. \vspace{0.3cm}

\vspace{0.3cm}\begin{defi} Given two binary expansions of
nonnegative integers $a=(a_s, a_{s-1}, \ldots, a_0)_2$, $b=(b_\ell,
b_{\ell-1}, \ldots, b_0)_2$, $1\leq \ell\leq s$, we define
\begin{itemize}
\item[ ]$\hspace{1cm}$ $b\preceq' a \Leftrightarrow b=0 ~\text{or} ~b_{i}=a_{i}$ \text{ for all } $0\leq i\leq
\ell$;
\item[ ]$\hspace{1cm}$ $b\prec' a \Leftrightarrow b\preceq'a$ and $b\neq
a$.
\end{itemize}
\end{defi}\vspace{0.3cm}
For example, we have $3\prec'7$ because $7=(111)_{2}$ and
$3=(11)_{2}$.

For any nonnegative integer $k$, let $\text{B}^k=\{i,
2k-i~|~i\prec'k\}$. By the definition of $\prec'$,
$|\text{B}^k|=2\wt(k)$. \vspace{0.3cm}
\begin{defi}
\label{def_Atk} For any positive integer $n=2k$, we divide the set
$\{0, 1, \ldots, n\}$ into a series of subsets $\text{A}_i^k$, where
\begin{itemize}
\item[ ]$\text{A}_0^k=\{k\}$,
\item[ ]$\text{A}_i^k=\{k-(2j+1)2^{i-1},
k+(2j+1)2^{i-1}~|~0\leq j\leq
\frac{\lfloor\frac{k}{2^{i-1}}\rfloor-1}{2}\}$,
\end{itemize}
for $1\leq i\leq \lfloor \log_2 n \rfloor$. The superscript $k$ may
be omitted if there is no confusion.
\end{defi}\vspace{0.3cm}
The union of sets $\{(2j+1)2^{i-1}\mid j\in N\}$ over all $i\in
N^{+}$ is a partition of $N^{+}$, so these subsets have no
intersection with each other, and $\{0, 1, \ldots,
n\}=\bigcup_{i=0}^{\lfloor \log_2 n \rfloor}\text{A}_i$.

For example, when $n=2k=14$, we have $\text{A}_{0}^7=\{7\}$,
$\text{A}_{1}^7=\{0,2,4,6,8,10,12,14\}$,
 $\text{A}_{2}^7=\{1,5,9,13\}$,
 $\text{A}_{3}^7=\{3,11\}$.
\par

%Here, we explain the intuition behind $\prec'$ and the sets
%$A^{k}_{i}$ based on the binary expansion of $k$, i.e.,
%$(k_{m},...,k_{1},k_{0})_{2}$ and $k_{m}\not=0$. For each $k'\prec'
%k$, the binary expansion of $k'=(k'_{\ell},...,k'_{1},k'_{0})_{2}$
%with $0\leq \ell\leq m$ can be seen as $(k'_{\ell},...,k'_{0})$ $=$
%$(\overline{k_{\ell}},k_{\ell-1},...,k_{0})$ where $k_{\ell}=1$ and
%$\overline{k_{\ell}}=k_{\ell}+ 1=0$.
%Thus, we have
%$$B^{k}=\{k'|k'=(\overline{k_{\ell}},k_{\ell-1},...,k_{0})_{2}, k_{\ell}=1\}.$$
The main intuition of sets $A^{k}_{i}$ and $\prec'$ could be
explained by the binary expansion of $k$, where
$k=(k_{m},...,k_{0})_2$ with $k_{m}\not=0$. For any $a \prec' k$, it
is easy to verify that $a=(\overline{k_{j}},k_{j-1},...,k_{0})_{2}$,
where $k_{j}=1$ and $\overline{k_{j}}=k_{j}+1=0$. And for every
$\omega\in A^{k}_{i}$, the binary expansion of $\omega$ is
$(*,\overline{k_{i-1}},k_{i-2},...,k_{0})_{2}$, where $*$ is an
arbitrary binary string, which means the right-most $i$ bits of the
binary expansion of $\omega$ are exactly
$\overline{k_{i-1}},k_{i-2},...,k_{0}$.

% This is because
%$\omega=k+2^{i-1}(2j+1)$ with some integer $j$ due to the definition
%of $A^{k}_{i}$, which means $\omega=k+2^{i-1}+2^{i}j$. It is easy to
%verify that the term $2^{i}j$ has no impact on the right most $i$
%bits of binary expansion of $\omega$ while the term $2^{i-1}$
%changes the bit $\omega_{i-1}=k_{i-1}$ into $\overline{k_{i-1}}$.
%Thus, the right most $i$ bits of binary expansion of $\omega$ are
%$\overline{k_{i-1}},k_{i-2},...,k_{0}$. Thus, we have
%$$A^{k}_{i}=\{\omega|0\leq \omega\leq 2k, \omega=(*,\overline{k_{i-1}},k_{i-2},...,k_{0})_{2}\}.$$

%From above explanation, it's easy to see that, given
%$k=(k_{m},...,k_{0})_{2}$, for every $i$ satisfying $k_{i-1}=1$,
%there exists one element $m\in A^{k}_{i}$ such that $m\prec' k$. In
%this case, the binary expansion of $m$ is
%$m=(\overline{k_{i-1}},k_{i-2},...,k_{0})_{2}$. Since the binary
%expansions of all elements in $A^{k}_{i}$ have the form
%$(*,\overline{k_{i-1}},...,k_{0})_{2}$, then $m$ must be the
%smallest integer in the set $A^{k}_{i}$. \vspace{0.3cm}

The following Lemma contains some properties of $A_i^k$, as well as
the partial order $\prec'$.

\vspace{0.3cm}
\begin{lem}
Supposing $k=(k_m, \ldots, k_1, k_0)_2$, by the definition of
$\text{A}_{i}^k$, a simple calculation gives the following:

1) if $j\in \text{A}_i^k$, then $2k-j\in \text{A}_i^k$;

2) for any $0\leq j\leq 2k$, $j=(j_m,$ $\ldots,$ $j_1,$ $j_0)_2$,
$j\in \text{A}_i^k$ if and only if $(j_{i-1},$ $j_{i-2},$ $\ldots,$
$j_1,$ $j_0)$=$(\overline{k_{i-1}},$ $k_{i-2},$ $\ldots,$ $k_1,$
$k_0)$, in particular, $\text{A}_0^k=\{k\}$, $\text{A}_{\lfloor
\log_2k \rfloor+1}^k$=$\{k-2^{\lfloor \log_2k \rfloor},$
$k+2^{\lfloor \log_2k \rfloor}\}$;

3) $\text{A}_i^k$ contains an element $(\overline{k_{i-1}},k_{i-2},$
$\ldots,$ $k_1,$ $k_0)_2\prec'k$ in $B^k$ if and only if
$k_{i-1}=1$.\vspace{0.3cm}

We explain the reason why we define the sets $\text{A}^{k}_{i}$ and
$\text{B}^{k}$. Given $n=2k=2^{p+1}\mu+2m$ ($p,\mu\geq1$, $0\leq
m<2^{p}$), it is easy to verify that
$$\text{A}^{k}_{p}\supseteq \{2^{p}\mu+m-2^{p}i+2^{p-1},2^{p}\mu+m+2^{p}j-2^{p-1}
|1\leq i,j\leq \mu\},$$ which means the $\omega$ of $v_{f}(\omega)$
from the same equation defined by Theorem 3.1 are all included in
the same $\text{A}^{k}_{p}$. But $\text{A}^{k}_{p}$ may contain two
extra elements, which are
\begin{equation*}
\text{A}^{k}_{p}-\{2^{p}\mu+m-2^{p}i+2^{p-1},2^{p}\mu+m+2^{p}j-2^{p-1}
|1\leq i,j\leq \mu\}
\end{equation*}
\begin{equation*}=
\begin{cases}
\{m-2^{p-1}, n-m+2^{p-1}\},& m-2^{p-1}\geq 0,\\
\emptyset,& m-2^{p-1}<0.
\end{cases}
\end{equation*}

For the case $m-2^{p-1}\geq 0$, since $m<2^{p}$, we have
$m=2^{p-1}+s$ ($0\leq s<2^{p-1}$), which means
$m-2^{p-1}\prec'm\prec'k$. Besides, by the definition of partial
order $\prec'$, if $m+2^{p}\mu=k$ ($m<2^{p}$, $\mu>1$), then
$2^{p}\mu+m-2^{p}i+2^{p-1}\not\preceq'k$ for $1\leq i\leq \mu$.
Thus, Theorem 3.1 shows constraints on $v_{f}(\omega)$ if and only
if $\omega\not\preceq' k$, namely $\omega\in
\{0,1,...,n\}-\text{B}^{k}-\{k\}$.
\end{lem}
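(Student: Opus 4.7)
The three claims are all arithmetic consequences of the definitions of $A_i^k$ and $\prec'$, so my plan is to handle each by a direct binary-expansion calculation, reusing the observation that multiplying an odd integer by $2^{i-1}$ produces a number whose lowest nonzero bit sits exactly at position $i-1$.

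For part 1, I would simply note that the set
\[
A_i^k = \{\, k-(2j+1)2^{i-1},\; k+(2j+1)2^{i-1} : 0 \le j \le \tfrac{\lfloor k/2^{i-1}\rfloor -1}{2}\,\}
\]
is already invariant under the map $x \mapsto 2k - x$ (it just swaps the two symmetric halves around $k$), so $j \in A_i^k$ immediately gives $2k-j \in A_i^k$.

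For part 2 (which is the technical heart of the lemma), I would argue as follows. Write any element of $A_i^k$ as $k + t \cdot 2^{i-1}$ with $t$ odd and $|t|\cdot 2^{i-1} \le k$. Since bits $0,\dots,i-2$ of $t\cdot 2^{i-1}$ are all zero, adding it to $k$ neither changes bits $0,\dots,i-2$ of $k$ nor generates a carry into bit $i-1$; because $t$ is odd, bit $i-1$ flips. Hence the lower $i$ bits of any element of $A_i^k$ are exactly $(\overline{k_{i-1}},k_{i-2},\dots,k_0)$. For the converse, given $j \in \{0,\dots,2k\}$ with those lower $i$ bits, a direct computation shows $j-k \equiv 2^{i-1} \pmod{2^i}$ regardless of whether $k_{i-1}$ is $0$ or $1$, so $j-k = (2m+1)\cdot 2^{i-1}$ for some integer $m$; the bound $0 \le j \le 2k$ forces $|2m+1|\cdot 2^{i-1} \le k$, placing $j$ in $A_i^k$. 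The ``in particular'' consequences then follow: $A_0^k = \{k\}$ is the definition, and for $i=\lfloor \log_2 k\rfloor + 1$ the constraint $(2s+1)2^{i-1}\le k < 2^i$ forces $s=0$, leaving only $k \pm 2^{\lfloor \log_2 k \rfloor}$.

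For part 3, the candidate element $b := (\overline{k_{i-1}},k_{i-2},\dots,k_0)_2$ is the unique element of $A_i^k$ whose higher bits vanish. If $k_{i-1}=1$ then $b = (k_{i-2},\dots,k_0)_2$ is (after stripping leading zeros) a proper suffix of $k$'s binary expansion, so $b \prec' k$ and hence $b \in B^k$. If $k_{i-1}=0$ then the top bit of $b$ sits at position $i-1$ and equals $1 \ne k_{i-1}$, so $b \not\prec' k$; it remains to rule out $b = 2k - a$ with $a \prec' k$. Here I would use the range constraint $1 \le i \le \lfloor \log_2 n\rfloor = \lfloor\log_2 k\rfloor + 1$ together with $k_{i-1}=0$ to conclude $i-1 < \lfloor\log_2 k\rfloor$, whence $b < 2^i \le k$ and therefore $2k-b > k$, which cannot be $\prec' k$ (any nonzero $a \prec' k$ satisfies $a < k$). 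Thus $b \notin B^k$, completing the equivalence.

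The only mildly subtle point is the last one: correctly eliminating the ``$b = 2k - a$'' branch when $k_{i-1}=0$. Everything else is bookkeeping on binary digits, which is why the authors call it a ``simple calculation.''
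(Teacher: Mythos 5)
Your calculation is correct and is precisely the ``simple calculation'' that the paper omits: Lemma 3.2 is stated without proof, and your binary-digit bookkeeping (the no-carry/no-borrow argument below bit $i-1$ for part 2, and the elimination of the $b=2k-a$ branch via $b<2^i\le k<2k-a$ for part 3) verifies all three numbered items in the way the authors evidently intend. The only portion you do not address is the trailing explanatory paragraph comparing $\text{A}_p^k$ with the index set of Theorem 3.1, but that is a remark rather than a claim and follows directly from your part 2.
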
\vspace{0.3cm}

Equipped with these notations and basic properties, we can restate
Theorem \ref{thm_antij} concisely.

\vspace{0.3cm}
\begin{cor}
\label{cor_antij}
 Let $n=2k$ and $f\in \text{SB}_n$. If $\text{AI}(f)=k$, then for any
$1\leq p\leq \lfloor \log_2 k \rfloor$ and $i, j\in
\text{A}_p^k-\text{B}^k-\{k\}$ with $i\leq j<k$, we have
$v_f(i)$=$v_f(j)$=$v_f(n-j)+1$=$v_f(n-i)+1$.
\end{cor}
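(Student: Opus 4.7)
The plan is to recognize Corollary~3.1 as a direct reformulation of Theorem~3.1 in terms of the sets $A_p^k$ and $B^k$, so the proof is essentially bookkeeping on binary expansions.

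First, I fix $p$ with $1 \leq p \leq \lfloor \log_2 k \rfloor$. Since $2^{p+1} \leq n$, I can write $n = 2^{p+1}\mu + 2m$ uniquely with $\mu \geq 1$ and $0 \leq m < 2^p$, equivalently $k = 2^p\mu + m$. Applying Theorem~3.1 with these parameters gives
\[
v_f\bigl(k - (2i-1)2^{p-1}\bigr) + v_f\bigl(k + (2j-1)2^{p-1}\bigr) = 1
\]
for every $1 \leq i, j \leq \mu$. Denote the $\mu$ lower and $\mu$ upper indices by $L_p$ and $U_p$, respectively. Fixing one side and varying the other in these $\mu^2$ equations forces $v_f$ to be constant on $L_p$ (say, equal to $c$) and constant equal to $1+c$ on $U_p$. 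Since $U_p$ is the reflection of $L_p$ through $k$, this is exactly the conclusion $v_f(i) = v_f(j) = v_f(n-j) + 1 = v_f(n-i) + 1$ asked for in the corollary, provided $L_p \cup U_p$ coincides with $A_p^k \setminus B^k \setminus \{k\}$.

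I would then turn to this set identity. Using Definition~3.2 together with $k = 2^p\mu + m$, the elements of $A_p^k$ below $k$ are precisely those of $L_p$, possibly plus one extra index $m - 2^{p-1}$, which exists exactly when $k_{p-1} = 1$. Part~3 of Lemma~3.2 says that this extra index equals $(\overline{k_{p-1}}, k_{p-2}, \ldots, k_0)_2 \prec' k$ and therefore lies in $B^k$; by the symmetry in part~1 of Lemma~3.2, its reflection $n - m + 2^{p-1}$ also lies in $A_p^k \cap B^k$. A short bit-level argument rules out any other element of $B^k$ lying in $A_p^k$: membership in $A_p^k$ pins the lowest $p$ bits to $(\overline{k_{p-1}}, k_{p-2}, \ldots, k_0)$, while the $\prec'$-condition on an element of $B^k$ fixes the position at which the expansion first differs from that of $k$, and the two constraints are compatible only in the case just described. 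The main obstacle is precisely this final bookkeeping; all other steps follow immediately from Theorem~3.1.
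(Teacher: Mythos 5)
Your proposal is correct and follows essentially the same route as the paper, which likewise treats Corollary~3.1 as pure bookkeeping: for each $p$ it writes $n=2^{p+1}\mu+2m$, invokes Theorem~3.1, and then (in the discussion attached to Lemma~3.2) identifies the two possible extra elements $m-2^{p-1}$ and $n-m+2^{p-1}$ of $\text{A}_p^k$ as existing exactly when $k_{p-1}=1$ and as being precisely the elements of $\text{B}^k$ in $\text{A}_p^k$. Your set identity $L_p\cup U_p=\text{A}_p^k\setminus\text{B}^k\setminus\{k\}$ and the bit-level verification match the paper's argument point for point.
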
\vspace{0.3cm}

Corollary \ref{cor_antij} shows the constraints related to the
values of $v_{f}$ on $\{0, 1, \ldots, n\}-\text{B}^{k}-\{k\}$. In
what follows, we will discuss $v_{f}$ on $B^{k}\cup\{k\}$.

\vspace{0.3cm}
\begin{theo}
\label{thm_converse} Let $f\in \text{SB}_{n}$. For any $t\prec'k$,
$t\neq k-2^{\lfloor \log_2k \rfloor}$, assume $t\in \text{A}_p$. If
\begin{eqnarray}
\label{noequal} v_{f}(t)+1 & = & v_{f}(t + 2^p)= \cdots = v_{f}(k -
2^{p-1})\nonumber
\\ & = & v_{f}(k+2^{p-1})+1=\cdots=v_{f}(n-t-2^p)+1 \nonumber
\\ & = & v_{f}(n-t),
\end{eqnarray}
then $\text{AI}(f)<k$.
\end{theo}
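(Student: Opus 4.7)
My plan is to show $\text{AI}(f) < k$ by exhibiting a nonzero $g$ of degree less than $k$ that annihilates $f$ or $f+1$. The strategy parallels Step 2 of the proof of Lemma 3.1 (which handled the interior case $t = 2^{p-1} \in A_p - B^k$), now extended to the boundary case $t \in A_p \cap B^k$ treated here.

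First I would introduce the common value $Y$ of the chain in \eqref{noequal}: on the chain $v_f$ takes the alternating values $Y+1,Y,Y+1,Y$ on the four blocks $\{t\}$, $\{t+r\cdot 2^p : 1\le r\le j\}$, $\{k+(2s+1)2^{p-1} : 0\le s\le j-1\}$, $\{n-t\}$, where $j$ is determined by $(2j+1)2^{p-1} = k-t$. After possibly replacing $f$ by $f+1$ (which preserves $\text{AI}$), I may assume $Y = 0$, so the zero-set of $v_f$ on the chain is exactly $\{t+r\cdot 2^p : 1\le r\le j\}\cup\{n-t\}$.

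Next I would define $g \in \text{SB}_n$ by prescribing its SANF in the spirit of Lemma 3.1 Step 2, but with the shift by $t$ absorbed: set $\lambda_g(\psi) = 1$ precisely when $\psi < k$ and a bit-wise condition built from $t + 2^{p-1}$ (the low $p$ bits of $k$) is satisfied — the natural first candidate being $(t + 2^{p-1}) \preceq \psi$. This makes $\deg(g) < k$ and $g \neq 0$ automatic. The heart of the argument is then to compute $v_g(\omega) = |S_\omega| \bmod 2$ via Lemma 2.1, where $S_\omega$ collects the $\psi$'s with $\lambda_g(\psi)=1$ lying below $\omega$ in the $\preceq$-order, and to show via a parity analysis mirroring the $|T_\ell|$-evenness computation in Lemma 3.1 Step 2 that $\text{supp}(v_g)$ is contained in the chain positions where $v_f$ vanishes. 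Once this is established, $v_g \cdot v_f = 0$ pointwise and therefore $fg = 0$, so $\text{AI}(f) < k$.

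The main technical hurdle is the parity analysis of $|S_\omega|$: the shift by $t$ makes the combinatorics noticeably more delicate than in the Lemma 3.1 argument, since one must track how the bits prescribed by $t$ interact with the free bits of $\psi$ and $\omega$ when enumerating $\psi \preceq \omega$ with $\psi < k$, and the precise bit-condition in the definition of $\lambda_g$ may need refinement to place $\text{supp}(v_g)$ inside the correct zero-set. Excluding the boundary weight $n-t$ from $\text{supp}(v_g)$ is expected to be the most delicate point, and it is here that the hypothesis $t \ne k - 2^{\lfloor \log_2 k\rfloor}$ should be essential — it plays the role analogous to the $\ell \ge 1$ condition in Lemma 3.1 that ensured $|T_\ell|$ was even; permitting $t = k - 2^{\lfloor \log_2 k\rfloor}$ would flip the relevant parity and break the construction.
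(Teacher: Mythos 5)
Your plan has a fatal structural flaw, independent of the unfinished parity analysis: you aim to construct a symmetric $g$ of degree less than $k$ with $\text{WS}(g)$ contained in the chain positions of $\text{A}_p$ where $v_f$ vanishes, but such a $g$ generally does not exist. Since the hypothesis constrains $v_f$ only on $\text{A}_p$, a $g$ built uniformly from $t$, $p$, $k$ must indeed have $\text{WS}(g)\subseteq\text{A}_p$ to guarantee $fg=0$; but the vanishing conditions $v_g(\omega)=0$ for $\omega\notin\text{A}_p$ together with $v_g(t)=v_g(k+(2s+1)2^{p-1})=0$ typically force $g=0$. Concretely, take $n=6$, $k=3$, $t=0\in\text{A}_1$, so the zero positions of $v_f$ on the chain (with your normalization $Y=0$) are $\{2,6\}$; the conditions $v_g(0)=v_g(1)=v_g(3)=0$ already give $\lambda_g(0)=\lambda_g(1)=\lambda_g(2)=0$, i.e.\ $g=0$. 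For the admissible $f$ with $v_f=(1,0,0,1,1,0,0)$ an annihilator does exist, e.g.\ $g=\sigma_1+\sigma_2$, but $\text{WS}(g)=\{1,2,5,6\}\not\subseteq\text{A}_1$: every annihilator must exploit zeros of $v_f$ \emph{outside} $\text{A}_p$, whose location depends on $f$ and is invisible to your construction. No refinement of the bit-condition defining $\lambda_g$ can rescue this.

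The paper avoids explicit construction entirely. By Lemma 2.5 one may assume $\text{wt}(v_f)=k$ (otherwise $\text{AI}(f)<k$ already, or replace $f$ by $f+1$), so the system expressing ``$g$ is a symmetric annihilator of $f$ of degree $<k$'' is square: $k$ equations $\varepsilon_\omega\lambda_g^{\text{T}}=0$, one for each $\omega\in\text{WS}(f)$, in $k$ unknowns. Writing $2^{\ell-1}<k<2^{\ell}$ (the case $k=2^{q}$ being excluded by the hypotheses), the chain guarantees that either $t$ and $t+2^{\ell}$, or $n-t$ and $n-t-2^{\ell}$, both lie in $\text{WS}(f)$; since $i\preceq\omega$ iff $i\preceq\omega+2^{\ell}$ for $0\leq i<k$, the two corresponding equations coincide, the square system is singular, and a nonzero annihilator exists without ever being written down. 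This is also where $t\neq k-2^{\lfloor\log_2 k\rfloor}$ actually enters: it ensures $n-t\neq t+2^{\ell}$, so that $t+2^{\ell}$ and $n-t-2^{\ell}$ are interior chain elements with prescribed values --- not, as you conjectured, to control a parity analogous to $|T_\ell|$ in Lemma 3.1.
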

%\par Remark: Before starting the proof, it should be pointed out the difference between the property which is assumed for the value
%vector of $f$ and the property that has been proved in Corollary
%3.2. For a given $k$, in Theorem 3.2, the values $t$ and $n-t$ such
%that $t\prec'k$ and $t\not=k-2^{\lfloor\log_{2}k\rfloor}$ can occur
%but are excluded in Corollary 3.2. Theorem 3.2 focus on the
%relationship between $v_{f}(\omega)$ with $\omega\in\{t,n-t\}$ and
%$v_{f}(\omega)$ with $\omega\in A_{p}-\{t,n-t\}$. Since
%$A_{\lfloor\log_{2} k\rfloor}$ contains only two elements
%$k-2^{\lfloor\log_{2} k\rfloor}$ and $n-k+2^{\lfloor\log_{2}
%k\rfloor}$, Theorem 3.2 does not consider the case
%$t=k-2^{\lfloor\log_{2} k\rfloor}$.
\begin{proof}
Notice that $\text{A}_p=$ $\{t, t+2^p,$ $\ldots,$ $k-2^{p-1},$
$k+2^{p-1},$ $\ldots,$ $n-t-2^{p-1},$ $n-t\}$. If $k=2^{q}$ for some
$q$, then only one $i$ ($i=0$) satisfies $i\prec'k$ and
$0=k-2^{\lfloor \log_2k \rfloor}$. It contradicts the conditions in
this theorem. Therefore, we have $k\neq 2^q$ for any integer $q$. We
only need to consider $\wt(v_f)=k$ or $k+1$. Otherwise, we have
$AI(f)<k$ by Lemma \ref{lem_suff_AIk}. Without loss of generality,
we assume $\wt(v_f) = k$. Otherwise when $\wt(v_f) = k+1$, we can
replace $f$ by $f+1$ instead.
%Without loss of generality, we assume that $\wt(v_f)\leq k$ to find
%an annihilator of $f$. Otherwise, we could consider $f+1$ instead of
%$f$.

\par  We will prove that
there exists a nonzero symmetric Boolean function $g$ with degree
less than $k$, such that $fg=0$ which implies $AI(f)<0$. Let
$g=\sum_{i=0}^{k-1}\lambda_g(i)\sigma_i$. Notice that $fg=0$ if and
only if for every $v_f(i) = 1$ we have
$$
v_g(i)=\sum_{\substack{j \preceq i\\0 \leq j \leq k -
1}}{\lambda_g(j)}=0
$$by Lemma 2.1. Then, we can get a system of homogeneous linear equations on
variables $\lambda_g(0), \ldots, \lambda_g(k-1)$ with $wt(v_{f})=k$
equations. The number of equations and unknowns of the equation
system are both $k$. In what follows, we will show that there are
two same equations. Thus there must exist a nonzero solution of
$\lambda_g(0), \ldots, \lambda_g(k-1)$,
 which implies the existence of $g$.

% If there is a nonzero solution of $\lambda_g$'s, an
%annihilator with degree less than $k$ is obtained.

%When $\wt(v_f) < k$, there are more unknowns than equations, which
%implies that there must exist a nonzero solution.

%Since $\wt(v_f) = k$, the number of equations equals that of
%unknowns. We will show that there are two same equations, thus there
%must exist a nonzero solution.

Since $k\not=2^{q}$ for any integer $q$, we assume $2^{\ell-1}< k <
2^\ell$. Thus, we have $t<2^{\ell-1}$ and
$\lfloor\log_{2}k\rfloor=\ell-1$. Since $t\not=
k-2^{\lfloor\log_{2}k\rfloor}$, we have $2t\not= n-2^{\ell}$
$\Rightarrow$ $n-t\not= t+2^{\ell}$ $\Rightarrow$ $n-t-2^{\ell}\not=
t$. According to the definition of $A_{p}$, we have $t + 2^\ell, n -
t - 2^\ell\in \text{A}_p$.

For the case $v_f(t) = 1$, since $t+2^{\ell}\in A_{p}$, $t+2^\ell>k$
and $n-t\neq t+2^\ell$, we have $v_f(t+2^\ell)=v_{f}(t) = 1$ by
\eqref{noequal}. Consider the equations
\begin{equation*}v_g(t)=\sum_{\substack{i\preceq t\\
0\leq i<k}}\lambda_{g}(i)=0
\end{equation*}
and
$$\hspace{0.3cm} v_g(t + 2^\ell)=\sum_{\substack{i\preceq t+2^\ell\\
0\leq i<k}}\lambda_{g}(i)=0.$$
%$$v_g(t)=\sum_{\substack{i\preceq t\\
%0\leq i<k}}\lambda_{g}(i)=0\hspace{0.3cm} \text{and}\hspace{0.3cm} v_g(t + 2^\ell)=\sum_{\substack{i\preceq t+2^\ell\\
%\leq i<k}}\lambda_{g}(i)=0.$$
It is easy to see that $i\preceq t$ is equivalent to $i\preceq t +
2^\ell$ for $0 \leq i<k$; thus, the two equations above are exactly
the same.

For the case $v_f(t)=0$, we could prove $v_f(n-t)=v_f(n-t-2^\ell)=1$
similar to the case $v_{f}(t)=1$. It is similar to verify that
equations $v_g(n-t)=0$ and $v_g(n-t-2^\ell)=0$ are exactly the same.

Therefore, the nonzero symmetric annihilator with degree less than
$k$ always exists, and $\text{AI}(f)<k$.
\end{proof}

\vspace{0.3cm}
%There is a slight difference between the $v_{f}(\omega)$ discussed by the following theorem and that of Theorem 3.2. The $v_{f}(\omega)$ in the following theorem include $v_{f}(k-2^{\lfloor\log_{2}k\rfloor})$ and $v_{f}(n-k+2^{\lfloor\log_{2}k\rfloor})$ which have not been considered in Theorem 3.2.
For a given $k$, the values $t$ and $n-t$ such that $t\prec'k$ and
$t\not=k-2^{\lfloor\log_{2}k\rfloor}$ can occur in Theorem 3.2, but
are excluded in Corollary 3.1. Theorem 3.2 focuses on the
relationship between $v_{f}(\omega)$ where $\omega\in\{t,n-t\}$, and
$v_{f}(\omega)$ where $\omega\in A_{p}-\{t,n-t\}$. In the following
theorem, we will consider $v_{f}(k-2^{\lfloor\log_{2}k\rfloor})$ and
$v_{f}(n-k+2^{\lfloor\log_{2}k\rfloor})$, which are excluded in
Theorem 3.2.

\vspace{0.3cm}
\begin{theo}
\label{thm_atmostone} Let $n=2k$ and $f\in \text{SB}_{n}$. If
$AI(f)=k$, then there does \emph{not} exist more than one integer
$i$, such that $i\prec'k$ and $v_{f}(i)=v_{f}(n-i)$.
\end{theo}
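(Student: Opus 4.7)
The plan is a proof by contradiction: suppose $\text{AI}(f) = k$ and that two distinct integers $i_1 \neq i_2$ in $\prec' k$ satisfy $v_f(i_l) = v_f(n - i_l)$ for $l = 1, 2$, and I will exhibit a nonzero symmetric annihilator of $f$ or $f + 1$ of degree below $k$, contradicting $\text{AI}(f) = k$. The case where $k$ is a power of $2$ is vacuous because then $|\{i : i \prec' k\}| = \text{wt}(k) = 1$; so assume $\text{wt}(k) \geq 2$ and set $m = \lfloor \log_2 k \rfloor$, $\ell = m + 1$, so that $2^{\ell - 1} < k < 2^\ell$.

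The engine of the argument is the residue-collapsing trick that drives the proof of Theorem \ref{thm_converse}: for any symmetric $g = \sum_{i = 0}^{k-1} \lambda_g(i) \sigma_i$ of degree less than $k$ and any $j_1, j_2 \in \{0, \ldots, n\}$ with $j_1 \equiv j_2 \pmod{2^\ell}$, the linear forms $v_g(j_1)$ and $v_g(j_2)$ coincide, because every $i < k < 2^\ell$ satisfies $i \preceq j_1 \Leftrightarrow i \preceq j_2$. Hence two indices of $\text{supp}(v_f)$ sharing a residue modulo $2^\ell$ contribute an identical equation to the annihilator system $\{v_g(j) = 0 : j \in \text{supp}(v_f)\}$. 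I will show that every double-$1$ anomaly at $i \prec' k$, i.e., an $i$ with $v_f(i) = v_f(n - i) = 1$, forces such a duplicate. By item (3) of Lemma 3.2, $i$ belongs to a unique $A_p$, namely $p$ with $k_{p-1} = 1$ and $i = k \bmod 2^{p-1}$; under the assumption $\text{AI}(f) = k$, Corollary \ref{cor_antij} forces $v_f$ to equal a constant $c_p$ on the lower half of $A_p - B^k - \{k\}$ and $c_p + 1$ on the upper half. When $i \neq k - 2^m$, a direct check shows $i + 2^\ell$ lies in the upper half of $A_p - B^k$ and $n - i - 2^\ell$ lies in the lower half, so one of the pairs $(i, i + 2^\ell)$ (if $c_p = 0$) or $(n - i - 2^\ell, n - i)$ (if $c_p = 1$) has both entries in $\text{supp}(v_f)$ at a common residue modulo $2^\ell$. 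In the boundary case $i = k - 2^m \in A_\ell$, where $A_\ell - B^k - \{k\}$ is empty and $c_\ell$ is undefined, the pair $(i, n - i) = (k - 2^m, k + 2^m)$ itself shares a residue modulo $2^\ell$ and both entries lie in $\text{supp}(v_f)$ by the double-$1$ hypothesis.

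Moreover, since $i_1 \neq i_2$ lie in distinct $A_{p_1}, A_{p_2}$ by item (3) of Lemma 3.2, and by item (2) any element of $A_p$ has bit $p - 1$ equal to $\overline{k_{p-1}}$ while matching $k$ at positions $0, \ldots, p - 2$, the two duplicate residues produced above differ at bit position $\min(p_1, p_2) - 1$ and are therefore distinct. The case analysis then concludes the proof: if both $i_1, i_2$ are double-$1$ anomalies for $f$, the annihilator system for $f$ has two coincidences at distinct residues, reducing the number of distinct equations to at most $|\text{supp}(v_f)| - 2 \leq k - 1$ via Lemma \ref{lem_suff_AIk}, and yielding a nonzero $g$ with $fg = 0$ and $\deg(g) < k$. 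If both are double-$0$ for $f$, apply the same argument to $f + 1$, whose anomalies are now double-$1$. In the mixed case, $|\text{supp}(v_f)|$ and $|\text{supp}(v_{f+1})|$ both lie in $\{k, k + 1\}$ and sum to $n + 1 = 2k + 1$, so exactly one has size $k$; applying the single-duplicate argument to whichever function that is (using whichever of $i_1, i_2$ is double-$1$ for it) still reduces its system to at most $k - 1$ distinct equations on $k$ unknowns. In every case we obtain $\text{AI}(f) < k$, a contradiction. The main delicate step in carrying this out will be the bookkeeping around the boundary case $i_l = k - 2^m$ and the verification that the two duplicate residues are distinct, but both follow directly from the bit-level descriptions provided by Lemma 3.2.
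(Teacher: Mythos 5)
Your proposal is correct and follows essentially the same route as the paper's proof: the duplicate equation is produced exactly as in the paper, by noting that a double-$1$ anomaly $i\prec'k$ forces two indices of $\mathrm{supp}(v_f)$ differing by $2^{\ell}$ (via Corollary 3.1 inside the relevant $A_p$, with the boundary case $i=k-2^{m}$ handled through $n-i=i+2^{\ell}$), so that the $k\times k$ annihilator system degenerates. The only difference is organizational: the paper normalizes to $\mathrm{wt}(v_f)=k$ and shows by a counting argument that some anomaly must then be double-$1$, whereas you enumerate the double-$1$/double-$0$/mixed combinations (the first two of which are in fact vacuous under Lemma \ref{lem_suff_AIk}) and your verification that two duplicates occur at distinct residues, while correct, is never actually needed since one duplicate suffices.
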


\begin{proof}
When $k = 2^q$ and $q$ is an integer, there is only one $i (i=0)$
satisfying $i\prec'k$. The conclusion is trivial. Therefore, we only
need to consider the case $k\not=2^{q}$ for any integer $q$. By
Lemma \ref{lem_suff_AIk}, we only need to consider $\wt(v_f)=k$ or
$k+1$. Without loss of generality, we assume $\wt(v_f) = k$.
Otherwise, when $\wt(v_f) = k+1$, we can replace $f$ by $f+1$
instead.

Assume to the contrary that there exist more than one $i$ such that
$i\prec'k$ and $v_{f}(i)=v_{f}(n-i)$. We will show the existence of
a nonzero symmetric Boolean function $g$ with degree less than $k$
such that $fg=0$, which is contradicted with $AI(f) = k$.

Let $g=\sum_{i=0}^{k-1}\lambda_g(i)\sigma_i$. Notice that $fg=0$ if
and only if for every $v_f(i) = 1$ we have
$$
v_g(i)=\sum_{\substack{j \preceq i\\0 \leq j \leq k -
1}}{\lambda_g(j)}=0
$$by Lemma 2.1. Then, we can get a system of homogeneous linear equations on
variables $\lambda_g(0), \ldots, \lambda_g(k-1)$ with $wt(v_{f})=k$
equations. Notice the number of equations and unknowns are both $k$.
In what follows, we will show that there are two same equations;
thus, there must exist a nonzero solution of $\lambda_g(0), \ldots,
\lambda_g(k-1)$, which implies the existence of $g$.

%Consider a system of
%homogeneous linear equations on variables
%$\lambda_{g}(0),...,\lambda_{g}(k-1)$. By Lemma \ref{lem_suff_AIk}, we
%only need to consider $\wt(v_f)=k$ or $k+1$.  Without loss of generality, assume $\wt(v_f) = k$ and try to find a
%symmetric annihilator of $f$ with degree less than $k$. Else, when $\wt(v_f) = k+1$, we can replace $f$ by $f+1$ instead.  In this
%case, the numbers of equations and unknowns are equal. We will show
%that there are two same equations, which implies the existence of a nonzero
%solution.

We claim that there exists at least one $i_{1}$ such that
$i_{1}\prec'k$ and $v_{f}(i_{1})=v_{f}(n-i_{1})=1$ under our
assumption that more than one $i \prec' k$ exist s.t.
$v_{f}(i)=v_{f}(n-i)$. Else, suppose $v_{f}(i)=v_{f}(n-i)=0$ for all
$i\prec' k$ and $v_{f}(i)=v_{f}(n-i)$. It is easy to verify that
$\text{wt}(f)<k$, because $v_{f}(i)=v_{f}(n-i)+1$ for other
$i\prec'k$ and $v_{f}(\psi)=v_{f}(n-\psi)+1$ for all $\psi \in
\{0,1,...,n\}-B^{k}-\{k\}$ due to Corollary 3.1. This is
contradicted with $\text{wt}(f)=k$. Thus the existence of $i_{1}$ is
guaranteed. Since $k\not=2^{q}$ for any $q$, we assume
$2^{\ell-1}<k<2^{\ell}$.

%In the same way, we can prove there exists at least one $i_{2}$ such that $i_{2}\prec'k$ and $v_{f}(i_{2})=v_{f}(n-i_{2})=0$. We assume $i_{1}\in A_{p}$.\\
%\par According to
%Corollary 3.2, for any $i\in A_{p}-S^{k}-\{k\}$, we have
%$v_{f}(i)=v_{f}(n-i)+1$. Since $\wt(v_{f})=k$ and more than one
%integer $i\prec'k$ such that $v_{f}(i)=v_{f}(n-i)$, then there must
%exist at least one $i_{1}\prec' k$ such that
%$v_f(i_{1})=v_f(n-i_{1})=1$ and one $i_{2}\prec' k$ such that
%$v_{f}(i_{2})=v_{f}(n-i_{2})=0$. We assume $i_{1}\in A_{p}$.

\vspace{0.3cm} \textbf{Case 1:} If
$i_{1}\not=k-2^{\lfloor\log_{2}k\rfloor}$, assume $i_{1}\in
A^{k}_{p}$. Since $AI(f)=k$, according to Corollary 3.1, we have
\begin{eqnarray}
v_{f}(i_{1}) & = & v_{f}(i_{1} + 2^p)= \cdots = v_{f}(k -
2^{p-1})\nonumber
\\ & = & v_{f}(k+2^{p-1})+1=\cdots=v_{f}(n-i_{1}-2^p)+1 \nonumber
\\ & = & v_{f}(n-i_{1})=1, \label{case_11}
\end{eqnarray}
or
\begin{eqnarray}
v_{f}(i_{1}) & = & v_{f}(i_{1} + 2^p)+1= \cdots = v_{f}(k -
2^{p-1})+1\nonumber
\\ & = & v_{f}(k+2^{p-1})=\cdots=v_{f}(n-i_{1}-2^p) \nonumber
\\ & = & v_{f}(n-i_{1})=1. \label{case_12}
\end{eqnarray}
%Assuming $2^{\ell-1}< k < 2^\ell$, then we have $i_{1}<2^{\ell-1}$
%and $\lfloor\log_{2}k\rfloor=\ell-1$. Since
%$i_{1}\not=k-2^{\lfloor\log_{2}k\rfloor}$, we have $i_{1}\not=
%k-2^{\ell-1}$ $\Rightarrow$ $2i_{1}\not=n-2^{\ell}$ $\Rightarrow$
%$n-i_{1}\not=i_{1}+2^{\ell}$ $\Rightarrow$ $n-i_{1}-2^{\ell}\not=i_{1}$.
By the definition of $A^{k}_{p}$, we have $i_{1}+2^{\ell}$,
$n-i_{1}-2^{\ell}\in A^{k}_{p}$. Since $i_{1}+2^{\ell}>k$ and
$n-i_{1}-2^{\ell}<k$, we have
$v_{f}(n-i_{1})=v_{f}(n-i_{1}-2^{\ell})=1$ for \eqref{case_11} and
$v_{f}(i_{1})=v_{f}(i_{1}+2^{\ell})=1$ for \eqref{case_12}. Then,
similar with the proof in Theorem 3.2, we can prove that they are
two same equations in both cases, i.e., $v_g(n-i_{1})=0$ is
equivalent to $v_g(n-i_{1}-2^\ell)=0$ and $v_g(i_{1})=0$ equivalent
to $v_g(i_{1}+2^\ell)=0$.

\vspace{0.3cm} \textbf{Case 2:} For
$i_{1}=k-2^{\lfloor\log_{2}k\rfloor}$, we have $i_{1}=k-2^{\ell-1}$
$\Rightarrow$ $2i_{1}=n-2^{\ell}$ $\Rightarrow$
$n-i_{1}=2^{\ell}+i_{1}$. Thus, we have
$v_{f}(i_{1})=v_{f}(n-i_{1})=v_{f}(i_{1}+2^{\ell})=1$. Consider the
following two equations,
\begin{eqnarray*}
v_{g}(n-i_{1})
&=&v_{g}(2^{\ell}+i_{1})\\
&=&\sum_{\scriptstyle i\preceq 2^{\ell}+i_{1}\atop\scriptstyle 0\le i<k}\lambda_{g}(i)=0,\\
\end{eqnarray*}
and
\begin{eqnarray*}
v_{g}(i_{1})&=&\sum_{\scriptstyle i\preceq i_{1}\atop\scriptstyle
0\leq i<k}\lambda_{g}(i)=0.  \\
\end{eqnarray*}

For $0\leq i<k$, we have $i\preceq2^{\ell}+i_{1}$ if and only if
$i\preceq i_{1}$. Thus, the above two equations are equivalent.

% To see the last step holds, we need to show
%\begin{equation}
%\label{equ_last1} \sum_{\scriptstyle i\preceq
%2^{\lfloor\log_{2}k\rfloor+1}+i_{1}\atop\scriptstyle i_{1}<i<
%2^{\lfloor\log_{2}k\rfloor+1}}\lambda_{g}(i) = 0
%\end{equation}
%and
%\begin{equation}
%\label{equ_last2} \sum_{\scriptstyle i\preceq
%2^{\lfloor\log_{2}k\rfloor+1}+i_{1}\atop\scriptstyle
%2^{\lfloor\log_{2}k\rfloor+1}\leq i}\lambda_{g}(i) = 0.
%\end{equation}
%Equality \eqref{equ_last1} is true because $i\preceq
%2^{\lfloor\log_{2}k\rfloor+1}+i_{1}$ and $i<
%2^{\lfloor\log_{2}k\rfloor+1}$ $\Rightarrow$ $i\preceq i_{1}$
%$\Rightarrow$ $i \le i_1$ which is contradicted with $i > i_1$. Thus
%no $i$ satisfies the constraints in \eqref{equ_last1}. Equality
%\eqref{equ_last2} is true because $\lambda_{g}(i) = 0$ when $i \ge
%2^{\lfloor \log_2{k} \rfloor} + 1 > k$.

%Let's go back to our proof. Since $i\preceq
%2^{\lfloor\log_{2}k\rfloor+1}+i_{1}$ and $i\leq i_{1}$ is equivalent
%to $i\preceq i_{1}$, we have
%\begin{eqnarray*}
%v_{g}(n-i_{1})&=&\sum_{\scriptstyle i\preceq
%2^{\lfloor\log_{2}k\rfloor+1}+i_{1}\atop\scriptstyle i\leq
%i_{1}}\lambda_{g}(i)  \\
%& = &\sum_{i\preceq i_1}\lambda_{g}(i) \quad = \quad v_{g}(i_1),
%\end{eqnarray*}
%which means equations on $v_{g}(i_1)$ and $v_g(n - i_1)$ are
%equivalent.
%Follow the same argument in the proof of Theorem 3.2, we
%know the annihilator with degree less than $k$ exists, and thus
%$\text{AI}(f)<k$, which is contradicted with $\text{AI}(f) = k$. The
%proof is complete.

%By similar argument of Theorem 3.2, we can prove that there exist
%two same equations.
\par Therefore, the nonzero symmetric annihilator with degree less
than $k$ always exists, which is contradictory to $AI(f)=k$. Thus,
there cannot exist more than one integer $i$, such that $i\prec'k$
and $v_{f}(i)=v_{f}(n-i)$.
\end{proof}

\vspace{0.3cm} For the case one $t$ exists such that $t\prec'k$ and
$v_{f}(t)=v_{f}(n-t)$, there exists another constraint, namely
Theorem 3.4. This theorem is the last necessary condition for
even-variable symmetric Boolean functions to reach maximum algebraic
immunity, which considers all the triples
$(v_{f}(t),v_{f}(k),v_{f}(n-t))$ when $t\prec' k$.

%Remark: Theorem 3.2 and Theorem 3.3 shows the constraints on $v_{f}(\omega)$ with $\omega\in B^{k}$. For the case no $t$ exists such that $t\prec'k$ and $v_{f}(t)=v_{f}(n-t)$, all constraints have been shown.
%But for the case one $t$ exists such that $t\prec'k$ and $v_{f}(t)=v_{f}(n-t)$, there exists an another constraint, which is presented in Theorem 3.4. Theorem 3.4 shows the relationship between $v_{f}(\omega)$ and $v_{f}(t)$, $v_{f}(n-t)$, where $t\prec'k$ and $v_{f}(t)=v_{f}(n-t)$.

\vspace{0.3cm}
\begin{theo}
\label{thm_vfk} Let $n=2k$, $f\in \text{SB}_{n}$. If
$\text{AI}(f)=k$, then for any $t\prec' k$,
$(v_{f}(t),v_{f}(k),v_{f}(n-t))\not\in\{(0,0,0),(1,1,1)\}$.
\end{theo}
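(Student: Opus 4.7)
The plan is to derive the contradiction by direct weight counting, combining Lemma 2.5 (which pins $\text{wt}(v_f)\in\{k,k+1\}$), Corollary 3.1 (the necessary complementarity of pairs in the complement of $B^k\cup\{k\}$), and Theorem 3.3 (at most one pair in $B^k$ carries equal $v_f$-values). Rather than explicitly constructing an annihilator, I would argue that any forbidden triple makes $\text{wt}(v_f)$ fall outside $\{k,k+1\}$, which immediately violates Lemma 2.5.

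Assume for contradiction that some $t\prec' k$ realizes $(v_f(t),v_f(k),v_f(n-t))\in\{(0,0,0),(1,1,1)\}$. Then $v_f(t)=v_f(n-t)$, so $(t,n-t)$ is a ``collision pair'' inside $B^k$. By Theorem 3.3 this must be the unique such collision pair in $B^k$; the remaining $\text{wt}(k)-1$ pairs $(i,n-i)$ with $i\prec' k$ all satisfy $v_f(i)+v_f(n-i)=1$.

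The main step is to partition $\{0,1,\ldots,n\}$ into the three blocks $\{k\}$, $B^k$, and $C:=\{0,1,\ldots,n\}-B^k-\{k\}$, and add up their contributions to $\text{wt}(v_f)$. The complement $C$ decomposes as $\bigcup_{p\ge 1}(A_p^k-B^k-\{k\})$ and is symmetric under $\omega\mapsto n-\omega$ (since both $B^k$ and $\{k\}$ are). Applying Corollary 3.1 with $i=j$ shows every pair $(\omega,n-\omega)\subset C$ with $\omega<k$ contributes $1$, so $C$ contributes $|C|/2=k-\text{wt}(k)$. The block $B^k$ contributes $(\text{wt}(k)-1)+2v_f(t)$ (complementary pairs plus the collision pair), and $\{k\}$ contributes $v_f(k)$. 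Summing, $\text{wt}(v_f)=k-1+2v_f(t)+v_f(k)$, which evaluates to $k-1$ on the triple $(0,0,0)$ and to $k+2$ on $(1,1,1)$; both fall outside $\{k,k+1\}$, contradicting Lemma 2.5.

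The only real obstacle is bookkeeping: checking $|B^k|=2\text{wt}(k)$ and $|C|=2(k-\text{wt}(k))$, and verifying that every element of $C$ sits in some $A_p^k-B^k-\{k\}$ where Corollary 3.1 applies. Both reduce to the symmetry and partition properties of the $A_p^k$ recorded in Lemma 3.2, so no genuine analytic difficulty arises.
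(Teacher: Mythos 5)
Your proposal is correct and follows essentially the same route as the paper: the paper's own proof of this theorem is precisely the weight-counting argument combining Corollary 3.1 on $\{0,\ldots,n\}-\text{B}^k-\{k\}$, Theorem 3.3 on $\text{B}^k$, and Lemma 2.5, concluding $\wt(v_f)=k-1$ or $k+2$ in the two forbidden cases. You have simply made explicit the bookkeeping ($|C|/2=k-\wt(k)$, etc.) that the paper leaves implicit.
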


\begin{proof}
According to Corollary 3.1, for any $i$ and any $p$ such that $i\in
A_{p}-B^{k}-\{k\}$, we have $v_{f}(i)=v_{f}(n-i)+1$.
 By Theorem 3.3, for all elements of $\text{B}^{k}$, at most one $t$ could exist such that $t\prec'k$ and $v_{f}(t)=v_{f}(n-t)$.
 Therefore, $\wt(v_{f}) = k-1$ if $(v_{f}(t),v_{f}(k),v_{f}(n-t)) = (0, 0, 0)$ for some $t \prec' k$ and $\wt(v_{f}) = k+2$ if $(v_{f}(t),v_{f}(k),v_{f}(n-t)) = (1, 1, 1)$ for some $t \prec' k$.
 By Lemma \ref{lem_suff_AIk}, we know either case is impossible.
\end{proof}
\vspace{0.3cm}

In the end of this section, we take out all even-variable symmetric
Boolean functions satisfying all the necessary conditions to achieve
maximum algebraic immunity into the following two classes.

\vspace{0.3cm}
\begin{defi}
\label{def_class} Define two classes of symmetric Boolean functions
on $n$ variable, $n = 2k$, as follows.
\begin{itemize}
\item[]\textbf{Class 1}: For any $\text{A}_p$, $1\leq p\leq\lfloor \log_{2}n\rfloor $, and $i,j\in \text{A}_p$,
\begin{equation*}v_{f}(i)=
\begin{cases}
v_{f}(j)+1,\hspace{0.1cm}{\rm~if~}\hspace{0.1cm}i<k<j\textrm{ or }j<k<i,\\
v_{f}(j),\hspace{0.6cm}{\rm~otherwise}.
\end{cases}
\end{equation*}

\item[]\textbf{Class 2}: There is a function $g$ contained in Class 1 and an integer $t\prec'
k$ such that
\begin{itemize}
\item[]\hspace{1.7cm} $v_f=v_g+e_{t}+\delta e_{k}$, or
\item[]\hspace{1.7cm} $v_f=v_g+e_{n-t}+\delta'e_{k}$,
\end{itemize}
where
\begin{equation*}\delta=v_g(t)+v_g(k),
\end{equation*}
and
\begin{equation*}\delta'=v_g(n-t)+v_g(k).
\end{equation*}
\end{itemize}
\end{defi}
\vspace{0.3cm}

If there is no $t$ such that $t\prec'k$ and $v_{f}(t)=v_{f}(n-t)$,
then $f$ is contained in Class 1. If such $t$ exists, $f$ is
contained in Class 2. Class 2 is defined based on Class 1.

 \vspace{0.3cm}
\begin{theo} Suppose $f \in \text{SB}_n$, $n = 2k$. If $\text{AI}(f) = k$, then $f$ is in Class 1 or 2.
\end{theo}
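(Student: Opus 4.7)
By Lemma~\ref{lem_suff_AIk} I have $\text{wt}(v_f)\in\{k,k+1\}$, and after possibly replacing $f$ by $f+1$ (affine equivalent) I may take $\text{wt}(v_f)=k$. Theorem~\ref{thm_atmostone} furnishes at most one \emph{exceptional} index $t^*\prec' k$ satisfying $v_f(t^*)=v_f(n-t^*)$. The plan is to bifurcate on whether such a $t^*$ exists and, in each branch, read the conclusion off from Corollary~\ref{cor_antij}, Theorem~\ref{thm_converse}, and Theorem~\ref{thm_vfk}.

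\textbf{Case 1 (no exceptional index).} Fix any $\text{A}_p^k$ with $1\le p\le \lfloor\log_2 n\rfloor$. Corollary~\ref{cor_antij} forces $v_f$ to be constant equal to some $a$ on $\text{A}_p^k-\text{B}^k-\{k\}$ to the left of $k$ and equal to $a+1$ on the right. When $\text{A}_p^k\cap\text{B}^k=\{t,n-t\}$ with $t\prec' k$, the assumption $v_f(t)\ne v_f(n-t)$ leaves only $(v_f(t),v_f(n-t))=(a,a+1)$ or $(a+1,a)$. In the second option, combined with the constant-on-each-side values from Corollary~\ref{cor_antij}, the alternating chain required by Theorem~\ref{thm_converse} is met, contradicting $\text{AI}(f)=k$ unless $t=k-2^{\lfloor\log_2 k\rfloor}$; in that exceptional index the set $\text{A}_p^k=\{t,n-t\}$ and the Class~1 rule reduces to $v_f(t)=v_f(n-t)+1$, already holding. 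Hence every $\text{A}_p^k$ satisfies the Class~1 rule and $f\in$~Class~1.

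\textbf{Case 2 (a unique exceptional $t^*\in\text{A}_{p^*}^k$).} Let $a$ denote the common left-side value of $v_f$ on $\text{A}_{p^*}^k-\text{B}^k-\{k\}$ (vacuous only when $t^*=k-2^{\lfloor\log_2 k\rfloor}$). By Theorem~\ref{thm_vfk} applied to $t^*$, $v_f(k)=v_f(t^*)+1$. I would build $g$ by a single flip: if $v_f(t^*)=a$, set $v_g(n-t^*):=v_f(n-t^*)+1$ and otherwise (and in the borderline situation) set $v_g(t^*):=v_f(t^*)+1$, leaving every other coordinate, including $v_g(k):=v_f(k)$, intact. For each $\text{A}_p^k\ne\text{A}_{p^*}^k$, the values of $v_g$ coincide with those of $v_f$ and no exceptional index is present, so the Case~1 analysis (applied to $v_f$'s values) already certifies Class~1 there; the flip was designed so Class~1 holds on $\text{A}_{p^*}^k$ as well. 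A one-line computation using $v_f(k)=v_f(t^*)+1$ then shows that the $\delta$ (respectively $\delta'$) of Definition~\ref{def_class} is $0$, so the identity $v_f=v_g+e_{t^*}+\delta e_k$ (or $v_f=v_g+e_{n-t^*}+\delta' e_k$) holds and $f\in$~Class~2.

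\textbf{Main obstacle.} The delicate subcase is $t^*=k-2^{\lfloor\log_2 k\rfloor}$, which covers the entire picture when $k$ itself is a power of $2$. Here $\text{A}_{p^*}^k=\{t^*,n-t^*\}$ is disjoint from $\text{A}_{p^*}^k-\text{B}^k-\{k\}$, so neither Corollary~\ref{cor_antij} nor Theorem~\ref{thm_converse} gives information on this set, and the entire burden falls on Theorem~\ref{thm_vfk} together with a careful choice between the two Class~2 formulas. A secondary bookkeeping task is confirming that the $\delta,\delta'$ produced by Definition~\ref{def_class} coincide with what my construction of $g$ off $\{t^*,n-t^*,k\}$ together with $v_g(k):=v_f(k)$ forces; this ultimately reduces to the single equality $v_f(k)=v_f(t^*)+1$ supplied by Theorem~\ref{thm_vfk}.
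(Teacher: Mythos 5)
Your proposal is correct and follows essentially the same route as the paper: split on whether an exceptional $t\prec' k$ with $v_f(t)=v_f(n-t)$ exists, use Corollary~\ref{cor_antij} and Theorem~\ref{thm_converse} to get Class~1 in the first case, and in the second case use Theorems~\ref{thm_atmostone} and~\ref{thm_vfk} together with a one-coordinate flip to exhibit a Class~1 function $g$ with $\delta=0$ (resp.\ $\delta'=0$). If anything, you are slightly more careful than the paper, which asserts that both $v_f+e_t$ and $v_f+e_{n-t}$ land in Class~1, whereas in general only one of the two flips does — your choice of which coordinate to flip according to whether $v_f(t^*)$ equals the left-side value $a$ handles this correctly, and the "or" in Definition~\ref{def_class} makes one representation sufficient.
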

\begin{proof}
If for any $t\prec'k$, $v_{f}(t)\not =v_{f}(n-t)$, we will prove $f$
is in Class 1. By Theorem \ref{thm_antij} and Theorem
\ref{thm_converse}, we know that $v_f(i) = v_f(j) + 1$ for all $i, j
\in A_p^k$ and $i < k < j$, which
 satisfies the definition of Class 1 functions.

 If there is some $t\prec'k$, $v_{f}(t) =v_{f}(n-t)$.
 By Theorem \ref{thm_atmostone}, we know at most one such $t$
 can exist.

 When $v_{f}(t) =v_{f}(n-t)=0$, by Theorem \ref{thm_vfk},
 we know $v_f(k) = 1$. Let $v_{g_1} = v_f + e_t$ and
 $v_{g_2} = v_f + e_{n-t}$. By Theorem \ref{thm_antij}
 and Theorem \ref{thm_converse}, $g_1, g_2$ are in Class 1 and
 $v_f = v_g + e_t + \delta e_k$ and
 $v_f = v_g + e_{n-t} + \delta' e_k$,
 where $\delta, \delta'$ are in Definition \ref{def_class}.
 When $v_{f}(t) =v_{f}(n-t)=1$, the proof is the same.
\end{proof}
\vspace{0.3cm}

Classes 1 and 2 consist of all functions satisfying the necessary
conditions to reach maximum algebraic immunity. In the following
sequel, we will prove that they do reach maximum AI, i.e., the
necessary conditions are sufficient.

\section{Functions in Class 1 Have Maximum Algebraic Immunity}
\vspace{0.3cm} Given a positive integer $k=(k_m, \ldots, k_1,
k_0)_2$ and any nonnegative integer $i$, we denote the vector
$$ (\varepsilon_{i,0}, \ldots, \varepsilon_{i,k-2}, \varepsilon_{i,k-1})\in \text{F}_2^k$$
by $\varepsilon_{i}^{k}$ or simply $\varepsilon_i$ if there is no
confusion, where
\begin{equation*}\varepsilon_{i,j}=
\begin{cases}
1,\hspace{0.1cm}\text{if}\hspace{0.1cm}j\preceq i,\\
0,\hspace{0.1cm}\text{otherwise}.\\
\end{cases}
\end{equation*}
Equivalently,
$$
 \varepsilon_i = \sum_{\substack{j \preceq i\\0
\leq j \leq k - 1}}{e_j}.
$$
Furthermore, the inverse representation is easy to obtain, namely,
\begin{equation}
\label{equm} e_i=\sum_{\substack{j \preceq i\\0 \leq j \leq
k-1}}{\varepsilon_j}.
\end{equation}
Therefore, $\{\varepsilon_0,$ $\varepsilon_1,$ $\ldots,$
$\varepsilon_{k-1}\}$ is a basis of $\text{F}_2^k$. Moreover,
$\{\varepsilon_{k+1},$ $\varepsilon_{k+2},$ $\ldots,$
$\varepsilon_{2k}\}$ is also a basis of $\text{F}_2^k$, as the
following lemma states.\vspace{0.3cm}
\begin{lem}
\label{lem_bases} $\{\varepsilon_0,$ $\varepsilon_1,$ $\ldots,$
$\varepsilon_{k-1}\}$ and $\{\varepsilon_{k+1},$
$\varepsilon_{k+2},$ $\ldots,$ $\varepsilon_{2k}\}$ are two bases of
$\text{F}_2^k$.
\begin{proof}
By (\ref{equm}), we claim
$\{\varepsilon_{0},\varepsilon_{1},...,\varepsilon_{k-1}\}$ is a
basis of $\text{F}^{k}_{2}$. For
$\{\varepsilon_{k+1},\varepsilon_{k+2},...,\varepsilon_{2k}\}$,
let's consider a system of homogeneous equation on variables
$x_{0},x_{1},...,x_{k-1}$:
$$\left( \begin{array} {c}\varepsilon_{k+1} \\
 \varepsilon_{k+2}  \\
 \vdots  \\
 \varepsilon_{2k}   \\
 \end{array}\right)X^\text{T}=0,$$
where $X=(x_{0},x_{1},...,x_{k-1})\in \text{F}^{k}_{2}$. We assume
that this equation system has a nonzero solution
$\lambda$=($\lambda_0,$ $\lambda_1,$ $\ldots,$ $\lambda_{k-1})\in
\text{F}_2^k$. Let $g(x)=\sum^{k-1}_{i=0}\lambda_{i}\sigma_{i}\in
\text{SB}_{n}$, then
$\lambda_{g}=(\lambda_{0},\lambda_{1},...,\lambda_{k-1},0,0,...,0)\in
\text{F}^{2k+1}_{2}.$ According to the assumption, for $k+1\leq
i\leq 2k$,
\begin{eqnarray*}
v_{g}(i)=\sum_{j\preceq i}\lambda_{g}(j)=\sum_{\scriptstyle j\preceq
i\atop\scriptstyle 0\leq j\leq k}\lambda_{j} =\varepsilon
_{i}\lambda^{T}=0,
\end{eqnarray*}
which means $v_{g}(i)=0$ holds for $k+1\leq i\leq 2k$. Let $f\in
\text{SB}_{n}$ be the function in Lemma 2.3. Thus, $fg=0$ and
$\deg(g)<k$.
\par However, by Lemma 2.3, $\text{AI}(f)=k$. Therefore, we have $g=0$,
so the above system can have only one solution $X=0$. Thus,
$\{\varepsilon_{k+1},$ $\varepsilon_{k+2},$ $\ldots,$
$\varepsilon_{2k}\}$ is a basis of $\text{F}_2^k$.
\end{proof}
\end{lem}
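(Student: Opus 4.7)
The plan is to handle the two assertions separately, since the first is essentially bookkeeping about the poset $\preceq$, while the second requires linking the linear independence to the algebraic immunity of the majority function.

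For the set $\{\varepsilon_0,\ldots,\varepsilon_{k-1}\}$, the identity \eqref{equm} already gives the inverse change of basis: every standard vector $e_i$ is written as an explicit $\mathbb{F}_2$-linear combination of the $\varepsilon_j$'s. Since a spanning set of $k$ vectors in a $k$-dimensional space is automatically a basis, this case is immediate. (Conceptually, this is just Möbius inversion on the divisibility lattice under $\preceq$, but it is stated cleanly in the paper.)

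For the set $\{\varepsilon_{k+1},\ldots,\varepsilon_{2k}\}$ I again only need to prove linear independence, since the cardinality is already $k$. The idea is to argue by contradiction: suppose $\lambda=(\lambda_0,\ldots,\lambda_{k-1})\in\mathbb{F}_2^k$ is a nonzero vector with $\sum_{i=0}^{k-1}\lambda_i \varepsilon_{k+1+i,j}\cdot(\textrm{something})=0$ — more cleanly, suppose the matrix whose rows are $\varepsilon_{k+1},\ldots,\varepsilon_{2k}$ has a nonzero vector $\lambda$ in its kernel. Build the symmetric Boolean function $g=\sum_{i=0}^{k-1}\lambda_i\sigma_i\in\text{SB}_n$. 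Then by Lemma 2.1, the kernel condition is exactly the statement $v_g(i)=0$ for every $k+1\le i\le 2k$. This would force $g$ to annihilate the majority function $G_n$ of Lemma 2.3, because $\mathrm{supp}(G_n)$ consists precisely of the weight levels $i>k$.

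Since $\deg(g)\le k-1<k$, this contradicts $\text{AI}(G_n)=k$ from Lemma 2.3. Therefore the only solution is $\lambda=0$, giving the required linear independence and hence that the second set is a basis. The main obstacle — and the cleverness of the argument — is the second part: translating a purely linear-algebraic kernel statement into the existence of a low-degree annihilator of a specific function whose algebraic immunity is already known. The first part is routine, but the second part is where the preceding theory about the majority function has to be invoked; without Lemma 2.3, the linear independence of $\varepsilon_{k+1},\ldots,\varepsilon_{2k}$ would require a direct (and much less transparent) combinatorial argument.
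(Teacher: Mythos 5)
Your proposal is correct and follows essentially the same route as the paper: the first set is handled via the inversion formula \eqref{equm}, and the second by turning a nonzero kernel vector $\lambda$ of the matrix with rows $\varepsilon_{k+1},\ldots,\varepsilon_{2k}$ into a nonzero symmetric annihilator $g=\sum_i\lambda_i\sigma_i$ of the majority function, contradicting Lemma 2.3. No gaps.
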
\vspace{0.3cm}

For any $0\leq i\leq \lfloor\log_2k\rfloor$, let
$$
\text{U}_i = \{\varepsilon_j\mid j\in\text{A}_{i+1}^k, 0 \leq j \leq
k-1\},
$$
$$ \text{V}_i = \{\varepsilon_j\mid j\in\text{A}_{i+1}^k, k+1\leq j\leq 2k
\},
$$
and
$$
\text{W}_i\in \{\text{U}_{i}, \text{V}_{i}\}.
$$

\vspace{0.3cm}
\begin{lem} $\text{U}_0$ or $\text{V}_0$, union $\text{U}_1$ or $\text{V}_1$, $\ldots$, union
$\text{U}_{\lfloor\log_2{k}\rfloor}$ or
$\text{V}_{\lfloor\log_2{k}\rfloor}$, denoted by
$\bigcup_{i=0}^{\lfloor\log_2{k}\rfloor}\text{W}_{i}$, is a basis of
$\text{F}_2^k$.
\begin{proof}
First, we prove that all vectors in $\text{V}_p$ can be written as
linear combinations of vectors in $\bigcup_{i=0}^{p}{\text{U}_i}$,
i.e., $\text{V}_p \subseteq \spa(\bigcup_{i=0}^{p}{\text{U}_i})$.
Take an arbitrary vector in $\text{V}_p$, denoted by
$\varepsilon_t$, for some $t\in \text{A}_{p+1}$ and $k+1\leq t\leq
2k$. Then, $t=(*,$ $\overline{k_p},$ $\ldots,$ $k_1,$ $k_0)_2$ by
Lemma 3.2, where $*$ is a binary string of arbitrary length. We can
expand $\varepsilon_t$ as follows:
\begin{eqnarray*}
\varepsilon_t & = & \sum_{\substack{i\preceq t\\0 \leq i\leq
k-1}}{e_i}
 =\sum_{\substack{i\preceq t\\0 \leq i\leq k - 1}}{\sum_{j
\preceq i}{\varepsilon_j}} \\
& = & \sum_{\substack{j\preceq t\\0 \leq j\leq k - 1}}{
\varepsilon_j\sum_{\substack{j\preceq i\preceq t\\0 \leq i\leq
k-1}}{1}}.
\end{eqnarray*}
Therefore, $\varepsilon_t$ can be written as a linear combination of
vectors in $\bigcup_{i=0}^{\lfloor\log_2{k}\rfloor}\text{U}_i$. For
any $\varepsilon_j\not\in\bigcup_{i=0}^{p}{U_i}$, i.e., $j=(*, k_p,
\ldots, k_1, k_0)_2$ and $j\leq k-1$, we calculate the coefficient
of $\varepsilon_j$, which is
\begin{equation}
\label{equ_sum1} \sum_{\substack{j \preceq i \preceq t\\0 \leq i
\leq k-1}}{1}=\sum_{\substack{(*, k_p, \ldots, k_1, k_0)_2\preceq i
\preceq (*, \overline{k_p}, \ldots, k_1, k_0)_2\\0 \leq i\leq
k-1}}{1}.
\end{equation}
When $k_p = 1$, there is no $i$ that satisfies the constraints;
thus, equation \eqref{equ_sum1} is $0$. When $k_p = 0$, if there is
an $i = (*, 0, i_{p-1}, \ldots, i_2, i_1, i_0)_2$ that satisfies
constraints $(*, k_p, \ldots, k_1, k_0)_2 \preceq i \preceq (*,
\overline{k_p}, \ldots, k_1, k_0)_2$ and $i\leq k-1$, it's not hard
to see $i+2^p=(*,1, i_{p-1}, \ldots, i_2, i_1, i_0)_2$ also
satisfies the above constraints and vice versa. Therefore, all $1$s
counted in equation \eqref{equ_sum1} are in pairs; thus, equation
\eqref{equ_sum1} is $0$. Since all $\varepsilon_j\not\in\bigcup_{i =
0}^{p}{\text{U}_i}$ will not exist in the expansion of
$\varepsilon_t\in\text{V}_p$, we conclude that $\text{V}_p\subseteq
\spa(\bigcup_{i = 0}^{p}{\text{U}_i})$.

Second, we use math induction to prove that the vector space spanned
by $\bigcup_{i=0}^{p}{\text{U}_i}$ is that spanned by
$\bigcup_{i=0}^{p}\text{W}_{i}$, for $p=0$, 1, $\ldots$, $\lfloor
\log_2{k}\rfloor$. The induction parameter is $p$.

\textbf{Basis:} We claim that $\spa(\text{U}_0) = \spa(\text{V}_0)$.
\par By Lemma 4.1, there is no linear dependence in
$\text{U}_{0}$ and $\text{V}_{0}$, so
$\dim\spa(\text{U}_{0})=|\text{U}_{0}|=|\text{V}_{0}|=\dim\spa(\text{V}_{0})$.
Having considered that $\text{V}_{0}\subseteq \spa(\text{U}_{0})$
and both $\text{U}_{0}$ and $\text{V}_{0}$ are finite, we claim
$\spa(\text{U}_{0})=\spa(\text{V}_{0})$.

%Since $V_0$ $\subseteq$ $\spa(\text{U}_0)$, and by Lemma
%\ref{lem_bases}, there is no linear dependence in $\text{U}_0$ or
%$\text{V}_0$ and $|\text{U}_0|$=$|\text{V}_0|$, we have $\dim
%\spa(\text{U}_0)$=$\dim\spa(\text{V}_0)$. Because both $\text{U}_0$
%and $\text{V}_0$ are finite, we have
%$\spa(\text{U}_0)$=$\spa(\text{V}_0)$.

\textbf{Induction:} Assume it is true for $p=0, 1, \ldots, q-1$.
Claim it is also true for $p=q$.

Since $\text{V}_q \subseteq\spa(\bigcup_{i=0}^{q}{\text{U}_i})$, we
have
\begin{eqnarray*}
\spa(\bigcup_{i=0}^{q}{\text{U}_i}) & = & \spa(\bigcup_{i =
0}^{q}{\text{U}_i}\cup
\text{V}_q) \\
& = & \spa(\bigcup_{i=0}^{q-1}{\text{U}_i}\cup\text{U}_q \cup\text{V}_q) \\
& = & \spa(\bigcup_{i=0}^{q-1}{\text{V}_i}\cup\text{U}_q \cup\text{V}_q) \\
& \supseteq & \spa(\bigcup_{i = 0}^{q}{\text{V}_i})
\end{eqnarray*}
Notice that
$\bigcup^{q}_{i=0}\text{U}_{i}\subseteq\{\varepsilon_{0},...,\varepsilon_{k-1}\}$
and
$\bigcup^{q}_{i=0}\text{V}_{i}\subseteq\{\varepsilon_{k+1},...,\varepsilon_{n-1}\}$
due to the definition of $\text{U}_{i}$ and $\text{V}_{i}$. By Lemma
4.1, there is no linear dependence in
$\bigcup^{q}_{i=0}\text{U}_{i}$ and $\bigcup^{q}_{i=0}\text{V}_{i}$
, which means $\dim \spa (\bigcup_{i =
0}^{q}{\text{U}_i})=\sum^{q}_{i=0}|\text{U}_{i}| $
$=\sum^{q}_{i=0}|\text{V}_{i}|=$ $\dim$ $ \spa$ $( \bigcup_{i =
0}^{q}{\text{V}_i})$. Thus, we have $\spa(\bigcup_{i =
0}^{q}{\text{U}_i})$ $= \spa(\bigcup_{i = 0}^{q}{\text{V}_i})$
$=\spa(\bigcup_{i = 0}^{q-1}\text{W}_{i}\cup \text{U}_q)$
$=\spa(\bigcup_{i=0}^{q-1}\text{W}_{i}\cup \text{V}_q)$, which
completes the induction.

%. Then we have $\spa(\bigcup_{i = 0}^{q}{\text{U}_i})$ $=
%\spa(\bigcup_{i = 0}^{q}{\text{V}_i})$ $=\spa(\bigcup_{i =
%0}^{q-1}\text{W}_{i}\cup \text{U}_q)$
%$=\spa(\bigcup_{i=0}^{q-1}\text{W}_{i}\cup \text{V}_q)$, which
%completes the induction.

Therefore, $\spa( \bigcup_{i = 0}^{\lfloor \log_2{k}
\rfloor}\text{W}_{i})=\spa(\bigcup_{i=0}^{\lfloor \log_2{k}
\rfloor}{U_i})$. By Lemma \ref{lem_bases}, we know $\bigcup_{i=
0}^{\lfloor\log_2{k}\rfloor}{U_i}$ is a basis of $\text{F}_2^k$.
Therefore, $\bigcup_{i=0}^{\lfloor\log_2{k}\rfloor}\text{W}_{i}$ is
also a basis of $\text{F}_2^k$.
\end{proof}
\end{lem}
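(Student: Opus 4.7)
My plan is to prove the lemma in two main stages: first establish the containment $V_p \subseteq \spa(\bigcup_{i=0}^{p} U_i)$ for every $p$, and then use induction on $p$ to show $\spa(\bigcup_{i=0}^{p} U_i) = \spa(\bigcup_{i=0}^{p} W_i)$ for any choice of $W_i\in\{U_i,V_i\}$. Combined with Lemma \ref{lem_bases}, which tells me both $\{\varepsilon_0, \ldots, \varepsilon_{k-1}\}=\bigcup_{i=0}^{\lfloor\log_2 k\rfloor}U_i$ and $\{\varepsilon_{k+1}, \ldots, \varepsilon_{2k}\}=\bigcup_{i=0}^{\lfloor\log_2 k\rfloor}V_i$ are bases of $\text{F}_2^k$, this will immediately give the result.

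For the first stage, I take any $\varepsilon_t\in V_p$, so $t\in A_{p+1}^k$ with $k+1\le t\le 2k$. By part 2 of Lemma 3.2, the last $p+1$ bits of $t$ are exactly $\overline{k_p}, k_{p-1}, \ldots, k_0$. Using the relation $e_i=\sum_{j\preceq i,\,0\le j\le k-1}\varepsilon_j$, I expand
\[
\varepsilon_t=\sum_{\substack{i\preceq t\\0\le i\le k-1}}e_i=\sum_{\substack{j\preceq t\\0\le j\le k-1}}\varepsilon_j\Bigl(\sum_{\substack{j\preceq i\preceq t\\0\le i\le k-1}}1\Bigr).
\]
For $\varepsilon_j\notin\bigcup_{i=0}^{p}U_i$ with $0\le j\le k-1$, the last $p+1$ bits of $j$ must be $k_p, k_{p-1}, \ldots, k_0$ (else either $j\preceq t$ already fails or $j\in U_i$ for some $i\le p$). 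The counting sum then becomes
\[
\sum_{\substack{(*,k_p,\ldots,k_0)_2\preceq i\preceq(*,\overline{k_p},\ldots,k_0)_2\\0\le i\le k-1}}1,
\]
which vanishes: if $k_p=1$ no $i$ satisfies both $\preceq$-constraints simultaneously, and if $k_p=0$ the $i$'s pair up by flipping the $2^p$ bit (the pairing preserves the constraint $i\le k-1$ since toggling the $2^p$ bit changes $i$ only in a position that is not pinned). So the expansion of $\varepsilon_t$ is supported on $\bigcup_{i=0}^{p}U_i$, giving $V_p\subseteq\spa(\bigcup_{i=0}^{p}U_i)$.

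For the induction stage, the base case $p=0$ follows because $U_0$ and $V_0$ both have the same cardinality (they sit inside the bases from Lemma \ref{lem_bases} and are linearly independent), and $V_0\subseteq\spa(U_0)$ already, so the two spans coincide. For the induction step, assuming the claim for indices less than $q$, I have
\[
\spa\Bigl(\bigcup_{i=0}^{q}U_i\Bigr)=\spa\Bigl(\bigcup_{i=0}^{q}U_i\cup V_q\Bigr)=\spa\Bigl(\bigcup_{i=0}^{q-1}W_i\cup U_q\cup V_q\Bigr),
\]
using that $V_q\subseteq\spa(\bigcup_{i=0}^{q}U_i)$ and the induction hypothesis. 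A dimension count via Lemma \ref{lem_bases} (since $\bigcup_{i=0}^q U_i$ and $\bigcup_{i=0}^q V_i$ are each linearly independent subsets of the same size) forces equality with $\spa(\bigcup_{i=0}^{q-1}W_i\cup U_q)$ and with $\spa(\bigcup_{i=0}^{q-1}W_i\cup V_q)$, completing the induction.

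The main obstacle is the parity/pairing argument that kills the coefficients of $\varepsilon_j\notin\bigcup_{i=0}^{p}U_i$; everything else is bookkeeping once the characterization of $A_{p+1}^k$ via the last $p+1$ bits of $k$ (Lemma 3.2) is in hand. The induction is then a clean dimension argument leveraging Lemma \ref{lem_bases}.
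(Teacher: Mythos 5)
Your proposal is correct and follows essentially the same route as the paper's own proof: the same expansion of $\varepsilon_t$ via $e_i=\sum_{j\preceq i}\varepsilon_j$ with the parity/pairing argument killing the coefficients of $\varepsilon_j\notin\bigcup_{i=0}^{p}U_i$, followed by the same induction on $p$ using dimension counting against Lemma \ref{lem_bases}. No substantive differences to report.
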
\vspace{0.3cm}

In Theorem 4.1, we consider the symmetric annihilators of Boolean
functions in Class 1. We show that all Boolean functions in Class 1
have no symmetric annihilator with degree less than $k$. In Theorem
4.2, we show that all Boolean functions in Class 1 have maximum
$\textrm{AI}$.\vspace{0.3cm}

\begin{theo}
\label{thm_c1_noan} Let $n=2k$ and $f\in \text{SB}_n$. If
$v_f(i)=v_f(j)+1$, for any $i, j\in\text{A}_t^k$ with $0\leq
i<k<j\leq 2k$ and any $1\leq t\leq \lfloor\log_2n\rfloor$, then
there does not exist any nonzero $n$-variable symmetric Boolean
function $g$ with degree less than $k$, such that $fg=0$ or
$(f+1)g=0$.
\begin{proof}  Let $g(x)=\sum_{0\leq
i<k}\lambda_{i}\sigma_{i}\in \text{SB}_n$ and
$\lambda$=($\lambda_0,$ $\lambda_1,$ $\ldots,$ $\lambda_{k-1})\in
\text{F}_2^k$. If $fg=0$, then
$v_g(i)=\varepsilon_i\lambda^\text{T}=0$ holds when $v_f(i)=1$.
According to the condition of this theorem that $v_f(i)=v_f(j)+1$,
for any $i, j \in \text{A}_t$ with $0 \leq i<k<j \leq 2k$ and any
$1\leq t\leq \lfloor\log_2n\rfloor$, we can obtain a system of
homogeneous linear equations on variables $\lambda_{0}, \ldots,
\lambda_{k-1}$ of the form
$$\left( \begin{array} {c}\alpha_{0} \\
 \alpha_{1}  \\
 \vdots  \\
 \alpha_{k-1}  \\
 \beta   \\
 \end{array}\right)\lambda^\text{T}=0,$$
where $\alpha_{0},$ $\alpha_{1},$ \ldots, $\alpha_{k-1},$ $\beta\in
\text{F}_{2}^{n}$, $\{\alpha_{0},$ $\alpha_{1}$, $\ldots,$
$\alpha_{k-1}\}$=$\bigcup_{i =
0}^{\lfloor\log_2{k}\rfloor}\text{W}_{i}$ and
$$
\beta=\begin{cases} 0, \text{if} ~v_f(k)=0,\\
\varepsilon_k, \text{ otherwise}. \end{cases}$$ However, by Lemma
4.2, matrices of this kind have full rank. So we have $\lambda=0$;
thus, $g=0$.

Denote the system of homogeneous linear equations obtained by the
condition $fg=0$ by
\begin{equation}
 \label{eq1}\text{M}_f\lambda^\text{T}=0,
\end{equation}
i.e., let $\text{M}_f$ be the coefficient matrix of the system.
Formally, coefficient matrix $M_f$ is defined as follows:
\begin{equation}
\label{def_Mf}
M_{f}=\left( \begin{array} {c}\varepsilon_{i_{1}} \\
 \varepsilon_{i_{2}}  \\
 \vdots  \\
 \varepsilon_{i_{m}}  \\
 \end{array}\right),
 \end{equation}
where $\varepsilon_{j}$ is a row vector of $M_{f}$ if and only if
$v_{f}(j)=1$. The row vectors of $M_{f}$ are ordered by $0 \le
i_{1}<i_{2}<\cdots<i_{m} \le 2k$.

Similarly, if $(f+1)g=0$, then the coefficient matrix
$\text{M}_{f+1}$ of the system of homogeneous linear equations
$$\text{M}_{f+1}\lambda^\text{T}=0$$ also has full rank. Therefore, $g=0.$
\end{proof}
\end{theo}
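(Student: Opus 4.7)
The plan is to recast the hypothesis $fg=0$ as a homogeneous linear system over $\text{F}_2^k$ in the coefficient vector of $g$, and then show the coefficient matrix has full rank, forcing $g=0$. Writing a candidate annihilator as $g=\sum_{i=0}^{k-1}\lambda_i\sigma_i$ and letting $\lambda=(\lambda_0,\ldots,\lambda_{k-1})\in\text{F}_2^k$, Lemma 2.1 gives $v_g(i)=\varepsilon_i\lambda^\text{T}$, so $fg=0$ is equivalent to the system $\varepsilon_i\lambda^\text{T}=0$ ranging over all $i$ with $v_f(i)=1$.

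Next, I would organize these equations according to the partition $\{0,1,\ldots,2k\}=\bigcup_{t=0}^{\lfloor\log_2 n\rfloor}A_t^k$ from Section 3. For $t\ge 1$, each $A_t^k$ is symmetric about $k$ (Lemma 3.2) and disjoint from $A_0^k=\{k\}$, so it splits cleanly into a lower half $\{i\in A_t^k:i<k\}$ and an upper half $\{j\in A_t^k:j>k\}$; by definition, applying $i\mapsto\varepsilon_i$ to these two halves produces exactly the sets $U_{t-1}$ and $V_{t-1}$. The Class 1 hypothesis $v_f(i)=v_f(j)+1$ for $i,j\in A_t^k$ with $i<k<j$ says precisely that $v_f$ is constant on each half of $A_t^k$ and takes opposite values on the two halves. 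Consequently, the support indices contributed by each $A_t^k$ to the linear system are either all of $\{i\in A_t^k:i<k\}$ or all of $\{j\in A_t^k:j>k\}$, so the corresponding block of rows of the coefficient matrix $M_f$ coincides in its entirety with either $U_{t-1}$ or $V_{t-1}$.

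Hence the rows of $M_f$ form exactly $\bigcup_{t=0}^{\lfloor\log_2 k\rfloor} W_t$ for some choice $W_t\in\{U_t,V_t\}$, possibly supplemented by the extra row $\varepsilon_k$ when $v_f(k)=1$. By Lemma 4.2, any such union is a basis of $\text{F}_2^k$, so $M_f$ has full rank $k$, the only solution is $\lambda=0$, and therefore $g=0$. The case $(f+1)g=0$ is handled identically: replacing $f$ by $f+1$ flips every entry of $v_f$, which preserves the Class 1 relations $v_f(i)=v_f(j)+1$, so the same rank argument applies to $M_{f+1}$.

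The substantive step is the second one: verifying that the Class 1 condition forces the support-row pattern on each $A_t^k$ to align cleanly with one of $U_{t-1},V_{t-1}$, so that Lemma 4.2 becomes applicable off the shelf. Once this alignment is established, the theorem follows as a one-line consequence of the basis property, with the $v_f(k)=1$ row only being an additional redundancy.
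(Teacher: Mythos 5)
Your proposal is correct and follows essentially the same route as the paper: translate $fg=0$ into the linear system $\varepsilon_i\lambda^{\mathrm T}=0$ over the support of $v_f$, observe that the Class 1 condition forces the rows coming from each $A_t^k$ to be exactly $U_{t-1}$ or exactly $V_{t-1}$ (plus possibly the harmless extra row $\varepsilon_k$), and invoke Lemma 4.2 to conclude full rank and hence $g=0$, with the $(f+1)g=0$ case handled by the same argument.
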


%\vspace{0.3cm}
%Because the matrix $M_{f}$ also plays an important role in several
%follow up theorems, we give it a formal definition.\vspace{0.3cm}
%\begin{defi} Let $n=2k$ and $f\in SB_{n}$. We define an important
%$matrix $M_{f}$ as follows,
%$$M_{f}=\left( \begin{array} {c}\varepsilon_{i_{1}} \\
% \varepsilon_{i_{2}}  \\
% \vdots  \\
% \varepsilon_{i_{m}}  \\
% \end{array}\right),$$
%where $\varepsilon_{i_{s}}$ is a row vector of $M_{f}$ if and only
%if $v_{f}(i_{s})=1$ for all $1\leq s\leq m$ and $0\leq i_{s}\leq n$.
%The row vectors of $M_{f}$ are ordered according to
%$i_{1}<i_{2}<\cdots<i_{m}$.
%\end{defi}

%\vspace{0.3cm} For convenience, we give the following property of
%$M_{f}$ that can be derived directly from the
%definition.\vspace{0.3cm}
%\begin{property}
%Let $n=2k$ and $f\in SB_{n}$. Then $\varepsilon_{\omega}$ is a row
%vector of $M_{f}$ if and only if $v_{f}(\omega)=1$.
%\end{property}

 \vspace{0.3cm} In the following sequel, we only consider the
rank of $M_{f}$, which means the order of the row vectors of $M_{f}$
is not important. From the definition of $M_f$, we know
$\varepsilon_{\omega}$ is a row vector of $M_{f}$ if and only if
$v_{f}(\omega)=1$.
%When considering the difference between $M_{f}$ and $M_{f'}$ of two
%different symmetric Boolean function $f,f'\in SB_{n}$, we focus on
%whether particular vectors are row vectors of $M_{f}$ and $M_{f'}$
%or not. So the Property 4.1 is very important. \vspace{0.3cm}

\vspace{0.3cm} \begin{theo} \label{thm_C1_suff}
 Let $n=2k$ and $f \in \text{SB}_n$. If
$v_f(i)=v_f(j)+1$, for any $i, j \in \text{A}_t^k$ with $0 \leq
i<k<j \leq 2k$ and any $1\leq t\leq \lfloor\log_2n\rfloor$, then
$\text{AI}(f)=k$.
\begin{proof}
Assume to the contrary that $\text{AI}(f)<k$. Then, there exists a
Boolean function $0\neq g\in \text{B}_n$ with degree less than $k$,
such that $fg=0$ or $(f+1)g=0$.
\par For the case $fg=0$, by
Lemma 2.4, there exists a symmetric Boolean function $0\neq h\in
\text{SB}_{n-2b}$ with $\deg(h)\leq \deg(g)-b<k-b$ for some integer
$0\leq b\leq k$, such that $fhP_b=0$. Let $f_{1}\in SB_{n-2b}$, be
defined as
$$
f_{1}(x_{2b+1}, \ldots, x_n)=f(0, 1, \ldots, 0, 1, x_{2b+1}, \ldots,
x_n).
$$
Then
\begin{equation}
\label{i=i+b} v_{f_{1}}(i)=v_f(i+b)
\end{equation}
 for any $0\leq i\leq n-2b$.
\begin{itemize}
\item[i)] On one hand, we claim that $f_{1}h=0$. If $f_{1}h \not= 0$, then there
exists an $i$ such that $i \in \text{WS}(f_{1}) \cap \text{WS}(h)$,
where $i\in\text{WS}(f_{1})$ implies $i + b \in \text{WS}(f)$ by
\eqref{i=i+b} and $i \in \text{WS}(h)$ implies $i+b \in
\text{WS}(hP_b)$ by the definition of $P_{b}$. Thus, $i + b \in
\text{WS}(f) \cap \text{WS}(hP_b)$, which is contradicted with
$fhP_b = 0$.
\item[ii)] On the other hand, we will show a contradiction by proving $f_{1}$
and $f_{1}+1$ do not have symmetric annihilators with degree less
than $k-b$. For any $i, j \in A_t^{k-b}$, $1\leq
t\leq\lfloor\log_2(n-2b)\rfloor$, $i < k-b < j$, we have $i+b, j+b
\in A_t^k$ and $i+b < k < j+b$ by Definition \ref{def_Atk}. By the
conditions in this theorem, we have $v_f(i+b) = v_f(j+b) + 1$, which
implies $v_{f_{1}}(i) = v_{f_{1}}(j) + 1$ by \eqref{i=i+b}. Then
$f_{1}$ is contained in Class 1. According to Theorem 4.1, $f_{1}$
or $f_{1}+1$ do not have symmetric annihilators with degree less
than $k$, which is contradicted with the existence of $h$.
\end{itemize}
Therefore, $f$ does not have nonzero annihilators with degree less
than $k$.

For the case $(f+1)g=0$, we can consider $f_{1}+1$ instead. By the
same argument above, we can prove that if $f+1$ does not have
nonzero annihilator with degree less than $k$. Therefore, we have
$\text{AI}(f)=k$.

%\par Consider the difference between $f+1$ and $f$. Since the value
%vector of $f+1$ is $v_{f+1}=(v_{f}(0)+1,v_{f}(1)+1,...,v_{f}(n)+1)$,
%we have $v_{f+1}(i)=v_{f}(i)+1=v_{f}(j)=v_{f+1}(j)+1$ for any
%$i,j\in A^{k}_{t}$ with $0\leq i<k<j\leq 2k$ and $1\leq t\leq
%\lfloor\log_2n\rfloor$. Therefore, the same argument above can be
%used to prove $f+1$ does not have nonzero annihilator with degree
%less than $k$. So we have $\text{AI}(f)=k$.
\end{proof}
\end{theo}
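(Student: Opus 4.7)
I will argue by contradiction, reducing any hypothetical annihilator of $f$ to a smaller symmetric annihilator and then invoking Theorem 4.1. Suppose $\text{AI}(f) < k$, so some nonzero $g \in \text{B}_n$ with $\deg(g) < k$ satisfies $fg = 0$ or $(f+1)g = 0$. I will treat the case $fg = 0$; the other is identical after noting that the hypothesis $v_f(i) + v_f(j) = 1$ for $i, j \in A_t^k$ on opposite sides of $k$ is preserved under $f \mapsto f + 1$. By Lemma 2.4, there exist an integer $0 \leq b \leq k$ and a nonzero symmetric $h \in \text{SB}_{n-2b}$ on the variables $x_{2b+1}, \ldots, x_n$ with $\deg(h) < k - b$ such that $f h P_b = 0$.

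The next step is to pass to the restriction
\[
f_1(x_{2b+1}, \ldots, x_n) := f(0, 1, 0, 1, \ldots, 0, 1, x_{2b+1}, \ldots, x_n) \in \text{SB}_{n-2b},
\]
for which $v_{f_1}(i) = v_f(i+b)$ on $0 \leq i \leq n - 2b$. Two claims then yield the contradiction. First, $f_1 h = 0$: any common support element $x$ of $f_1$ and $h$ of Hamming weight $i$ would lift to $(0,1,\ldots,0,1,x)$ of weight $i + b$, which lies in $\text{supp}(f)$ by the definition of $f_1$ and in $\text{supp}(hP_b)$ since $P_b$ evaluates to $1$ on the alternating pattern, contradicting $fhP_b = 0$. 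Second, $f_1$ satisfies the hypothesis of Theorem 4.1 with parameter $k' = k - b$: for any $i, j \in A_t^{k-b}$ with $i < k - b < j$, the shift $i \mapsto i + b$ sends these into $A_t^k$ with $i + b < k < j + b$, so $v_{f_1}(i) + v_{f_1}(j) = v_f(i+b) + v_f(j+b) = 1$. Applying Theorem 4.1 to $f_1$ then forbids any nonzero symmetric annihilator of $f_1$ or $f_1 + 1$ of degree below $k - b$, contradicting the existence of $h$.

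The main obstacle I anticipate is the shift invariance used in the second claim: verifying from Definition 3.2 that $i \in A_t^{k-b}$ implies $i + b \in A_t^k$ with the midpoint relation preserved. Unpacking the definition, $A_t^\kappa$ consists of integers at distance $(2r+1) 2^{t-1}$ from $\kappa$, so adding $b$ to both the center and to the elements preserves these distances and hence the sets. The only care needed is the in-range condition $0 \leq i + b \leq n$, which is automatic from $0 \leq i \leq n - 2b$, and the range $1 \leq t \leq \lfloor \log_2(n - 2b) \rfloor$ used for $f_1$ which is contained in the original range $1 \leq t \leq \lfloor \log_2 n \rfloor$. This makes the obstacle essentially a careful bookkeeping step rather than a conceptual hurdle, and the contradiction closes the proof.
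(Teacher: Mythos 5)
Your proposal is correct and follows essentially the same route as the paper's own proof: contradiction via Lemma 2.4, restriction to $f_1$ with $v_{f_1}(i)=v_f(i+b)$, the two claims $f_1h=0$ and that $f_1$ inherits the Class 1 condition for $k-b$, and then Theorem 4.1. The shift-invariance bookkeeping you flag as the main obstacle is handled in the paper exactly as you describe, by reading off Definition 3.2.
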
\vspace{0.3cm}

\section{Functions in Class 2 Have Maximum Algebraic Immunity}
\vspace{0.3cm} In this section, we will use the same notations as
the last section, such as $\varepsilon_k$, $\text{M}_f$, and so on.
We always assume $n=2k$ and $k=(k_m,$ $\ldots,$ $k_1,$ $k_0)_2$,
where $m=\lfloor\log_2k\rfloor$. We denote by $\text{supp}(k)=\{p |
k_{p}=1\}$, then $m\in \text{supp}(k)$.

We first present Lemmas 5.1, 5.2, 5.3, and 5.4. With these lemmas,
we study the annihilator of Boolean functions in Class 2. In Theorem
5.1, the symmetric annihilators of Boolean functions in Class 2 are
studied. In Theorem 5.2, all the annihilators of Boolean functions
are studied. \par The following Lemma plays an important role in the
proof of Lemma 5.2.\vspace{0.3cm}
\begin{lem} Given three constants $a, b, c\in \text{F}_2$, consider the following
system of inequalities on variable $t\in \text{F}_2$
\begin{equation}\label{eq0}
\begin{cases} a\leq b+t\\
t\leq c,
\end{cases}
\end{equation}
we have

(1) If $a=c$, or $(a, b, c)$=$(1,$ $1,$ $0)$, then equations
(\ref{eq0}) have one solution.

(2) If $a=0$ and $c=1$, then equations (\ref{eq0}) have two
solutions.

(3) If $(a, b, c)$=$(1,$ $0,$ $0)$, then equations (\ref{eq0}) do
not have any solution.
\begin{proof} It is easy to obtain the conclusions.
\end{proof}
\end{lem}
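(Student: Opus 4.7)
The plan is to dispose of this lemma by exhaustive case analysis, since $(a,b,c) \in \text{F}_2^3$ ranges over only eight possibilities and $t \in \text{F}_2$ takes only two values. For each triple $(a,b,c)$ I would substitute $t = 0$ and $t = 1$ into the two inequalities, with $b+t$ interpreted as $\text{F}_2$-addition (XOR) as elsewhere in the paper, and record which of the two candidate values satisfy both constraints simultaneously. This produces an $8 \times 2$ truth table whose row sums give the cardinality of the solution set in each case.

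Concretely, I would first handle the constraint $t \leq c$ in isolation: if $c = 0$ it forces $t = 0$, while if $c = 1$ it imposes no restriction. Then I would handle $a \leq b + t$: if $a = 0$ it is automatic, while if $a = 1$ it forces $b + t = 1$, i.e.\ $t = b + 1$. Combining these two reductions collapses the analysis to a short table. Rearranging the rows by outcome, the singletons fall into the five triples $(0,0,0), (0,1,0), (1,0,1), (1,1,1), (1,1,0)$ — which is exactly the union of $\{a=c\}$ and $\{(1,1,0)\}$ as asserted in part (1) — the two-solution cases are $(0,0,1)$ and $(0,1,1)$, matching $\{a=0, c=1\}$ for part (2), and the unique empty case is $(1,0,0)$, giving part (3).

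There is essentially no obstacle; the only point worth flagging in the write-up is the convention that $+$ denotes XOR rather than integer addition (otherwise $(1,1,1)$ would incorrectly yield two solutions instead of one, so the interpretation is forced by the statement itself). Since the claim is a bookkeeping fact that will be invoked repeatedly in Lemma~5.2 and later arguments, I would not compress the enumeration further; presenting the short truth table makes subsequent references unambiguous.
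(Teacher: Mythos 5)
Your exhaustive enumeration is correct and is exactly the verification the paper leaves implicit behind its one-line ``It is easy to obtain the conclusions''; all eight cases check out, and your observation that $+$ must be read as $\text{F}_2$-addition (otherwise $(1,1,1)$ would give two solutions) is a worthwhile clarification. No gaps.
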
\vspace{0.3cm}

\begin{lem} For any $0\leq p\leq m+1$, we have
\begin{equation}\label{eq2}\displaystyle{\sum_{\substack{j\in
\text{A}_p, j\preceq k
}}\varepsilon_j}=\displaystyle{\sum_{\substack{
 j\in \text{A}_p, j\preceq k }}\varepsilon_{n-j}}.
 \end{equation}
\begin{proof} When $p=0$, equation (\ref{eq2}) holds because both sides are
$\varepsilon_k$. Moreover, if $k_{p-1}=0$, i.e., $p-1\not\in$
$\text{supp}(k)$, then $\{j\mid$ $j\in \text{A}_p,$ $j\preceq
k\}$=$\emptyset$, equation (\ref{eq2}) also holds. Therefore, in
what follows, we always assume $p\neq 0$ and $k_{p-1}=1$.

For the left-hand side of equation (\ref{eq2}), we have
\begin{eqnarray*}
\sum_{\substack{j\in \text{A}_p, j\preceq k }}\varepsilon_j & = &
\sum_{\substack{j\in \text{A}_p, j\prec k}}\sum_{\substack{i\preceq
j}}e_i\\ & = & \sum_{\substack{i\prec k}}e_i\sum_{\substack{j\in
\text{A}_p, i\preceq j\prec k}}1.
\end{eqnarray*}
Let $i=(i_{m},$ $\ldots,$ $i_{1},$ $i_{0})_2$, $j=(j_{m},$ $\ldots,$
$j_{1},$ $j_{0})_2$. By Lemma 3.2, if $j\in \text{A}_p$ and
$j\preceq k$, then we have
$$(j_{p-1}, j_{p-2}, \ldots, j_1, j_0)=(\overline{k_{p-1}}, k_{p-2},
\ldots, k_{1}, k_{0}).$$ Therefore,
$$\sum_{\substack{i\preceq j\prec k \\ j\in \text{A}_p}}1=2^{\text{wt}(k_m,
\ldots, k_p)-\text{wt}(i_m, \ldots, i_p)}.$$ Thus
$$\sum_{\substack{i\preceq j\prec k \\ j\in \text{A}_p}}1=1$$ if and only
if $$i=(k_m, \ldots, k_p, \overline{k_{p-1}}, i_{p-2},\ldots, i_{1},
i_{0})_2$$ with $(i_{p-2}, \ldots, i_{1}, i_{0})\preceq (k_{p-2},
\ldots, k_{1}, k_{0})$. Hence, we have
$$\sum_{\substack{j\in \text{A}_p, j\preceq k
}}\varepsilon_j= \sum_{\substack{i=(k_m, \ldots, k_p,
\overline{k_{p-1}}, i_{p-2},\ldots, i_{1}, i_{0})_2\\(i_{p-2},
\ldots, i_{1}, i_{0})\preceq (k_{p-2}, \ldots, k_{1}, k_{0})}}e_i.$$

For the right-hand side of equation (\ref{eq2}), we have
\begin{eqnarray*}
\sum_{\substack{j\prec k, \\j\in \text{A}_p}}\varepsilon_{2k-j} & =
&\sum_{\substack{j\prec k, \\j\in
\text{A}_p}}\sum_{\substack{i\preceq 2k-j
\\ 0\leq i\leq k-1}}e_i\\
& = &\sum_{\substack{ 0\leq i\leq k-1}}e_i\sum_{\substack{i\preceq
2k-j\\j\prec k, j\in \text{A}_p}}1.
\end{eqnarray*}
If $j\in\text{A}_p$, then the last $p$ bits of $j$ and $2k-j$ are
both $(\overline{k_{p-1}},$ $k_{p-2},$ $\ldots,$ $k_{1},$ $k_{0})$.
Hence,
$$k-j=(k_m+j_m, \ldots, k_p+j_p, 1, 0, \ldots, 0)_2$$ and we can write $2k-j=$
$(k_{m}+s_{m},$ $k_{m-1}+j_{m}+s_{m-1},$ $\ldots,$
$k_{p+1}+j_{p+2}+s_{p+1},$ $k_{p}+j_{p+1}$, $\overline{j_p},$ $0,$
$k_{p-2},$ $\ldots,$ $k_{0})_2,$ where for any $q>p$, $s_q=1$ if and
only if $(k_{q-1},$ $j_q,$ $s_{q-1})$=$(1,$ $1,$ $1)$ or,
$k_{q-1}=0$ and $(j_q,$ $s_{q-1})$ $\neq$ $(0, 0)$. Note that the
additions are in $\text{F}_2$.

If
$$(i_{p-1}, i_{p-2},\ldots, i_{1}, i_{0})\not\preceq
(\overline{k_{p-1}}, k_{p-2}, \ldots, k_{1}, k_{0}),$$ then $\{j\mid
i\preceq 2k-j, j\prec k, j\in \text{A}_p\}=\emptyset$; thus,

$$\sum_{\substack{i\preceq
2k-j\\j\prec k, j\in \text{A}_p}}1=0.$$

Now assume $$(i_{p-1}, i_{p-2},\ldots, i_{1}, i_{0})\preceq
(\overline{k_{p-1}}, k_{p-2}, \ldots, k_{1}, k_{0}),$$ then
$i\preceq 2k-j,$ $j\prec k,$ $j\in \text{A}_p$ if and only if
$$(j_{p-1}, j_{p-2},
\ldots, j_{0})=(\overline{k_{p-1}}, k_{p-2}, \ldots, k_{0})$$ and
\begin{equation}\label{eq5}\begin{cases} (i_m, \ldots, i_p)\preceq (k_{m-1}+j_{m}+s_{m-1},
\ldots, k_p+j_{p+1}, \overline{j_p}), \\(j_{m}, \ldots,
j_{x})\preceq (k_{m}, \ldots, k_p).
\end{cases}
\end{equation} Equations (\ref{eq5}) are equivalent to the intersections of a series of
systems of inequalities as follows:
$$
\begin{cases} i_p\leq 1+j_p\\
j_p\leq k_p
\end{cases},
\begin{cases}
i_{p+1}\leq k_p+j_{p+1}\\
j_{p+1}\leq k_{p+1}
\end{cases},
\ldots,$$
$$\begin{cases}
i_m\leq k_{m-1}+j_{m}+s_{m-1}\\
j_{m}\leq k_m
\end{cases}.
$$
If $(i_m, \ldots, i_p)=(k_m, \ldots, k_p)$, by Lemma 5.1, each
system of inequalities has one and only one solution with respect to
$j_q$. Thus, we can conclude that there is one and only one solution
of $(j_{m}, \ldots, j_{0})_2$ satisfying $i\preceq 2k-j,$ $j\prec
k,$ $j\in \text{A}_p$. Therefore, the coefficient of $e_i$, where
$i=(k_m,$ $\ldots,$ $k_p,$ $\overline{k_{p-1}},$ $i_{p-2},$
$\ldots,$ $i_{1},$ $i_{0})_2$, equals 1.

Else if $(i_m,$ $\ldots,$ $i_p)$ $\neq$ $(k_m,$ $\ldots,$ $k_p)$,
then the set $\{q\mid k_q=1,$ $i_q=0\}$ is not empty. For any such
$q$, the corresponding system of inequalities with respect to $j_q$
has two solutions, according to Lemma 5.1. While for any other
systems, either one solution or no solution exists. Therefore, the
number of $j$'s satisfying $i\preceq 2k-j,$ $j\prec k,$ and $j\in
\text{A}_p$ is even. Therefore, the coefficient of $e_i$ equals 0.
As a result, we have
$$
\sum_{\substack{j\prec k, j\in \text{A}_p}}\varepsilon_{2k-j} =
\sum_{\substack{i=(k_m, \ldots, k_p, \overline{k_{p-1}},
i_{p-2},\ldots, i_{1}, i_{0})_2\\(i_{p-2}, \ldots, i_{1},
i_{0})\preceq (k_{p-2}, \ldots, k_{1}, k_{0})}}e_i.$$ So equation
(\ref{eq2}) holds. This finishes the proof of this theorem.
\end{proof}
\end{lem}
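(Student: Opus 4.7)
My plan is to dispose of the trivial cases first and then prove the identity by expanding both sides in the $\{e_i\}$ basis and comparing coefficients term by term. When $p=0$ both sides reduce to $\varepsilon_k$; when $p\geq 1$ and $k_{p-1}=0$, Lemma 3.2 forces the lower $p$ bits of any $j\in A_p$ to include the bit $\overline{k_{p-1}}=1$ at a position where $k$ has a $0$, so $\{j\in A_p:j\preceq k\}$ is empty and both sides vanish. Thus I may assume $p\geq 1$ and $k_{p-1}=1$, in which case every admissible $j$ actually satisfies $j\prec k$.

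For the left-hand side, substituting $\varepsilon_j=\sum_{0\leq i\leq k-1,\,i\preceq j}e_i$ and swapping the order of summation, the coefficient of $e_i$ equals $|\{j\in A_p:i\preceq j\prec k\}|\pmod 2$. By Lemma 3.2, such a $j$ has lower $p$ bits exactly $(0,k_{p-2},\ldots,k_0)$, which forces $(i_{p-1},\ldots,i_0)\preceq(0,k_{p-2},\ldots,k_0)$; the higher part $(j_m,\ldots,j_p)$ is free subject to $(i_m,\ldots,i_p)\preceq(j_m,\ldots,j_p)\preceq(k_m,\ldots,k_p)$. The resulting count is a power of two, and is odd exactly when $(i_m,\ldots,i_p)=(k_m,\ldots,k_p)$. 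So the left-hand side is $\sum e_i$ summed over all $i=(k_m,\ldots,k_p,0,i_{p-2},\ldots,i_0)_2$ with $(i_{p-2},\ldots,i_0)\preceq(k_{p-2},\ldots,k_0)$.

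The harder task is to show the right-hand side produces the same coefficient profile. I would write out the binary expansion of $2k-j$ for a generic $j\in A_p$ with $j\prec k$, carefully tracking the carries created by the fact that the lower bits of $j$ agree with those of $k$ except for a single flip at position $p-1$. This yields an explicit form $2k-j=(\ldots,k_p+j_{p+1},\overline{j_p},0,k_{p-2},\ldots,k_0)_2$, where the bits above position $p$ are expressions in $k_q$, $j_{q+1}$ and iterated carries $s_q$. The coefficient of $e_i$ on the right-hand side then counts tuples $(j_p,\ldots,j_m)$ with $(j_m,\ldots,j_p)\preceq(k_m,\ldots,k_p)$ satisfying $i\preceq 2k-j$, and the constraint decouples bit by bit into a family of two-variable systems fitting the template $a\leq b+t,\ t\leq c$ of Lemma 5.1, with one unknown $j_q$ per position $q\in\{p,\ldots,m\}$.

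The main obstacle will be converting this bitwise count into the clean dichotomy established on the left. For this I would apply Lemma 5.1 position by position. First, if $(i_{p-1},\ldots,i_0)\not\preceq(0,k_{p-2},\ldots,k_0)$, the contribution is immediately zero, matching the left side. Otherwise, if $(i_m,\ldots,i_p)=(k_m,\ldots,k_p)$, every per-position system has exactly one solution, the compound system has a unique $j$, and the coefficient is $1$. If instead $(i_m,\ldots,i_p)\neq(k_m,\ldots,k_p)$, there is at least one position $q$ with $k_q=1$ and $i_q=0$, where Lemma 5.1 gives two solutions; no position gives zero solutions, so the total count is a positive power of two and the coefficient is $0$. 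In every case the coefficient matches the left-hand side, completing the proof.
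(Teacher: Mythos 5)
Your proposal follows essentially the same route as the paper: dispose of the trivial cases $p=0$ and $k_{p-1}=0$, expand both sides in the $\{e_i\}$ basis, compute the left-hand coefficients as powers of two that are odd exactly when $(i_m,\ldots,i_p)=(k_m,\ldots,k_p)$, and handle the right-hand side via the carry expansion of $2k-j$ together with the per-position two-variable inequality systems resolved by Lemma 5.1. The only slip is your claim that no per-position system can have zero solutions — the paper explicitly allows the "one solution or no solution" possibility for the remaining positions — but this is harmless, since a total count of zero is still even and the coefficient comparison goes through unchanged.
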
\vspace{0.3cm}

\begin{lem} For any positive integer $k$, we have
$$\varepsilon_{k}=\sum_{j\prec k}\varepsilon_{j}.$$
\begin{proof}
Let $i=(i_{m},...,i_{0})_{2}$, $j=(j_{m},...,j_{0})_{2}$ and
$k=(k_{m},...,k_{0})_{2}$. Consider the equation
$$\sum_{j\preceq k}\varepsilon_{j}=\sum_{j\preceq k}\sum_{\substack{i\preceq j \\0\leq i\leq k-1}}e_{i}=\sum_{0\leq i\leq k-1}e_{i}\sum_{i\preceq j\preceq k}1.$$
Notice here, $i\not=k$. By the definition of $\preceq$, for each
$i\preceq k$, the number of $j$ such that $i\preceq j\preceq k$ is
$$\sum_{i\preceq j\preceq k}1=2^{\text{wt}(k_{m},...,k_{0})-\text{wt}(i_{m},...,i_{0})}.$$
Then, we have $\sum_{i\preceq j\preceq k}1$ is always even because
$i\not=k$, which means $\sum_{j\preceq k}\varepsilon_{j}=0.$ Thus,
we have $\varepsilon_{k}=\sum_{j\prec k}\varepsilon_{j}$.
\end{proof}
\end{lem}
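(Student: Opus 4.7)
The plan is to prove the equivalent identity $\sum_{j \preceq k} \varepsilon_j = 0$ in $\text{F}_2^k$; isolating the $j = k$ term then yields $\varepsilon_k = \sum_{j \prec k} \varepsilon_j$ since addition equals subtraction in $\text{F}_2$.

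First I would expand each $\varepsilon_j$ using its defining formula $\varepsilon_j = \sum_{i \preceq j,\ 0 \leq i \leq k-1} e_i$, then swap the order of summation to obtain
\[
\sum_{j \preceq k} \varepsilon_j \;=\; \sum_{0 \leq i \leq k-1} e_i \sum_{i \preceq j \preceq k} 1.
\]
The key observation is that the inner counting sum has a clean closed form: for any fixed $i \preceq k$, the number of $j$ with $i \preceq j \preceq k$ equals $2^{\text{wt}(k) - \text{wt}(i)}$, because each bit position where $k$ has a $1$ and $i$ has a $0$ gives a free binary choice for the corresponding bit of $j$.

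Next I would exploit the range restriction $0 \leq i \leq k-1$, which forces $i \neq k$ and hence $i \prec k$ strictly. This guarantees $\text{wt}(k) - \text{wt}(i) \geq 1$, so the inner count $2^{\text{wt}(k) - \text{wt}(i)}$ is even, and every coefficient of $e_i$ vanishes modulo $2$. Therefore $\sum_{j \preceq k} \varepsilon_j = 0$, which rearranges to the claim.

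I do not anticipate a serious obstacle here: the argument is essentially the same bit-counting pattern already used in the proof of Lemma 5.2 (and it is a standard consequence of the Lucas-type identity implicit in $\preceq$). The only subtlety to watch for is the asymmetric constraint $0 \leq i \leq k-1$ baked into the definition of $\varepsilon_j$, which is precisely what ensures $i \prec k$ in the inner sum and drives the parity cancellation.
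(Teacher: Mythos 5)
Your proposal is correct and follows essentially the same route as the paper's own proof: expand each $\varepsilon_j$ into its $e_i$ components, swap the order of summation, evaluate the inner count as $2^{\mathrm{wt}(k)-\mathrm{wt}(i)}$, and use the constraint $0\leq i\leq k-1$ (hence $i\prec k$ and $\mathrm{wt}(k)-\mathrm{wt}(i)\geq 1$) to conclude every coefficient is even, so the full sum vanishes over $\mathrm{F}_2$. No gaps; this matches the paper step for step.
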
\vspace{0.3cm}

\begin{lem}For any positive integer $k$, we have
$$\varepsilon_{k}=\sum_{j\prec k}\varepsilon_{n-j}.$$
\begin{proof}
According to Lemma 5.3, we have
$$\varepsilon_{k}=\sum_{j\prec k}\varepsilon_{j}=\sum_{0\leq p\leq m+1}\sum_{j\in A_{p}, j\prec k}\varepsilon_{j}$$
because $A_{p}$ for all $0\leq p\leq m+1$ is a partition of
$\{0,1,...,n\}$. By applying Lemma 5.2 to the right-hand side of the
equation above, we have
$$\varepsilon_{k}=\sum_{0\leq p\leq m+1}\sum_{j\in A_{p},j\prec k}\varepsilon_{n-j}=\sum_{j\prec k}\varepsilon_{n-j}.$$
\end{proof}
\end{lem}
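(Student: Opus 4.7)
The plan is to chain Lemma 5.3 and Lemma 5.2 together, using as the key structural fact that the sets $\{A_p^k\}_{0 \leq p \leq m+1}$ form a partition of $\{0, 1, \ldots, n\}$. This partition was noted immediately after Definition 3.2 and it is the precise tool that lets us transport the local swap $\varepsilon_j \leftrightarrow \varepsilon_{n-j}$ in Lemma 5.2 into a global identity.

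First, I would invoke Lemma 5.3 to rewrite the left-hand side as
\[
\varepsilon_k \;=\; \sum_{j \prec k} \varepsilon_j.
\]
Then, regrouping the index set $\{j : j \prec k\}$ according to which block $A_p$ each $j$ belongs to, I get
\[
\varepsilon_k \;=\; \sum_{p=0}^{m+1} \;\sum_{\substack{j \in A_p \\ j \prec k}} \varepsilon_j.
\]
At this point I would apply Lemma 5.2 term-by-term. A small compatibility check is needed: Lemma 5.2 is stated with $j \preceq k$ while here I have $j \prec k$. The only index that could differ is $j = k$, which lies in $A_0 = \{k\}$ and nowhere else; for $p \geq 1$ we have $k \notin A_p$, so $\preceq k$ and $\prec k$ produce identical sums, and for $p = 0$ the inner sum is empty on both sides when we use $\prec k$. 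Hence Lemma 5.2 gives $\sum_{j \in A_p, j \prec k} \varepsilon_j = \sum_{j \in A_p, j \prec k} \varepsilon_{n-j}$ for every $p$.

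Finally, I reassemble the partition on the right-hand side to obtain
\[
\varepsilon_k \;=\; \sum_{p=0}^{m+1} \;\sum_{\substack{j \in A_p \\ j \prec k}} \varepsilon_{n-j} \;=\; \sum_{j \prec k} \varepsilon_{n-j},
\]
which is exactly the claim. There is no genuine obstacle in this proof; the only thing to be careful about is the bookkeeping around $j=k$ when translating between $\prec$ and $\preceq$, and that is resolved by the observation that $k$ lies only in $A_0$. All the substantive combinatorial work has already been done inside Lemmas 5.2 and 5.3.
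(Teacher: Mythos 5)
Your proposal is correct and follows essentially the same route as the paper: apply Lemma 5.3, regroup the index set by the partition $\{A_p\}$, apply Lemma 5.2 blockwise, and reassemble. Your extra check reconciling the $\prec k$ index set here with the $\preceq k$ in the statement of Lemma 5.2 (noting $k$ lies only in $A_0$) is a small point of care the paper glosses over, but it does not change the argument.
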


With these lemmas above, we have \vspace{0.3cm}

\begin{theo}  Suppose $f\in \text{SB}_n$ such that there exists a function $f'$ in Class 1 and an integer $t\prec'
k$ such that
\begin{itemize}
\item[]\hspace{1.7cm} $v_f=v_{f'}+e_{t}+\delta e_{k}$, or
\item[]\hspace{1.7cm} $v_f=v_{f'}+e_{n-t}+\delta'e_{k}$,
\end{itemize}
where
\begin{equation*}\delta=v_{f'}(t)+v_{f'}(k),\\
\end{equation*}
and
\begin{equation*}\delta'=v_{f'}(n-t)+v_{f'}(k).\\
\end{equation*}
Then both $f$ and $f+1$ have no nonzero symmetric annihilators with
degree less than $k$.
\begin{proof} Any $n$-variable symmetric Boolean function $g$ with
degree less than $k$ can be written as $g=\sum_{0\leq
i<k}\lambda_{i}\sigma_{i}$. If $fg=0$ or $(f+1)g=0$, similar to the
proof of Theorem 4.1, we can set up a system of homogeneous linear
equations on variables $\lambda_{i}$'s. If we can show that the
coefficient matrix $M_{f}$ has full rank, then the equation system
has only zero solution which means $\lambda_{g}=0$; thus, $g=0$. By
the definition of coefficient matrix \eqref{def_Mf}, $M_{f}$ is only
slightly different from that of equation $M_{f'}$, because
$v_f(i)=v_{f'}(i)$ holds except for $i=t,$ $k,$ $n-t$. Notice that
we only care about the rank of $M_{f}$ and $M_{f'}$. Since $f'$ is
in Class 1, we have $v_{f'}(t)$=$v_{f'}(n-t)+1$ and $M_{f'}$,
$M_{f'+1}$ both have full rank by the proof of Theorem
\ref{thm_c1_noan}. Let $a\in \text{F}_2$, if

$$(v_{f'}(t), v_{f'}(k), v_{f'}(n-t))=(a, a, a+1),$$ then
$$(v_f(t), v_f(k), v_f(n-t))=(a+1, a, a+1) {\rm~or~} (a, a+1, a).$$

\vspace{0.3cm} \textbf{Case 1}: a=0.
$$(v_{f'}(t), v_{f'}(k), v_{f'}(n-t))=(0, 0, 1).$$
Let $p_{1}$ be the integer satisfying $t\in A_{p_{1}}$, then for any
$i>k$ and $i\in A_{p_{1}}$, we have $v_{f'}(i)=1$ for the reason
$f'$ belongs to Class 1. Consider the difference between the
coefficient matrix $\text{M}_f$ and $M_{f'}$. By the definition of
coefficient matrix \eqref{def_Mf}, since $v_{f'}(n-t)=1$ and
$v_{f'}(t)=v_{f'}(k)=0$, $\varepsilon_{n-t}$ is a row vector of
$M_{f'}$ but $\varepsilon_t$ and $\varepsilon_k$ are not.

When $(v_f(t), v_f(k), v_f(n-t))=(1, 0, 1)$, the row vectors of
$M_{f}$ and $M_{f'}$ are all the same except that an extra
$\varepsilon_t$ is a row vector of $M_{f}$ by the definition of
coefficient matrix \eqref{def_Mf}. Then, $M_{f}$ has full rank as
$M_{f'}$ has full rank.

When $(v_f(t), v_f(k), v_f(n-t))=(0, 1, 0)$, the only difference
with $M_{f'}$ is that $\varepsilon_{k}$ is a row vector of $M_{f}$
but $\varepsilon_{n-t}$ is not by the definition of coefficient
matrix \eqref{def_Mf}. If we can prove that $\varepsilon_{n-t}$ is a
linear combination of $\varepsilon_{k}$ and other row vectors in
$M_{f}$, then $M_{f}$ also has full rank. In the next paragraph, we
will complete this proof.

By Lemma 5.4 and the fact that $A_{p}$ for all $0\leq p\leq m+1$ is
a partition of $\{0,1,...,n\}$, we have
\begin{equation}
\label{varepsilon}\sum_{j\in A_{p_{1}},j\preceq
k}\varepsilon_{n-j}=\varepsilon_{k}+\sum_{\substack{p\not=p_{1}\\
0\leq p\leq m+1}}\sum_{\substack{j\in A_{p}\\j\prec
k}}\varepsilon_{n-j}.
\end{equation}
Since $t\in A_{p_{1}}$ and $t\prec' k$ which means $t\preceq k$,
then $\varepsilon_{n-t}$ appears on the left-hand side of
\eqref{varepsilon}. Remember that for any $i>k$ where $i\in
A_{p_{1}}$, we have $v_{f'}(i)=1$. Thus, we have $v_{f}(i)=1$ for
any such $i$ with $i\not= n-t$, which means all vectors in the
left-hand side of \eqref{varepsilon} are row vectors of $M_{f}$
except $\varepsilon_{n-t}$ by the definition of (12). While for the
right-hand side of \eqref{varepsilon}, notice that since $f'$ is
contained in Class 1. Then if one element $\omega$ in the set
$\{j|j\in A_{p},j\prec k\}$ (or $\{n-i|j\in A_{p},j\prec k\}$)
satisfies $v_{f'}(\omega)=1$, so do the other elements in the set.
By the definition of\eqref{def_Mf}, if one term in
$\sum_{\substack{j\in A_{p}, j\prec k}}\varepsilon_{j}$ ( or
$\sum_{\substack{j\in A_{p}, j\prec k}}\varepsilon_{n-j}$) is a row
vector of $M_{f'}$, so are the other terms. By the definition of
$M_{f}$, we can see $M_{f}$ also has this property as $M_{f'}$ for
any $p\not=p_{1}$. Then, we can turn all terms on the right-hand
side of \eqref{varepsilon} into the row vectors of $M_{f}$ by Lemma
5.2 as follows. If $\sum_{\substack{j\in A_{p}, j\prec
k}}\varepsilon_{n-j}$ with $p\not=p_{1}$ appears but not row vectors
of $M_{f}$, then we apply Lemma 5.2 to replace $\sum_{j\in A_{p},
j\prec k}\varepsilon_{n-j}$ by $\sum_{j\in A_{p}, j\prec
k}\varepsilon_{j}$. We can continue this procedure until that any
vector that appears in the right-hand side also appears as a row
vector of $\text{M}_f$.

After this, both sides of \eqref{varepsilon} become row vectors of
$M_{f'}$ except $\varepsilon_{n-t}$. Thus we can conclude that
$\varepsilon_{n-t}$ is a linear combination of row vectors of
$\text{M}_f$. Therefore, $\text{M}_f$ has full rank because $M_{f'}$
has full rank.

%On the one hand, by Lemma 5.3, we have
%$$\sum_{j\in A_{p_{1}},j\preceq k}\varepsilon_{j}=\varepsilon_{k}\oplus\sum_{\substack{j\in A_{p},j\preceq k\\p\not=p_{1}, 0\leq p\leq m+1}}\varepsilon_{j}.$$
%On the other hand, b

%Let $p$ be the integer satisfying that $t\in \text{A}_p,$ then
%$\varepsilon_{n-t}$ appears in the left-hand side, and all other
%vectors of the left-hand side are row vectors of $\text{M}_f$. While
%for the right-hand side of this equation, if some $\varepsilon_i$
%appears but not of a row vector of $\text{M}_f$, then we apply Lemma
%5.2 again to replace $\varepsilon_i$ by $\varepsilon_{n-i}.$

Consider the difference between $\text{M}_{f+1}$ and
$\text{M}_{f'+1}$. Since
$$(v_{f'+1}(t), v_{f'+1}(k), v_{f'+1}(n-t))=(1, 1, 0),$$
and recalling the definition of coefficient matrix \eqref{def_Mf},
$\varepsilon_t$ and $\varepsilon_k$ are row vectors of
$\text{M}_{f'+1}$. While for $\text{M}_{f+1}$, if $(v_{f+1}(t),
v_{f+1}(k), v_{f+1}(n-t))=(0, 1, 0)$, the only difference is that
$\varepsilon_t$ is not a row vector of $M_{f+1}$. By the same
argument above using Lemmas 5.2 and 5.3, $\varepsilon_t$ is a linear
combination of row vectors of $\text{M}_{f+1}$. This has no
influence to the rank of $\text{M}_{f+1}$. Therefore,
$\text{M}_{f+1}$ also has full rank because $M_{f'+1}$ has full
rank.

Otherwise, if $(v_{f+1}(t), v_{f+1}(k), v_{f+1}(n-t))=(1, 0, 1)$,
according to the definition of coefficient matrix \eqref{def_Mf},
the difference is that $\varepsilon_{n-t}$ is a row vector of
$M_{f+1}$ but $\varepsilon_{k}$ is not. This matrix is also full
rank because $\varepsilon_{k}$ is a linear combination of other row
vectors in $M_{f'}$ due to Lemma 4.2.

\vspace{0.3cm} \textbf{Case 2}: a=1.

We can reach the conclusion that the matrices $M_f$ and $M_{f+1}$
always have full rank in the same way of \textbf{Case 1}.

Else if $$(v_{f'}(t), v_{f'}(k), v_{f'}(n-t))=(a, a+1, a+1),$$ then
$$(v_f(t), v_f(k), v_f(n-t))=(a+1, a, a+1) {\rm~ or~} (a, a+1, a).$$

In a similar way to the discussion as above, no matter $a=0$ or 1
and no matter what $f'$ is, the coefficient matrices $\text{M}_f$
and $\text{M}_{f+1}$ always have full rank, which implies
$\lambda=0$; thus, $g=0$.

Therefore, both $f$ and $f+1$ have no symmetric annihilators with
degree less than $k$.
\end{proof}
\end{theo}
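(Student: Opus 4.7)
The plan is to reduce the claim to the linear-algebraic statement that the coefficient matrix $M_f$ (and likewise $M_{f+1}$) has full column rank $k$. Writing any candidate annihilator as $g=\sum_{i=0}^{k-1}\lambda_i\sigma_i$ and using Lemma 2.1, the equation $fg=0$ is equivalent to $M_f\lambda^{\rm T}=0$, where the rows of $M_f$ are exactly the vectors $\varepsilon_i$ for $i$ with $v_f(i)=1$. Since $f$ belongs to Class 2, there is a Class~1 function $f'$ such that $v_f$ agrees with $v_{f'}$ except possibly at the three indices $t$, $k$, and $n-t$. Consequently $M_f$ differs from $M_{f'}$ only by insertion/removal of at most three of the rows $\varepsilon_t,\varepsilon_k,\varepsilon_{n-t}$, and by Theorem~\ref{thm_c1_noan} both $M_{f'}$ and $M_{f'+1}$ already have full rank. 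Hence it suffices to prove that each such row-swap preserves the rank.

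Next I would perform a case analysis on $(v_{f'}(t),v_{f'}(k),v_{f'}(n-t))$. Because $f'$ is in Class~1 and $t,n-t$ belong to the same $A_{p_1}$, we know $v_{f'}(t)+v_{f'}(n-t)=1$, so up to replacing $f$ by $f+1$ we may assume $v_{f'}(t)=0$, $v_{f'}(n-t)=1$; this leaves two subcases depending on $v_{f'}(k)$, matching the two triples $(0,0,1)$ and $(0,1,1)$ of the statement. For each subcase and each of the two possible values of $v_f$ permitted by Definition~\ref{def_class}, I would write down precisely which of the rows $\varepsilon_t,\varepsilon_k,\varepsilon_{n-t}$ are gained or lost when moving from $M_{f'}$ to $M_f$. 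When a row is only \emph{added}, full rank is automatic. When a row is \emph{removed}, I must produce an explicit linear relation among the remaining rows that recovers the missing vector.

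The central step, and the main obstacle, is exhibiting that missing row as a combination of rows still present in $M_f$. The tool is the identity of Lemma~5.2, $\sum_{j\in A_p,\,j\preceq k}\varepsilon_j=\sum_{j\in A_p,\,j\preceq k}\varepsilon_{n-j}$, together with Lemma~5.4 expressing $\varepsilon_k$ via the $\varepsilon_{n-j}$ with $j\prec k$. Isolating the $A_{p_1}$-term containing $t$ (or $n-t$) yields an equation of the shape
\[
\varepsilon_{n-t}\;=\;\varepsilon_k+\sum_{p\neq p_1}\Bigl(\sum_{j\in A_p,\,j\prec k}\varepsilon_{n-j}\Bigr)+\sum_{\substack{j\in A_{p_1},\,j\preceq k\\j\neq t}}\varepsilon_{n-j},
\]
and the Class~1 property of $f'$ guarantees that within each $A_p$ the value of $v_{f'}$ (and hence $v_f$, away from $\{t,k,n-t\}$) is constant on $\{n-j:j\in A_p,\,j\prec k\}$ and constant on $\{j:j\in A_p,\,j\prec k\}$. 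So if any one term in a block is \emph{not} a row of $M_f$, Lemma~5.2 lets me swap the entire block from $\varepsilon_{n-j}$-form to $\varepsilon_{j}$-form (or back). Iterating this block-by-block rewrites the right-hand side using only rows actually present in $M_f$, producing the required dependence. The argument for $f+1$ is symmetric: the triple $(v_{f'+1}(t),v_{f'+1}(k),v_{f'+1}(n-t))$ is the bit-complement, and the same Lemma~5.2/5.4 rewriting recovers the removed row.

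The book-keeping of which row is gained or lost in each of the four $(v_f(t),v_f(k),v_f(n-t))$-combinations, and verifying that the block-swap via Lemma~5.2 is always legal under the Class~1 constancy on each $A_p$, is the delicate part; once that is checked in one representative case the remaining cases follow by the same template (and by the $f\leftrightarrow f+1$ involution that exchanges $v$ with its complement), completing the proof that $M_f$ and $M_{f+1}$ have full rank and hence $\text{AI}(f)\geq k$.
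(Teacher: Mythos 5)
Your proposal follows essentially the same route as the paper's own proof: the same reduction to showing $M_f$ and $M_{f+1}$ have full rank via their three-row difference from $M_{f'}$, the same case analysis on the triple $(v_{f'}(t),v_{f'}(k),v_{f'}(n-t))$, and the same key identity (your displayed equation is exactly the paper's equation \eqref{varepsilon} rearranged) combined with the block-swapping argument via Lemmas 5.2 and 5.4 under the Class~1 constancy on each $A_p$. The approach is correct and matches the paper.
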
\vspace{0.3cm}

\begin{theo}
\label{thm_C2_suff} Suppose $f\in \text{SB}_n$ such that there
exists a function $f'$ in Class 1 and an integer $t\prec' k$, such
that
\begin{itemize}
\item[]\hspace{1.7cm} $v_f=v_{f'}+e_{t}+\delta e_{k}$, or
\item[]\hspace{1.7cm} $v_f=v_{f'}+e_{n-t}+\delta' e_{k}$,
\end{itemize}
where
\begin{equation*}\delta=v_{f'}(t)+v_{f'}(k),\\
\end{equation*}
and
\begin{equation*}\delta'=v_{f'}(n-t)+v_{f'}(k).\\
\end{equation*}
Then $\text{AI}(f)=k$.

\begin{proof}
Assume to the contrary that $\text{AI}(f)<k$. Then, there exists a
Boolean function $0\neq$ $g\in$ $\text{B}_n$ with degree less than
$k$, such that $fg=0$ or $(f+1)g=0$.
\par For the case $fg=0$, by Lemma 2.4, there exists a
symmetric Boolean function $0\neq$ $h\in$ $\text{SB}_{n-2b}$ with
$\deg(h)$ $\leq$ $\deg(g)-b<k-b$ for some integer $0\leq b\leq k$,
such that $fhP_b=0$.

Let $f_{1}\in SB_{n-2b}$ be defined as
$$
f_{1}(x_{2b+1}, \ldots, x_n)=f(0, 1, \ldots, 0, 1, x_{2b+1}, \ldots,
x_n).
$$
Then
\begin{equation}
\label{class_all}v_{f_{1}}(i)=v_f(i+b)
\end{equation}
for any $0\leq i\leq n-2b$.

On one hand, by following the same argument of i) in the proof of
Theorem \ref{thm_C1_suff}, we have $f_{1}h=0$ because $fhP_b=0$.

On the other hand, we will show a contradiction by proving $f_{1}$
and $f_{1}+1$ do not have symmetric annihilators with degree less
than $k-b$. For any $i, j \in A_p^{k-b}$, $1\leq
p\leq\lfloor\log_2(n-2b)\rfloor$, $i < k-b < j$, we have $i+b, j+b
\in A_p^k$ and $i+b < k < j+b$ by Definition \ref{def_Atk}.

If $t<b$, then $t-b$, $n-(t-b)$ $\not\in A_{p}^{k-b}$ for any $1\leq
p\leq\lfloor\log_2(n-2b)\rfloor$. Then, by the conditions of this
theorem, we have $v_{f}(i+b)=v_{f}(j+b)+1$, which implies
$v_{f_{1}}(i)=v_{f_{1}}(j)+1$ by \eqref{class_all}. Then, $f_{1}$ is
contained in Class 1. According to Theorem 4.1, $f_{1}$ and
$f_{1}+1$ do not have symmetric annihilators with degree less than
$k$, which contradicts the existence of $h$.

If $t\geq b$, then $t-b\prec'k-b$ by the definition of $\prec'$ and
$t-b$, $n-(t-b)$ $\in A_{p}^{k-b}$ for some $1\leq
p\leq\lfloor\log_2(n-2b)\rfloor$. Then, by the conditions of this
theorem, we have $v_{f}(t)=v_{f}(n-t)=v_{f}(k)+1$ and
$v_{f}(i+b)=v_{f}(j+b)+1$ for $i+b\not=t$ and $j+b\not=n-t$, which
implies $v_{f_{1}}(t-b)=v_{f_{1}}(n-(t-b))=v_{f_{1}}(k)+1$ and
$v_{f_{1}}(i)=v_{f_{1}}(j)+1$ for $i\not=t-b$ and $j\not=n-(t-b)$.
Then, $f_{1}$ is contained in Class 2. According to Theorem 5.1,
$f_{1}$ and $f_{1}+1$ do not have symmetric annihilators with degree
less than $k$, which contradicts the existence of $h$.

%If there does not exist any $i$ such that $i+b\prec' k$, then $f'$
%satisfies the conditions listed in Theorem 4.1. As a result, $f'$
%and $f'+1$ do not have $(n-2b)$-variable symmetric annihilators with
%degree less than $k-b$. This is a contradiction.

%If $i+b\prec' k$, then we have $i\prec' k-b$. Then $f'$ satisfies
%the conditions listed in Theorem 5.1. As a result, $f'$ and $f'+1$
%do not have $(n-2b)$-variable symmetric annihilators with degree
%less than $k-b$. This is a contradiction.

Therefore, $f$ does not have nonzero annihilators with degree less
than $k$.

For the case $(f+1)g=0$, we can consider $f_{1}+1$ instead. By the
same argument above, we can prove that if $f+1$ does not have
nonzero annihilator with degree less than $k$. Therefore, we have
$\text{AI}(f)=k$.
\end{proof}
\end{theo}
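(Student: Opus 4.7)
The plan is to argue by contradiction, mirroring the template used in Theorem \ref{thm_C1_suff}, but invoking Theorem 5.1 (just established) in addition to Theorem 4.1 at the critical step. Suppose $\text{AI}(f)<k$; then there exists $0\neq g\in \text{B}_n$ with $\deg(g)<k$ such that $fg=0$ or $(f+1)g=0$. By Lemma 2.4 we may replace $g$ by $hP_b$ for some $0\leq b\leq k$ and some nonzero symmetric $h\in \text{SB}_{n-2b}$ with $\deg(h)\leq\deg(g)-b<k-b$, so that $fhP_b=0$. The case $(f+1)g=0$ will be handled at the end by the same argument applied to $f+1$.

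Next I introduce the restricted function $f_1\in \text{SB}_{n-2b}$ defined by fixing the first $2b$ inputs to a pattern of weight $b$, which gives $v_{f_1}(i)=v_f(i+b)$ for $0\leq i\leq n-2b$. The weight-support argument from part i) of Theorem \ref{thm_C1_suff} transfers verbatim to show $f_1h=0$: any common element in $\text{WS}(f_1)\cap\text{WS}(h)$ would translate by $b$ to an element of $\text{WS}(f)\cap\text{WS}(hP_b)$, contradicting $fhP_b=0$.

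The heart of the proof is then to show that $f_1$ belongs to Class 1 or Class 2 with respect to the half-length $k-b$, which is incompatible with the existence of $h$ of degree less than $k-b$. This splits along the position of $t$ relative to $b$. If $t<b$, the defect index $t$ lives among the ``removed'' coordinates, so for any $i,j\in A_p^{k-b}$ with $i<k-b<j$ we have $i+b,j+b\in A_p^k$ and $i+b<k<j+b$ with $\{i+b,j+b\}\cap\{t,n-t\}=\emptyset$; the Class 1 relation $v_f(i+b)=v_f(j+b)+1$ inherited from $f'$ pushes through $v_{f_1}(i)=v_f(i+b)$ to place $f_1$ in Class 1, so Theorem 4.1 delivers a contradiction. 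If $t\geq b$, then a short check from the definition of $\prec'$ gives $t-b\prec' k-b$, and the Class 2 defect shifts faithfully: $v_{f_1}(t-b)=v_{f_1}(n-(t-b))$ together with the Class 1 relations at all other indices (again inherited through the shift by $b$) put $f_1$ in Class 2 with parameter $t-b$, so Theorem 5.1 applies.

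The main obstacle is the second subcase: one must verify carefully that the partial-order relation $t-b\prec' k-b$ really holds, and that the three constrained values $(v_{f_1}(t-b),v_{f_1}(k-b),v_{f_1}(n-(t-b)))$ inherit exactly the Class 2 pattern from $(v_f(t),v_f(k),v_f(n-t))$ after the index shift $i\mapsto i-b$, so that $f_1$ matches Definition \ref{def_class} with some companion $f_1'\in$ Class 1. Once this bookkeeping is done, neither $f_1$ nor $f_1+1$ can have a nonzero symmetric annihilator of degree less than $k-b$, contradicting the existence of $h$; running the same argument with $f$ replaced by $f+1$ eliminates the remaining case and yields $\text{AI}(f)=k$.
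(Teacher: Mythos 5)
Your proposal follows the paper's own proof essentially step for step: the same contradiction setup via Lemma 2.4, the same restricted function $f_1$ with $v_{f_1}(i)=v_f(i+b)$, the same weight-support argument showing $f_1h=0$, and the same case split on $t<b$ (placing $f_1$ in Class 1, invoking Theorem 4.1) versus $t\geq b$ (placing $f_1$ in Class 2 with shifted parameter $t-b\prec' k-b$, invoking Theorem 5.1). The bookkeeping you flag as the main obstacle is exactly what the paper verifies at that point, so the proposal is correct and matches the paper's route.
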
\vspace{0.3cm}

\section{Main Result}
\vspace{0.3cm} Finally, we obtain the following main
result.\vspace{0.3cm}
\begin{theo} Let $n=2k$, $k=(k_m,$ $\ldots,$ $k_1,$ $k_0)_2$,
where $m=\lfloor\log_2k\rfloor$. Let $\text{supp}(k)=\{p\mid
k_p=1\}$. Given $f\in \text{SB}_n$, then $\text{AI}(f)=k$ if and
only if $v_f$ satisfies one of the following three cases:

1) There exist $a_0,$ $a_1,$ $a_2,$ $\ldots,$ $a_m,$ $a_{m+1}\in
\text{F}_2$, such that for any $1\leq t\leq m+1$ and $i, j \in A_t,
0 \le i < k < j \le 2k, v_f(i) = v_f(j) + 1$ =$a_t$ holds, and
$v_f(k)=a_0$.

2) There exist $a_1,$ $a_2,$ $\ldots,$ $a_m,$ $a_{m+1}\in
\text{F}_2$, such that for any $1\leq t\leq m$ and $i, j \in A_t, 0
\le i < k < j \le 2k, v_f(i) = v_f(j) + 1$=$a_t$ holds, and
$$(v_f(k-2^m), v_f(k),
v_f(k+2^m)))=(a_{m+1}, a_{m+1}+1, a_{m+1});$$

3) There exists an integer $p_0\in\text{supp}(k)$, $p_0\neq m$, and
$a_1,$ $a_2,$ $\ldots,$ $a_m,$ $a_{m+1},$ $b_{p_0}\in\text{F}_2$
such that for any $1\leq t\leq m+1$, $i, j \in A_t, 0 \le i < k < j
\le 2k$, $i\not=i_{0}$ and $j\not=n-i_{0}$, $ v_f(i) = v_f(j) +
1$=$a_t$ holds, and
$$(v_f(i_0), v_f(k), v_f(n-i_0))=(b_{p_0},
\overline{b_{p_0}}, b_{p_0}), {\rm~or~}$$
$$(v_f(i_0), v_f(k), v_f(n-i_0))=(\overline{b_{p_0}}, b_{p_0}, \overline{b_{p_0}}),$$
where $i_0$=$(k_{p_0-1},$ $\ldots,$ $k_1,$ $k_0)\prec'k$.

The total number is $(2\wt(n)$+$1)2^{\lfloor\log_2 n \rfloor}$. And
the values of their Hamming weight are
$\{2^{n-1}$$\pm$$\frac{1}{2}\binom{n}{n/2},$
$2^{n-1}$+$\frac{1}{2}\binom{n}{n/2}$-$\binom{n}{i},$
$2^{n-1}$-$\frac{1}{2}\binom{n}{n/2}$+$\binom{n}{i},$ ${\rm ~for~
any~}i\prec'k\}$.
\begin{proof}
Notice that $v_f$ satisfies item 1 if and only if $f$ belongs to
Class 1, $v_f$ satisfies items 2 or 3 if and only if $f$ belongs to
Class 2. Then, the necessity is proved by Theorem 3.5, while the
sufficiency is proved by Theorems 4.2 and 5.2.

The number of functions satisfying item 1 is $2^{m+2}$, the number
of functions satisfying item 2 is $2^{m+1}$, and the number of
functions satisfying item 3 is $(\wt(n)-1)2^{m+2}$, because
different choices of $a_i$'s, $b_{p_0}$ and $i_0$ will generate
different functions. Therefore, the total number is
$(2\wt(n)+1)2^{m+1}$. The corresponding values of Hamming weight are
also easy to calculate.
\end{proof}
\end{theo}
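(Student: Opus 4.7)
The substantive content of this theorem has already been established; Theorem 3.5 shows that any $f\in\text{SB}_n$ with $\text{AI}(f)=k$ belongs to Class 1 or Class 2 of Definition 3.2, and Theorems 4.2 and 5.2 supply the converse. The plan is to glue these results by showing that cases (1), (2), (3) parameterize Classes 1 and 2 in a canonical normal form, and then to perform the counting and weight computations.

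First I would identify case (1) with Class 1 and cases (2)--(3) with Class 2. For a Class 1 function, the definition forces $v_f$ to take a constant value $a_t$ on $A_t^k\cap\{0,\ldots,k-1\}$ and the complementary value $\overline{a_t}$ on $A_t^k\cap\{k+1,\ldots,n\}$, while $v_f(k)=a_0$ is free; this is exactly case (1). For a Class 2 function, the exceptional $t\prec'k$ lies in some $A_{p_0}^k$ with $p_0\in\text{supp}(k)$. When $p_0=m$, the only candidate is $t=k-2^m$, and writing $v_f=v_{f'}+e_t+\delta e_k$ and invoking Theorem 3.4 forces the triple $(v_f(k-2^m),v_f(k),v_f(k+2^m))$ into the form $(a_{m+1},\overline{a_{m+1}},a_{m+1})$ of case (2). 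When $p_0\neq m$, the exceptional index is $i_0=(k_{p_0-1},\ldots,k_0)_2$, and the same reasoning produces the triple pattern of case (3); conversely, every configuration in (2) or (3) is visibly of the form $v_{f'}+e_t+\delta e_k$ or $v_{f'}+e_{n-t}+\delta'e_k$ for an explicit $f'$ in Class 1. The biconditional $\text{AI}(f)=k\Longleftrightarrow v_f$ satisfies one of (1), (2), (3) then follows directly.

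Next I would count. Case (1) has $m+2$ free bits $a_0,\ldots,a_{m+1}$, giving $2^{m+2}$ functions. Case (2) has $m+1$ free bits $a_1,\ldots,a_{m+1}$ (since $v_f(k)=a_{m+1}+1$ is forced), giving $2^{m+1}$. For case (3), the index $p_0$ ranges over $\text{supp}(k)\setminus\{m\}$ of size $\mathrm{wt}(k)-1=\mathrm{wt}(n)-1$; the bits $a_1,\ldots,a_{m+1}$ contribute $2^{m+1}$; and the remaining pair $(b_{p_0},\text{choice of form})$ has four settings but collapses to the two distinct triples $\{(0,1,0),(1,0,1)\}$, giving $(\mathrm{wt}(n)-1)\,2^{m+2}$. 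The three patterns are pairwise disjoint because in (1) one has $v_f(i)+v_f(n-i)=1$ for every $i\prec'k$, whereas in (2) and (3) this fails at exactly one such $i$ whose location (in $A_{m+1}^k$ or in $A_{p_0}^k$ with $p_0\neq m$) further separates (2) from (3). Summing and using $m+1=\lfloor\log_2 n\rfloor$,
\[
2^{m+2}+2^{m+1}+(\mathrm{wt}(n)-1)\,2^{m+2}=(2\mathrm{wt}(n)+1)\,2^{\lfloor\log_2 n\rfloor}.
\]

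Finally, for the Hamming weights: in Class 1, pairing each $i<k$ in $A_t^k$ with its mirror $n-i>k$ and using $\binom{n}{i}=\binom{n}{n-i}$ together with $v_f(i)+v_f(n-i)=1$ makes the $a_t$-dependence cancel, yielding $\mathrm{wt}(f)=\sum_{i<k}\binom{n}{i}+a_0\binom{n}{k}=2^{n-1}\pm\tfrac12\binom{n}{n/2}$. For Class 2, a four-subcase check on $(v_{f'}(t),v_{f'}(k))$ for $v_f=v_{f'}+e_t+\delta e_k$ gives an adjustment of either $\pm\binom{n}{t}$ or $\mp\binom{n}{t}\pm\binom{n}{k}$, producing the weights $2^{n-1}+\tfrac12\binom{n}{n/2}-\binom{n}{t}$ and $2^{n-1}-\tfrac12\binom{n}{n/2}+\binom{n}{t}$; the $e_{n-t}$ form yields the same values via $\binom{n}{t}=\binom{n}{n-t}$. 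The main obstacle is purely bookkeeping: one must verify explicitly that the $(b_{p_0},\text{form})$ freedom in case (3) collapses to exactly two distinct functions per $(p_0,i_0,a_\ast)$ data, otherwise the count would be off by a factor of two.
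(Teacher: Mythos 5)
Your proposal is correct and follows essentially the same route as the paper: identify case (1) with Class 1 and cases (2)--(3) with Class 2, cite Theorem 3.5 for necessity and Theorems 4.2 and 5.2 for sufficiency, then count parameters and compute weights by the mirror pairing $i \leftrightarrow n-i$. The paper's own proof is just a terser version of this, leaving the class identification, the disjointness of the three cases, and the weight calculation as "easy to see," all of which you have filled in correctly (including the collapse of the four $(b_{p_0},\text{form})$ settings to two triples).
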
\vspace{0.3cm}

For example, when $n=14$, we have $k=7$, $m=\lfloor\log_27\rfloor=2$
and $\text{supp}(7)$=\{0, 1, 2\}.

There are 16 functions satisfying the conditions of item 1). Among
them, the simplified value vectors of those satisfying $v_f(7)=1$
are as follows:
\begin{table}[H]
\centering
\begin{tabular}[htbp]{c|c}
\hline
$f$& \text{SVV}:$v_{f}(0)$...$v_{f}(14)$\\
\hline
1&000000011111111\\
\hline
2&000100011110111\\
\hline
3&010001011011101\\
\hline
4&101010110101010\\
\hline
5&010101011010101\\
\hline
6&101110110100010\\
\hline
7&111011110001000\\
\hline
8&111111110000000\\
\hline
\end{tabular}
\end{table}
%\begin{itemize}
%\item[]\hspace{0.8cm}(0, 0, 0, 0, 0, 0, 0, 1, 1, 1, 1, 1, 1, 1, 1),
%\item[]\hspace{0.8cm}(0, 0, 0, 1, 0, 0, 0, 1, 1, 1, 1, 0, 1, 1, 1),
%\item[]\hspace{0.8cm}(0, 1, 0, 0, 0, 1, 0, 1, 1, 0, 1, 1, 1, 0, 1),
%\item[]\hspace{0.8cm}(1, 0, 1, 0, 1, 0, 1, 1, 0, 1, 0, 1, 0, 1, 0),
%\item[]\hspace{0.8cm}(0, 1, 0, 1, 0, 1, 0, 1, 1, 0, 1, 0, 1, 0, 1),
%\item[]\hspace{0.8cm}(1, 0, 1, 1, 1, 0, 1, 1, 0, 1, 0, 0, 0, 1, 0),
%\item[]\hspace{0.8cm}(1, 1, 1, 0, 1, 1, 1, 1, 0, 0, 0, 1, 0, 0, 0),
%\item[]\hspace{0.8cm}(1, 1, 1, 1, 1, 1, 1, 1, 0, 0, 0, 0, 0, 0, 0).
%\end{itemize}
All of them have the same Hamming weight
$2^{13}$+$\frac{1}{2}\binom{14}{7}$=$9908,$ whereas all of their
complements have the same Hamming weight
$2^{13}$-$\frac{1}{2}\binom{14}{7}$=$6476.$

There are 8 functions satisfying the conditions of item 2. Among
them, the simplified value vectors of those satisfying $v_f(7)=1$
are as follows:
\begin{table}[H]
\centering
\begin{tabular}[htbp]{c|c}
\hline
$f$& \text{SVV}:$v_{f}(0)$...$v_{f}(14)$\\
\hline
1&000000011110111\\
\hline
2&010001011010101\\
\hline
3&101010110100010\\
\hline
4&111011110000000\\
\hline
\end{tabular}
\end{table}
%\begin{itemize}
%\item[]\hspace{0.8cm}(0, 0, 0, 0, 0, 0, 0, 1, 1, 1, 1, 0, 1, 1, 1),
%\item[]\hspace{0.8cm}(0, 1, 0, 0, 0, 1, 0, 1, 1, 0, 1, 0, 1, 0, 1),
%\item[]\hspace{0.8cm}(1, 0, 1, 0, 1, 0, 1, 1, 0, 1, 0, 0, 0, 1, 0),
%\item[]\hspace{0.8cm}(1, 1, 1, 0, 1, 1, 1, 1, 0, 0, 0, 0, 0, 0, 0).
%\end{itemize}
All of them have the same Hamming weight
$2^{13}$+$\frac{1}{2}\binom{14}{7}$-$\binom{14}{3}$=$9544,$ whereas
all of their complements have the same Hamming weight
$2^{13}$-$\frac{1}{2}\binom{14}{7}$+$\binom{14}{3}$=$6840.$

There are 32 functions satisfying the conditions of item 3 because
$\wt(14)=3$ and $i_0$ can be 0 or 1. When $i_0=0$, among such
functions, the simplified value vectors of those satisfying
$v_f(7)=1$ are as follows:
\begin{table}[H]
\centering
\begin{tabular}[htbp]{c|c}
\hline
$f$& \text{SVV}:$v_{f}(0)$...$v_{f}(14)$\\
\hline
1&000000011111110\\
\hline
2&010001011011100\\
\hline
3&001010110101010\\
\hline
4&011011110001000\\
\hline
5&000100011110110\\
\hline
6&010101011010100\\
\hline
7&001110110100010\\
\hline
8&011111110000000\\
\hline
\end{tabular}
\end{table}
%\begin{itemize}
%\item[]\hspace{0.8cm}(0, 0, 0, 0, 0, 0, 0, 1, 1, 1, 1, 1, 1, 1, 0),
%\item[]\hspace{0.8cm}(0, 1, 0, 0, 0, 1, 0, 1, 1, 0, 1, 1, 1, 0, 0),
%\item[]\hspace{0.8cm}(0, 0, 1, 0, 1, 0, 1, 1, 0, 1, 0, 1, 0, 1, 0),
%\item[]\hspace{0.8cm}(0, 1, 1, 0, 1, 1, 1, 1, 0, 0, 0, 1, 0, 0, 0),
%\item[]\hspace{0.8cm}(0, 0, 0, 1, 0, 0, 0, 1, 1, 1, 1, 0, 1, 1, 0),
%\item[]\hspace{0.8cm}(0, 1, 0, 1, 0, 1, 0, 1, 1, 0, 1, 0, 1, 0, 0),
%\item[]\hspace{0.8cm}(0, 0, 1, 1, 1, 0, 1, 1, 0, 1, 0, 0, 0, 1, 0),
%\item[]\hspace{0.8cm}(0, 1, 1, 1, 1, 1, 1, 1, 0, 0, 0, 0, 0, 0, 0).
%\end{itemize}
All of them have the same Hamming weight
$2^{13}$+$\frac{1}{2}\binom{14}{7}$-$\binom{14}{0}$=$9907,$ whereas
all of their complements have the same Hamming weight
$2^{13}$-$\frac{1}{2}\binom{14}{7}$+$\binom{14}{0}$=$6477.$

When $i_0=1$, among such functions, the simplified value vectors of
those satisfying $v_f(7)=1$ are as follows:
\begin{table}[H]
\centering
\begin{tabular}[htbp]{c|c}
\hline
$f$& \text{SVV}:$v_{f}(0)$...$v_{f}(14)$\\
\hline
1&000000011111101\\
\hline
2&000001011011101\\
\hline
3&101010110101000\\
\hline
4&101011110001000\\
\hline
5&000100011110101\\
\hline
6&000101011010101\\
\hline
7&101110110100000\\
\hline
8&101111110000000\\
\hline
\end{tabular}
\end{table}
%\begin{itemize}
%\item[]\hspace{0.8cm}(0, 0, 0, 0, 0, 0, 0, 1, 1, 1, 1, 1, 1, 0, 1),
%\item[]\hspace{0.8cm}(0, 0, 0, 0, 0, 1, 0, 1, 1, 0, 1, 1, 1, 0, 1),
%\item[]\hspace{0.8cm}(1, 0, 1, 0, 1, 0, 1, 1, 0, 1, 0, 1, 0, 0, 0),
%\item[]\hspace{0.8cm}(1, 0, 1, 0, 1, 1, 1, 1, 0, 0, 0, 1, 0, 0, 0),
%\item[]\hspace{0.8cm}(0, 0, 0, 1, 0, 0, 0, 1, 1, 1, 1, 0, 1, 0, 1),
%\item[]\hspace{0.8cm}(0, 0, 0, 1, 0, 1, 0, 1, 1, 0, 1, 0, 1, 0, 1),
%\item[]\hspace{0.8cm}(1, 0, 1, 1, 1, 0, 1, 1, 0, 1, 0, 0, 0, 0, 0),
%\item[]\hspace{0.8cm}(1, 0, 1, 1, 1, 1, 1, 1, 0, 0, 0, 0, 0, 0, 0).
%\end{itemize}
All of them have the same Hamming weight
$2^{13}$+$\frac{1}{2}\binom{14}{7}$-$\binom{14}{1}$=$9894,$ whereas
all their complements have the same Hamming weight
$2^{13}$-$\frac{1}{2}\binom{14}{7}$+$\binom{14}{1}$=$6490.$
\section{Conclusion}
\vspace{0.3cm} In this paper, we give a necessary and sufficient
condition for an even-variable symmetric Boolean function to reach
maximum algebraic immunity for the first time.

We first study the weight supports of low-degree symmetric Boolean
functions and use some linear algebras to obtain some necessary
conditions for an even-variable symmetric Boolean function to reach
maximum algebraic immunity, then we divide the functions satisfying
these conditions into two classes. Finally, we proved that functions
of either class indeed have maximum algebraic immunity. Thus, the
problem of finding all even-variable symmetric Boolean functions
with maximum algebraic immunity is solved. \vspace{0.3cm}

% if have a single appendix:
%\appendix[Proof of the Zonklar Equations]
% or
%\appendix  % for no appendix heading
% do not use \section anymore after \appendix, only \section*
% is possibly needed

% use appendices with more than one appendix
% then use \section to start each appendix
% you must declare a \section before using any
% \subsection or using \label (\appendices by itself
% starts a section numbered zero.)
%

%\appendices
%\section{Proof of the First Zonklar Equation}
%Appendix one text goes here.

% you can choose not to have a title for an appendix
% if you want by leaving the argument blank
%\section{}
%Appendix two text goes here.

% use section* for acknowledgement
\section*{Acknowledgment}
\par The authors are grateful to
the anonymous reviewers for their valuable comments and kind
suggestions that have improved the technical quality and editorial
presentation of this paper.

% Can use something like this to put references on a page
% by themselves when using endfloat and the captionsoff option.
\ifCLASSOPTIONcaptionsoff
  \newpage
\fi

% trigger a \newpage just before the given reference
% number - used to balance the columns on the last page
% adjust value as needed - may need to be readjusted if
% the document is modified later
%\IEEEtriggeratref{8}
% The "triggered" command can be changed if desired:
%\IEEEtriggercmd{\enlargethispage{-5in}}

% references section

% can use a bibliography generated by BibTeX as a .bbl file
% BibTeX documentation can be easily obtained at:
% http://www.ctan.org/tex-archive/biblio/bibtex/contrib/doc/
% The IEEEtran BibTeX style support page is at:
% http://www.michaelshell.org/tex/ieeetran/bibtex/
%\bibliographystyle{IEEEtran}
% argument is your BibTeX string definitions and bibliography database(s)
%\bibliography{IEEEabrv,../bib/paper}
%
% <OR> manually copy in the resultant .bbl file
% set second argument of \begin to the number of references
% (used to reserve space for the reference number labels box)

% biography section
%
% If you have an EPS/PDF photo (graphicx package needed) extra braces are
% needed around the contents of the optional argument to biography to prevent
% the LaTeX parser from getting confused when it sees the complicated
% \includegraphics command within an optional argument. (You could create
% your own custom macro containing the \includegraphics command to make things
% simpler here.)
%\begin{biography}[{\includegraphics[width=1in,height=1.25in,clip,keepaspectratio]{mshell}}]{Michael Shell}
% or if you just want to reserve a space for a photo:

%\begin{IEEEbiography}{Michael Shell}
%Biography text here.
%\end{IEEEbiography}
\begin{IEEEbiography}{Hui Wang}
was born in Anhui, China in 1988. He is currently working towards
the Ph.D. degree at Fudan University, China.
\end{IEEEbiography}
% if you will not have a photo at all:
%\begin{IEEEbiographynophoto}{John Doe}
%Biography text here.
%\end{IEEEbiographynophoto}

% insert where needed to balance the two columns on the last page with
% biographies
%\newpage

%\begin{IEEEbiographynophoto}{Jane Doe}
%Biography text here.
%\end{IEEEbiographynophoto}

% You can push biographies down or up by placing
% a \vfill before or after them. The appropriate
% use of \vfill depends on what kind of text is
% on the last page and whether or not the columns
% are being equalized.

%\vfill

% Can be used to pull up biographies so that the bottom of the last one
% is flush with the other column.
%\enlargethispage{-5in}

% that's all folks
\end{document}